\newtheorem{proposition}{Proposition}
\newtheorem{corollary}[proposition]{Corollary}
\newtheorem{lemma}[proposition]{Lemma}
\theoremstyle{remark}
\newtheorem{remark}[proposition]{Remark}
\providecommand{\Description}[1]{}
\begin{document}

\title{Homotopy-Aware Multi-Agent Path Planning on Plane}
\author {
    Kazumi Kasaura \\
    OMRON SINIC X Corporation \\
    \texttt{kazumi.kasaura@sinicx.com}
}

\maketitle

\begin{abstract}
We propose an efficient framework using Dynnikov coordinates for homotopy-aware multi-agent path planning in planar domains that may contain obstacles.
We developed a method for generating multiple homotopically distinct solutions for the multi-agent path planning problem in planar domains by combining our framework with revised prioritized planning and proved its completeness under specific assumptions.
Experimentally, we demonstrated that our method is significantly faster than a method without Dynnikov coordinates.
We also confirmed experimentally that homotopy-aware planning contributes to avoiding locally optimal solutions when searching for low-cost trajectories for a swarm of agents in a continuous environment.
\end{abstract}

\section{Introduction}\label{sec:introduction}
Path planning with topological constraints or objectives is an important task, because considering topological characteristics of paths is crucial for various aspects of robot planning and navigation.
One notable example is when seeking globally optimal trajectories.
To find a globally optimal trajectory under a complex objective, (such as considering agent kinematics), a conventional strategy involves planning an initial path on a simple graph (e.g., a grid) and optimizing it locally under the objective function~\citep{rosmann2017kinodynamic}.
However, this can lead to local optima, and it is difficult to know beforehand which path will converge to a globally optimal solution.
To avoid this, multiple paths should be optimized and compared. On the other hand, it is redundant to optimize paths that converge to the same solution.
Here, the key is topological characteristics of the paths since optimization does not alter them.
It has been proposed to generate several topologically distinct paths as initial solutions~\citep{kuderer2014online,rosmann2017integrated}.
Figure~\ref{fig:homotopy_example} shows an example of initial paths with the same start and goal.
Because $p_1$ goes through the area above the obstacle and $p_2$ and $p_3$ go through the area below, $p_1$ and $p_2$ (or $p_3$) have different topological features, while $p_2$ and $p_3$ have the same ones.
These facts will be more formally discussed later in \S~\ref{subsec:homotopy}.
Since $p_1$ and $p_2$ (or $p_3$) must converge to different trajectories after optimization and, the better option depends on the objective function.
Therefore, we want to generate both.
Conversely, since $p_2$ and $p_3$ are topologically identical and likely converge to the same trajectory, we do not generate both.
To do this, planning should be performed with the calculation of the topological characteristics of paths, which we call \textit{topology-aware} path planning.
Topology-awareness in planning also finds relevance in tasks such as cable manipulation~\citep{bhattacharya2015topological,kim2014path}, human-robot interaction~\citep{govindarajan2016human,yi2016homotopy}, and high-level planning with dynamic obstacles~\citep{cao2019dynamic}.
\begin{figure}[t]
    \centering
    \includegraphics[width=0.4\linewidth]{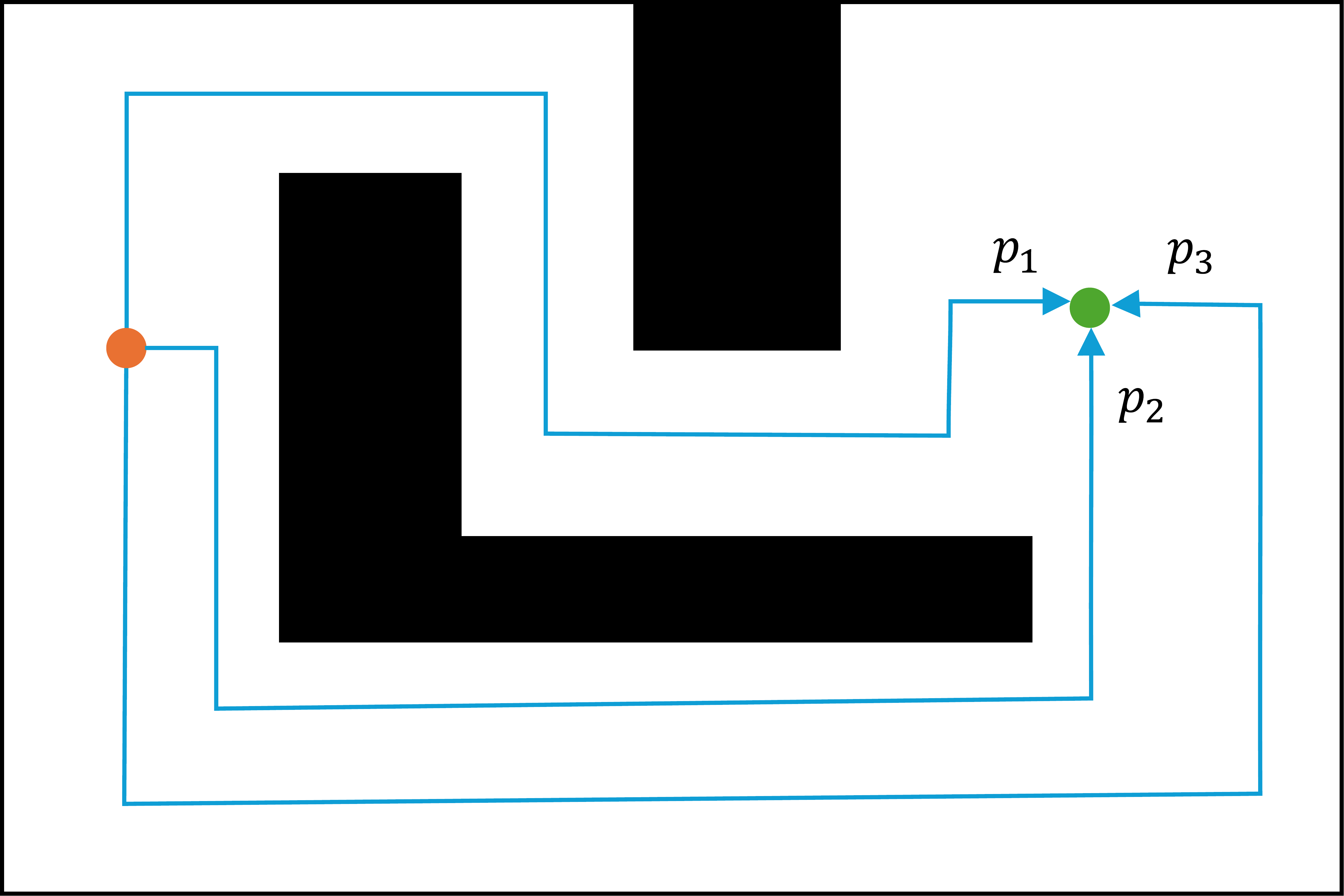}
    \caption{Example of initial coarse paths connecting the same start and goal. After optimizing, $p_1$ and $p_2$ must converge to different smoothed trajectories, while $p_2$ and $p_3$ are likely to converge to the same one.}
    \label{fig:homotopy_example}
    \Description{Three paths in environments with obstacles}
\end{figure}

\textit{Homotopy} is a straightforward topological feature of paths, but difficult to calculate due to its non-abelian nature, as will be discussed later.
\citet{bhattacharya2018path} proposed an approach on the basis of a concept called the {\it homotopy-augmented graph} for general \textit{homotopy-aware} path planning (path planning with the calculation of homotopy classes of paths).
Homotopy-aware path planning using a roadmap is reduced to pathfinding on the homotopy-augmented graph constructed from the roadmap.\footnote{Since we focus on the approach of using roadmaps for path planning, the problem that we treat in this paper is actually pathfinding on graphs. However, the homotopies we consider are not those in discrete graphs~\citep{ghrist1999configuration} but those in the continuous domain. Thus, to avoid confusion, we do not use the term \textit{homotopy-aware pathfinding}.}
To search on a homotopy-augmented graph, we have to manage elements of the \textit{fundamental group} of the searched space.
The fundamental group is not generally abelian and can be computationally difficult to deal with. 
Elements of the group are represented by strings of generators, called \textit{words}.
However, different words can represent the same element of the group, and determining the identity of such representations, termed the {\it word problem}, is not generally solvable~\citep{novikov1955algorithmic}.
Consequently, homotopy-aware path planning remains a generally difficult task.

On the other hand, multi-agent path planning, which plans paths for multiple agents so that they do not collide with each other, has a variety of applications~\citep{silver2005cooperative,dresner2008multiagent,li2020lifelong}.
In multi-agent scenarios on a plane, topological features are nontrivial even in the absence of obstacles since agents are obstacles for other agents. For example, when two agents pass each other, the topological characteristics of the solution vary with the direction of avoidance (counterclockwise or clockwise) as shown in Figure~\ref{fig:two-agents}.
Thus, topological considerations are important for finding the optimal trajectory described in the first paragraph, even in the cases of obstacle-free environments.
Moreover, homotopical considerations are important in the multi-agent case for other reasons.
Homotopy is used for multi-agent coordination~\citep{mavrogiannis2019multi}, especially for avoiding deadlocks~\citep{vcap2016provably}.
However, studies on homotopy-aware planning for multi-agent scenarios, even in obstacle-free environments, are limited.

\begin{figure}[t]
\centering
    \begin{subfigure}{0.3\columnwidth}
        \includegraphics[width=\textwidth]{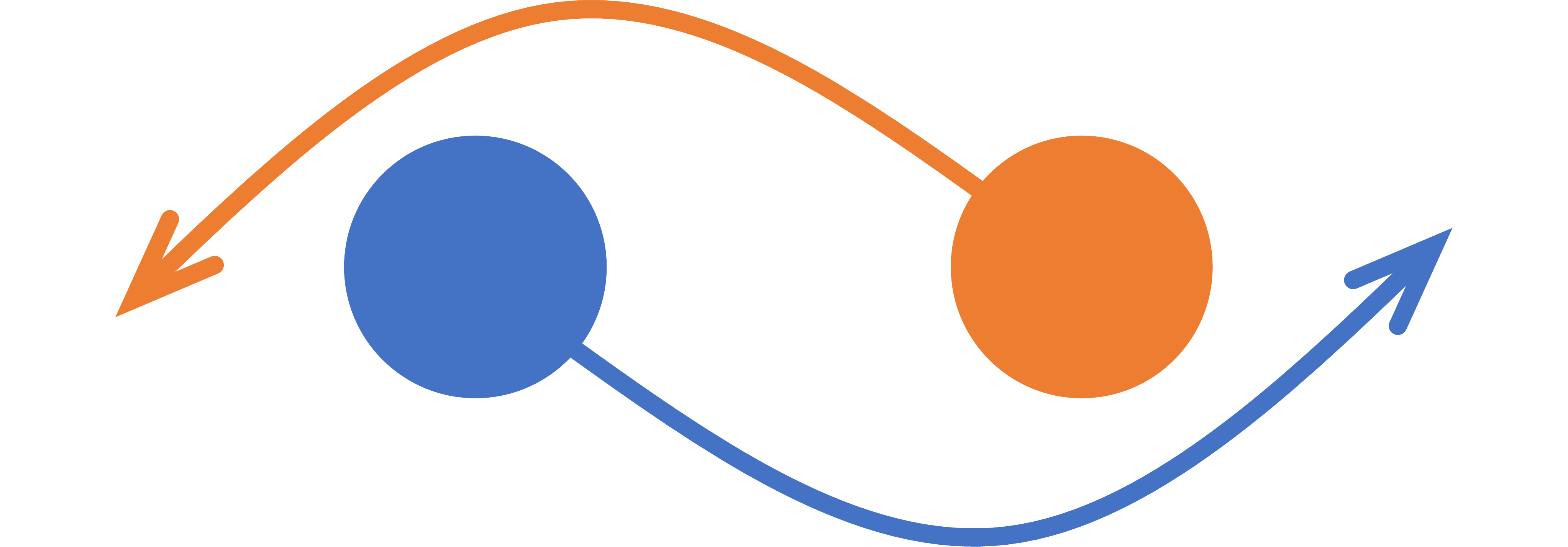}
        \caption{Counterclockwise}
        \label{fig:counterclockwise}
    \end{subfigure}    
    \hspace{10pt}
    \begin{subfigure}{0.3\columnwidth}
        \includegraphics[width=\textwidth]{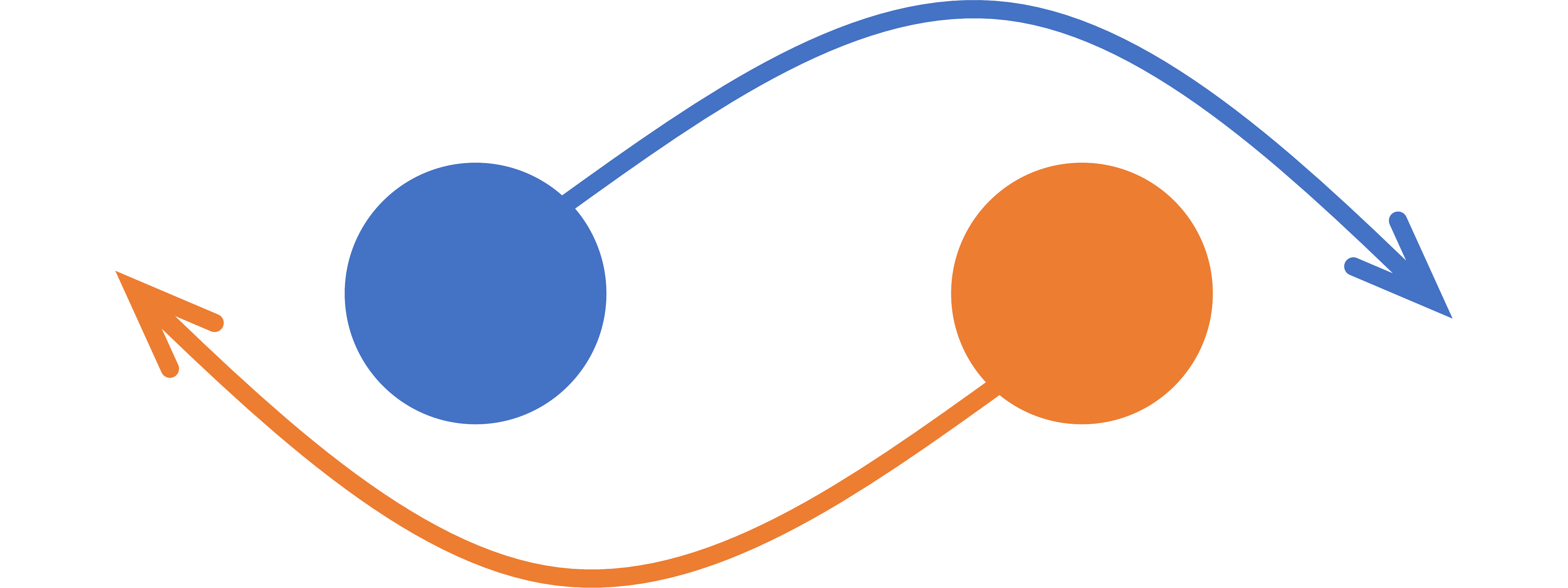}
        \caption{Clockwise}
        \label{fig:clockwise}
    \end{subfigure}
\caption{Two ways for two agents to pass each other. The homotopy class of the path in the configuration space depends on the direction of their passing.}
\label{fig:two-agents}
\Description{Two patterns of two circling arrows, counterclockwise and clockwise}
\end{figure}

By combining the aforementioned concepts with research for braids, we obtained an efficient framework for homotopy-aware multi-agent path planning in planar domains, which is the first valid solution to this problem to our knowledge.
There are three key ideas.
First, while the fundamental group for the multi-agent path planning problem on a plane is the \textit{pure braid group}, we label the homotopy classes of solutions by elements of the \textit{braid group}.
This is equivalent to expanding the space of paths to solutions of \textit{unlabeled} multi-agent path planning, in which, as long as each goal is reached by only one agent, it is permissible for any agent to proceed to any goal.~\citep{adler2015efficient}\footnote{This setting is also called {\it permutation-invariant}~\citep{kloder2006path,yu2013multi} or {\it anonymous}~\citep{stern2019multi}.}.
This simplifies the word construction.
Note that this expansion is only virtual for the calculation of homotopy classes and the actual pathfinding remains labeled.
Since homotopy classes inherently contain information about agent-goal correspondences, this idea does not give rise to any confusion.
The second idea is to use \textit{Dynnikov coordinates}, a representation of elements of the braid group as tuples of integers, which are expected to be calculated at a low computational cost~\citep{dehornoy2008efficient}.
Thanks to these coordinates, we can efficiently maintain a homotopy-augmented graph by using a data structure such as self-balancing binary search trees~\citep{knuth1998art}.
The third is that obstacles in the domain are taken into account by considering them as virtual agents.

We consider the problem of generating multiple homotopically distinct solutions for multi-agent path planning in a planar domain, and provide a method for this problem 
by combining our framework with the {\it revised prioritized planning}~\citep{vcap2015prioritized}.
We also prove that our method can generate solutions belonging to all homotopy classes under certain assumptions.

We experimentally demonstrate that the runtime of our method increases roughly quadratically with respect to the number of agents on a scale of several hundred agents, while the runtime of another method, which uses the Dehornoy order~\citep{dehornoy1994braid} to manage braids instead of Dynnikov coordinates, increases approximately quintically.

In addition, we demonstrated the usefulness of homotopy-awareness for multi-agent trajectory optimization through experimentation.
Specifically, we generated several homotopically different (discrete) solutions using our method. We then continuously optimized them for a certain cost function, and chose the best one. We confirmed that this solution was a globally better trajectory than those generated by baseline methods.

The contributions of this paper are summarized as follows:
\begin{itemize}
\item We propose the first sound framework for homotopy-aware multi-agent path planning on the plane.
\item We provide a method using this framework to generate multiple homotopically distinct solutions for multi-agent path planning in the plane. We have theoretically proved a kind of completeness of our method and experimentally showed its scalability.
\item We experimentally showed that solving this problem contributes to multi-agent trajectory optimization.
\end{itemize}

\section{Related Work}
We survey previous works for multi-agent path planning, topology-aware path planning, and braids.
\subsection{Multi-Agent Path Planning}
Multi-agent pathfinding, a field of study focusing on planning for multiple agents in discrete graphs, has been a subject of extensive research, particularly in grid-based environments~\citep{stern2019multi}.
One of the approaches to this problem is prioritized planning~\citep{erdmann1987multiple,silver2005cooperative}, which is non-optimal, incomplete yet scalable. \citet{vcap2015prioritized} proposed Revised Prioritized Planning (RPP) and proved its completeness under certain assumptions.
Major optimal approaches for multi-agent pathfinding include increasing-cost tree search~\citep{sharon2013increasing}, conflict-based search~\citep{sharon2015conflict}, and some reduction-based methods~\citep{surynek2016efficient,bartak2017modeling}.
Surveys of solutions have been conducted by \citet{stern2019multi2} and \citet{lejeune2021survey}.
While these methods do not consider homotopy, some of them could be modified to do so by combining them with our framework.
In this paper, RPP was adopted for the scalability reason.
On the other hand, we believe that some methods are not suitable for considering homotopy.
See also \S~\ref{sec:conclusion}.

For multi-agent path planning in continuous environments, the typical strategy involves a three-step process: roadmap generation, discrete pathfinding, and continuous smoothing of trajectories. For instance, \citet{honig2018trajectory} presented such an approach for quadrotor swarm navigation.
Since the quality of the smoothed trajectories is difficult to predict beforehand, generating multiple initial solutions, as described in \S~\ref{sec:introduction}, can be effective.
Several roadmap-generation methods tailored for multi-agent scenarios have been proposed~\citep{henkel2020optimized,arias2021avoidance,okumura2022ctrms}. A number of multi-agent pathfinding algorithms have been adapted to handle continuous time scenarios~\citep{yakovlev2017any,walker2018extended,andreychuk2022multi,surynek2019multi}.
Although we used a simple grid for simplicity in our experiments, our method can be extended to treat general roadmaps and continuous time. See Remark~\ref{rem:continous_time}.

\subsection{Topology-Aware Path Planning}
For the single-agent case on a plane, there are various studies, both theoretical and applied, on homotopy-aware path planning.
For planning on a plane with polygonal obstacles,
there exist methods using polygon partition~\citep{park2015homotopy,liu2023homotopy} with analysis of time complexity~\citep{hernandez2015comparison,efrat2006computing,bespamyatnikh2003computing}.
For scenarios involving possibly non-polygonal obstacles, several approaches have been explored~\citep{jenkins1991shortest,hernandez2015comparison,yi2016homotopy,schmitzberger2002capture}.
\citet{grigoriev1997computing,grigoriev1998polytime} proposed a method of constructing words by detecting traversing {\it cuts}, which are also called {\it rays}~\citep{tovar2010sensor}.
Along this idea, the notion of the {\it homotopy-augmented graph} ({\it h-augmented graph}), was proposed and applied to the navigation of a mobile robot with a cable~\citep{bhattacharya2015topological,kim2014path}.

The last approach was generalized by \citet{bhattacharya2018path} to various situations including multi-agent path planning on a plane. However, their algorithm is incomplete.
While they attempted to solve the word problem by using Dehn's algorithm~\citep{lyndon1977combinatorial}, they acknowledged that this algorithm may not always yield accurate results for their presentation.\footnote{Actually, their algorithm fails to perform correctly when dealing with scenarios involving more than three agents (see Appendix~\ref{appendix:pn} for details).}
To our knowledge, there exists no known complete framework for considering homotopy in the multi-agent case on a plane, even when obstacles are absent.

{\it Homology} serves as a coarser classification compared with homotopy, whereby two paths belonging to the same homotopy class are also in the same homology class, but the reverse is not always true\footnote{In the context of robotics, the terms "homotopy" and "homology" are sometimes used interchangeably.\citep{bhattacharya2012topological}}.
While homology is not well-suited for detailed path analysis like homotopy, it possesses the advantage of being computationally easier due to its abelian nature.
Algorithms have been developed for homology-aware path planning in two, three, or higher dimensional Euclidean spaces with obstacles~\citep{bhattacharya2010search,bhattacharya2011identification,bhattacharya2012topological,bhattacharya2013invariants}.
In the planar scenario, homology can be determined by a tuple of winding numbers~\citep{vernaza2012efficiently}.
This number count the number of times that an agent travels around an obstacle.
In the multi-agent planning in the plane, homology can also be determined by a tuple of all winding numbers between agents or between agents and obstacles~\citep{rolfsen2010tutorial}.
These facts have been applied to enable homology-aware planning for mobile robot navigation~\citep{kuderer2014online,mavrogiannis2020multi,mavrogiannis2021hamiltonian}.
The two solutions in Figure~\ref{fig:braid example} in \S~\ref{subsec:word construction} belongs to the same homology class since they have the winding numbers for all agent pairs, while they are homotopically distinct.

\citet{jaillet2008path} introduced the notion of \textit{visibility deformation} (VD), which is stricter than homotopy equivalence: while two paths are homotopic if they are visibility deformable one into the other, the reverse is not always true.
Since it is computationally expensive, \citet{zhou2020robust} introduced a simpler and even stricter notion, \textit{uniform visibility deformation} (UVD): two paths $\gamma, \gamma'$with the same endpoints belong to the same UVD class if, for any $t$, the segment between $\gamma(t)$ and $\gamma'(t)$ is contained within the free space.
While these notions are useful in three-dimensional environments~\citep{zhou2020robust}, they will be too strict in two-dimensional environments for our motivation, because path optimization may not be visibility deformation.
Moreover, these relations are not equivalence relations in general, because, even if the segment between $A$ and $B$ and the segment between $B$ and $C$ are contained within the free space, the segment between $A$ and $C$ may not.

As mentioned in the introduction, several works~\citep{kuderer2014online,rosmann2017integrated,zhou2020robust,de2024topology} focused on the strategy to generate multiple topologically distinct trajectories for avoiding local optima.
For navigation with presence of other agents, it is proposed to enumerate topological patterns of agent coordination and to select the best one, by using winding numbers~\citep{mavrogiannis2020multi,mavrogiannis2021hamiltonian}, which represent homology classes.
It was also proposed to use supervised learning in selecting topology class to imitate human behaviors~\citep{martinez2024shine}.
Our work will enable the extension of these approaches to multi-agent path planning on the plane, without losing the fineness of homotopical classification.

\subsection{Braids and their Applications to Robotics}
The braid group was introduced by \citet{artin1947braids,artin1947theory}.
\citet{fox1962braid} proved that it is the fundamental group of the unlabeled configuration space.
There are several algorithms for solving the word problem of braid groups~\citep{garside1969braid,epstein1992word,birman1998new,hamidi2000complexity,garber2002new}. \citet{dehornoy1994braid} introduced a linear order of braids called the \textit{Dehornoy order}, for which comparison algorithms were presented~\citep{dehornoy1997fast,malyutin2004fast}.
\citet{dehornoy2008ordering} conducted a survey of orders of braids and their comparison algorithms.
It was proved that the braid groups are linear~\citep{bigelow2001braid,krammer2002braid}.
Dynnikov coordinates were introduced by \citet{dynnikov2002yang}.

In robotics, the concept of braid groups on graphs is used for robot planning on graphs~\citep{ghrist1999configuration,kurlin2012computing}.
Regarding the planar case, \citet{diaz2017multirobot} developed a framework enabling the control of agents such that their trajectories correspond to specific braids. In their setting, agents move in circular paths on a predefined track, while our approach deals with path planning with arbitrary start and goal positions.

It was proposed to impose homotopy constraints when executing multi-agent plans to prevent deadlock even in the event of delays~\citep{gregoire2013robust,vcap2016provably}.
Homotopy constraints were also suggested for use in game-theoretic motion planning in urban driving scenarios~\citep{khayyat2024tactical}.
Although these studies did not use the braid group, they could be expanded to address more intricate topological relationships via braids.

Braid groups were used for predicting trajectories of other agents in distributed multi-agent navigation~\citep{mavrogiannis2019multi}. This framework was applied to intersection management~\citep{mavrogiannis2022implicit}.
While we focus on a centralized planning for a relatively large number of agents,
our efforts to compute braids efficiently could benefit research in this area.

\citet{lin2021artin} used braids to represent states of a knitting machine to find its optimal plans. While their aim is quite different from ours, their proposed method is similar to ours since it searches for optimal transfer plans for states represented by braids. While they used symmetric normal forms~\citep{dehornoy2008efficient} to manage braids, we do not for the reason of computational efficiency.

\section{Preliminary}\label{sec:preliminary}
In \S~\ref{subsec:homotopy}, we outline some general notions on topology.
In \S~\ref{subsec:h-graph}, we describe the homotopy-augmented graph, which is a specific notion to homotopy-aware path planning.
In \S~\ref{subsec:braid}, we describe the braid group, which represents homotopy classes for multiple agents on the plane.
To do this, we explain the notion of presentation of a group in \S~\ref{subsec:presentation}.

\subsection{Homotopy and Fundamental Group}\label{subsec:homotopy}
Let $X$ be a topological space.
A continuous morphism from the interval $[0,1]$ to $X$ is called a {\it path}.
Two paths with the same endpoints are said to be \textit{homotopic} if, intuitively, one can be continuously deformed into the other in $X$.
The formal definition is as follows:
Two paths $\gamma_0$ and $\gamma_1$ are homotopic if 
there exists a continuous map $h:[0,1]\times[0,1] \rightarrow X$ such that
\begin{equation}
h(s,0)=\gamma_0(0)=\gamma_1(0),\,h(s,1)=\gamma_0(1)=\gamma_1(1)    
\end{equation}
\begin{equation}
h(0,t)=\gamma_0(t),\,h(1,t)=\gamma_1(t),
\end{equation}
for any $s,t\in [0,1]$. Such map $h$ is called \textit{homotopy}.
The equivalence class of paths under the homotopic relation is called a \textit{homotopy class}. The homotopy class of a path $\gamma$ is denoted as $[\gamma]$.

For example in Fig~\ref{fig:homotopy_example}, paths $p_2$ and $p_3$ are homotopic because we can construct a homotopy between them in the free space. On the other hand, $p_1$ and $p_2$ are not homotopic because the obstacle prevent them from deforming into each other.

For a path $\gamma$, $\gamma^{-1}$ denotes the path that follows $\gamma$ in reverse.
Given two paths $\gamma_0, \gamma_1$ such that $\gamma_0(1)=\gamma_1(0)$, we define the composition $\gamma_0\circ\gamma_1$ as
\begin{equation}
\gamma_0\circ\gamma_1(t)=
\begin{cases}
    \gamma_0(2t) & 0 \leq t \leq 1/2,\\
    \gamma_1(2t-1) & 1/2 \leq t \leq 1.
\end{cases}
\end{equation}

Consider a point, $x_0\in X$. The set of homotopy classes of closed loops with endpoints at $x_0$ (i.e., $\gamma:[0,1]\rightarrow X$ such that $\gamma(0)=\gamma(1)=x_0$) forms a group under path composition. The identity element of this group is the homotopy class of the constant map to $x_0$. The inverse element of $[\gamma]$ is $[\gamma^{-1}]$.
When $X$ is path-connected, this group is independent of the choice of $x_0$ up to isomorphism and denoted as $\pi_1(X)$. It is called the {\it fundamental group of $X$}~\citep{hatcher2002algebraic}.

\begin{figure}
    \centering
    \includegraphics[width=0.3\linewidth]{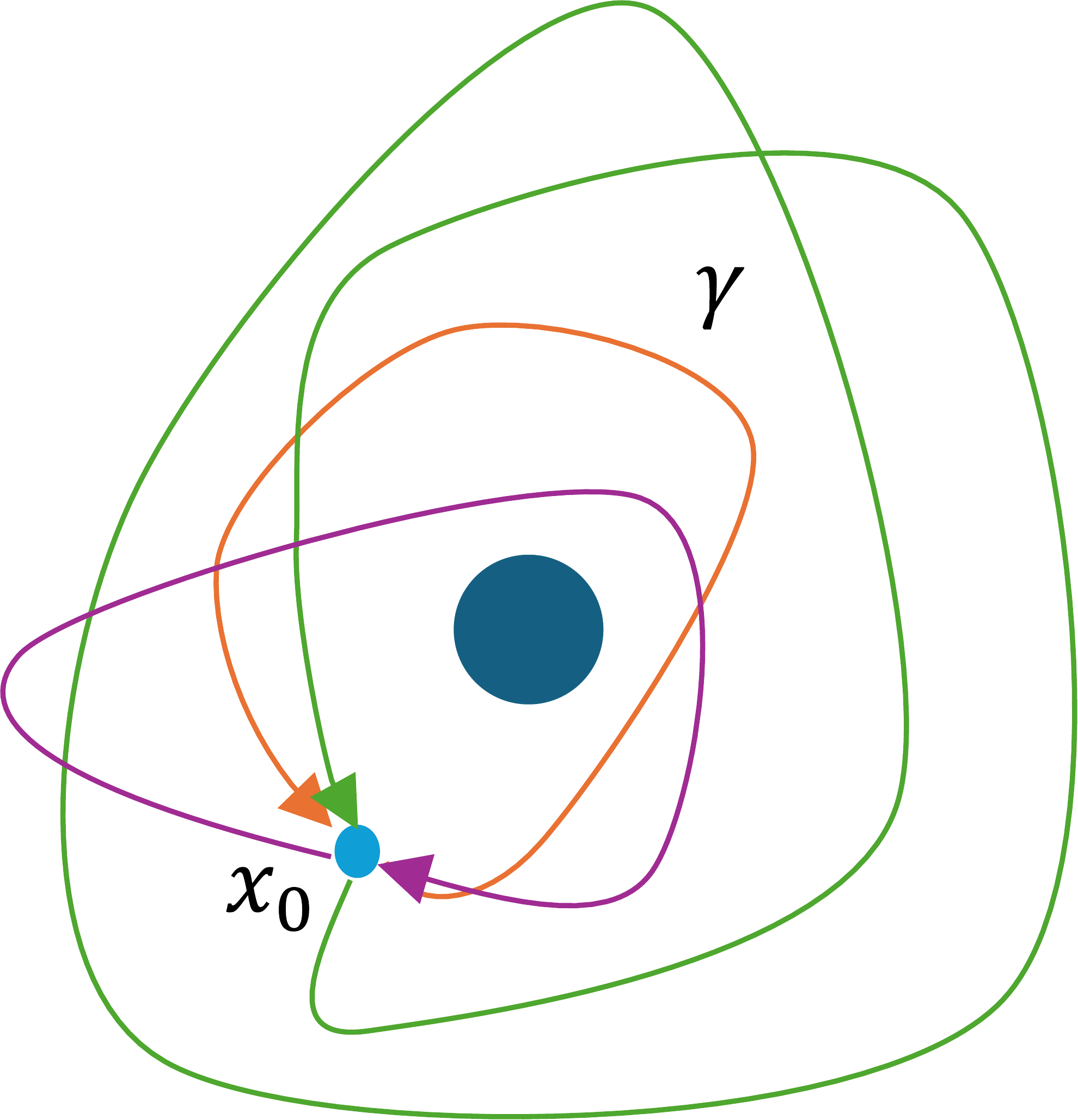}
    \caption{Simplest example for the nontrivial fundamental group where there is a single obstacle (a disk) on the plane. A point $x_0$ is the base point and a loop $\gamma$ from $x_0$ goes around the obstacle counterclockwise once. The loop that goes around the obstacle clockwise once is homotopic to $\gamma^{-1}$. The loop that goes around the obstacle counterclockwise twice is homotopic to $\gamma^{2}$, which traces $\gamma$ twice.}
    \label{fig:fundamental_group}
    \Description{A disk and three circling arrows, counterclockwise, clockwise, and twice counterclockwise}
\end{figure}

For example, when $X=\mathbb{R}^2$, the fundamental group is trivial because all paths with the same endpoints are homotopic.
Figure~\ref{fig:fundamental_group} shows the simplest example for the nontrivial fundamental group where $X=\mathbb{R}^2\setminus D$ and $D$ is a disk. The homotopy class of a loop is determined by how many times it goes around the obstacle and in which direction.
Furthermore, any homotopy class can be written as $[\gamma^n]$ with $n \in \mathbb{Z}$ and the fundamental group is isomorphic to $\mathbb{Z}$

Note that, when $X$ is path-connected, for any two points $x_0, x_1\in X$, there exists a bijection between homotopy classes of paths from $x_0$ to $x_1$ and the fundamental group.
Indeed, when we fix a path $\gamma_0$ from $x_0$ to $x_1$, the homotopy class of any path $\gamma$ from $x_0$ to $x_1$ is represented by $[\gamma_0^{-1}\circ \gamma] \in \pi_1(X)$.

\subsection{Homotopy-Augmented Graph}\label{subsec:h-graph}
Homotopy-aware search-based path planning is reduced to pathfinding on
a {\it homotopy-augmented graph}~\citep{bhattacharya2018path}.
Intuitively, the homotopy-augmented graph is a graph where homotopical information has been added to the vertices.
A path in the original graph can be lifted to a path in the homotopy-augmented graph.
For two paths with the same start and goal points in the original graph, the goal points of the lifted paths are the same if and only if the paths are homotopic.
Therefore, finding multiple non-homotopic paths from a given start to a given goal is reduced to finding paths to multiple vertices of the homotopy-augmented graph that correspond to the goal vertex with different homotopical information.
Mathematically, the homotopy-augmented graph can be considered as a lift of the original graph into the universal covering space. However, we do not define this in this paper.

Let $X$ be a path-connected space, and let $G=(V,E)$ be a discrete graph (roadmap) on $X$.
The homotopy-augmented graph $G_h=(V_h, E_h)$ is constructed as follows\footnote{Our description is generalized from that in the original paper to be independent of how homotopy is calculated.}.
We fix a base point $x_0\in X$ for the fundamental group. We also assume that one homotopy class $[\gamma_{v}]$ of paths from $x_0$ to each vertex $v\in V$ is fixed. The construction of $G_h$ is then expressed as follows:

\begin{itemize}
\item $V_h := V \times \pi_1(X)$, where $\pi_1(X)$ is the fundamental group of $X$ with respect to the chosen base point $x_0$.
\item For each edge $e\in E$ from vertex $v\in V$ to vertex $u \in V$, and for each element $\alpha\in\pi_1(X)$, $E_h$ contains an edge from $(v,\alpha)$ to $(u,\alpha h(e))$, where $h(e):=[\gamma_v\circ e\circ \gamma_u^{-1}]$. Here, $\gamma_u$ and $\gamma_v$ are paths in homotopy classes $[\gamma_u]$ and $[\gamma_v]$, respectively.
\end{itemize}

\begin{figure}[t]
    \centering
    \includegraphics[width=0.40\columnwidth]{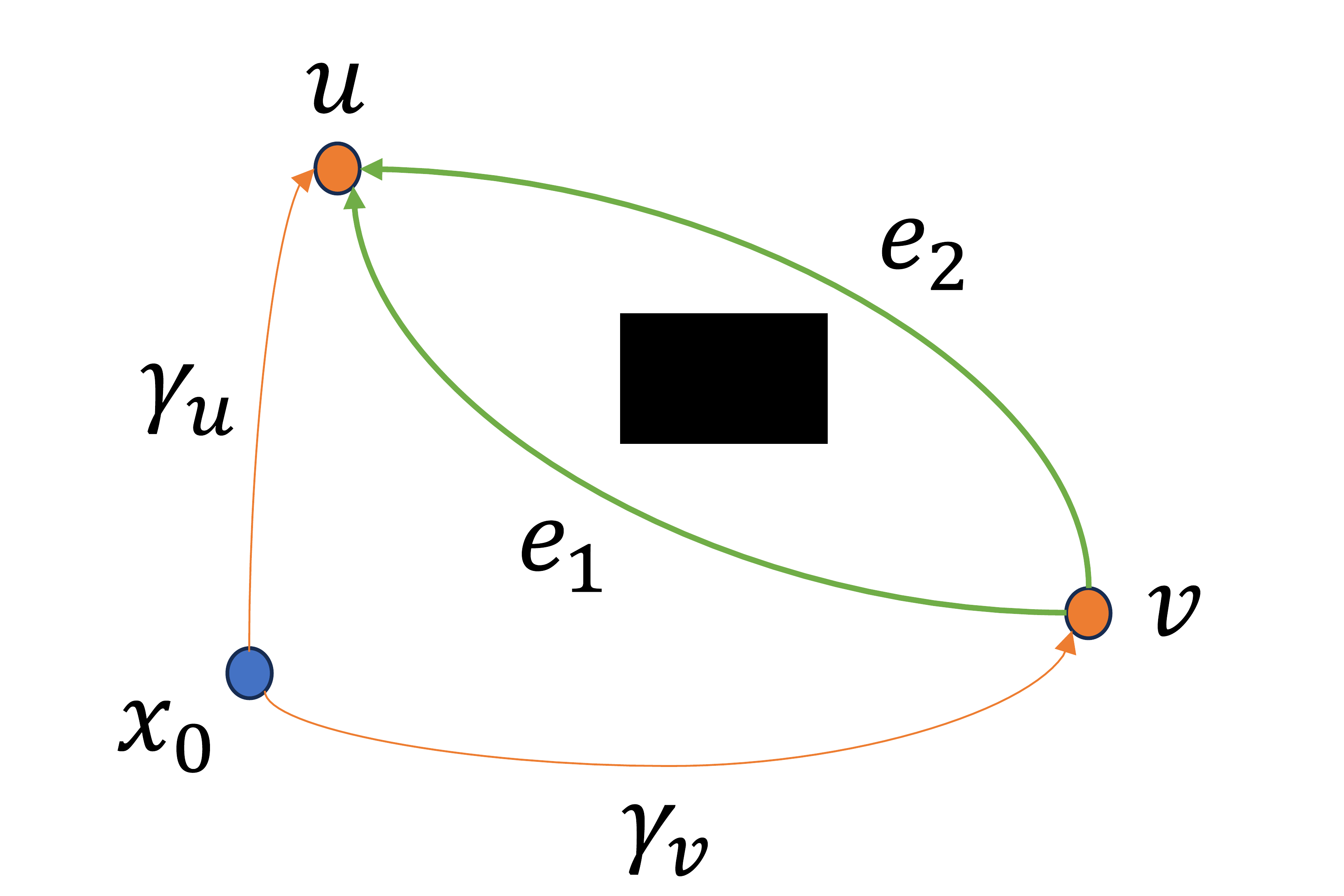}
    \caption{Example of construction of homotopy-augmented graph. The vertices are labeled $u$ and $v$ and edges are labeled $e_1$ and $e_2$. The base point $x_0$ and the paths $\gamma_u$ and $\gamma_v$ are selected to calculate the homotopy classes. The homotopical difference between edges $e_1$ and $e_2$ is represented by different elements $[\gamma_v \circ e_1 \circ \gamma_u^{-1}]$ and $[\gamma_v \circ e_2 \circ \gamma_u^{-1}]$ in $\pi_1(X)$.}
    \label{fig:h-graph-example}
    \Description{One obstacle which gamma v, e2, and gamma u go around and gamma v, e1, and gamma u do not}
\end{figure}

Figure~\ref{fig:h-graph-example} is a simple example. While $\gamma_v\circ e_1 \circ\gamma_u^{-1}$ is homotopic with the constant function, $\gamma_v\circ e_2 \circ\gamma_u^{-1}$ is not due to the existence of the obstacle. Thus, $h(e_1)$ is the identity element, and $h(e_2)$ is the element corresponding to one counterclockwise turn around the obstacle. In the homotopy-augmented graph, two edges above $e_1$ and $e_2$ with the same source have different targets.

\begin{figure}[t]
    \centering
    \includegraphics[width=0.5\linewidth]{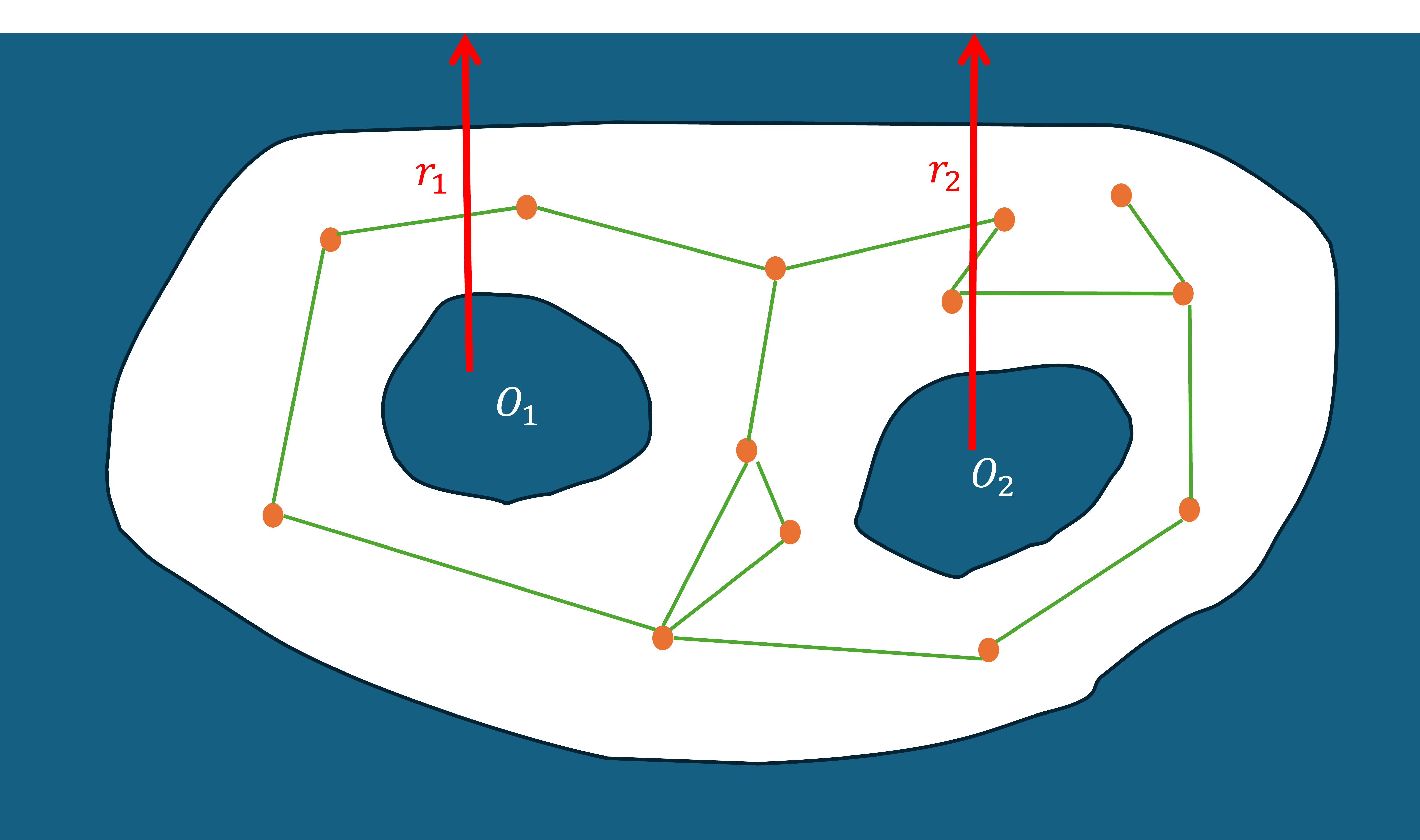}
    \caption{Field with obstacles and roadmap on it. Rays from obstacles are used to calculate the homotopy classes of paths on the free space.}
    \label{fig:rays}
    \Description{Field with two obstacles, roadmaps, and rays}
\end{figure}

Figure~\ref{fig:rays} is another example.
We fix rays which do not pass through any vertex of the roadmap from each obstacle.
We fix an arbitrary point that does not lie on any ray as the base point $x_0$ and take a path that does not intersect any ray as $\gamma_v$ for a vertex $v$.
Then, for an edge $e$, the element $h(e)$ is trivial when $e$ does not intersect any ray.
Let $\rho_i$ be the element of the fundamental group corresponding to the loop that goes around the obstacles $O_i$ without intersecting another ray, for $i=1,2$. Then, $h(e)$ is $\rho_i$ or $\rho_i^{-1}$ when $e$ intersects a ray $r_i$ only once. The sign depends on the direction of intersection.
This approach is equivalent to the classical method for homotopy-aware single-agent path planning on the plane with obstacles~\citep{grigoriev1998polytime,tovar2010sensor}.
In this method, during path planning, we detect crossing rays to calculate the homotopy class.
We construct a string called a \textit{word}, which is explained more generally in the next subsection, by adding a generator $\rho_i$ or its inverse $\rho_i^{-1}$ whenever traversing a ray $r_i$.
This word represents the homotopy class.
A consecutive pair of a generator and its inverse in a word can be canceled.
A word which does not contain such pairs is called \textit{reduced}.
A vertex of the homotopy-augmented graph consists of a vertex of the original graph and a reduced word, which represents a homotopy class uniquely.

\subsection{Presentation of Group}\label{subsec:presentation}
To describe the braid group in the next subsection, we explain the notion of presentation of a group, which is a way to specify a group.

A subset $S$ of a group $G$ is called a \textit{set of generators} of $G$ if any element of $G$ can be written as a finite product of elements of $S$ and their inverses.
An element of $S$ is called a \textit{generator}.
A \textit{word} is a string consisting of generators and their inverses, which represents an element of the group.
A \textit{relation} is a word which represents the identity element.

Intuitively, a \textit{presentation} $\langle S|R\rangle$ of a group $G$ consists of a set $S$ of generators and a set $R$ of relations which is sufficient to specify $G$. Formally, $\langle S|R\rangle$ is a presentation of $G$ if the following conditions are satisfied.
\begin{itemize}
\item $S$ is a set of generators.
\item The kernel of the natural surjection from the free group $F_S$ on $S$ to $G$ is the smallest normal subgroup of $F_S$ containing $R$.
\end{itemize}

A relation $ab^{-1}\in R$ is sometimes denoted as $a=b$.

The decision problem of determining whether two given words represent the same element of the group is known as the \textit{word problem}~\citep{peifer1997introduction}.

For example, the fundamental group for the example in Figure~\ref{fig:rays} has a presentation $\langle \rho_1, \rho_2| \emptyset \rangle$. Since there exists no relation between $\rho_1$ and $\rho_2$, the set of relations is empty.
For another example, if a group $G$ has a presentation $\langle a,b| aba^{-1}b^{-1}\rangle$, then $G$ is isomorphic to $\mathbb{Z}\times\mathbb{Z}$ because any element is written as $a^nb^m$ uniquely with $n,m \in \mathbb{Z}$ uniquely.

\subsection{Braid Group}\label{subsec:braid}
For a domain $\mathcal{D}$, the \textit{configuration space} $\mathcal{C}_n(\mathcal{D})$ is defined as
\begin{equation}
\mathcal{C}_n(\mathcal{D}) := \left\{(p_1,p_2,\ldots,p_n)\in \mathcal{D}^n\ \middle|\ p_i\neq p_j\text{ for all }i\neq j\right\}.
\end{equation}
As the name suggests, this space is the configuration space for the multi-agent path planning of $n$ agents without size in $\mathcal{D}$, because the excluded area corresponds to states in which some two agents occupy the same position.

The $n$-th symmetric group $S_n$ naturally acts on $\mathcal{C}_n(\mathcal{D})$ by permuting the indices. The \textit{unlabeled configuration space} $\mathcal{UC}_n(\mathcal{D})$ is defined as the quotient of $\mathcal{C}_n(\mathcal{D})$ by $S_n$. This space is the configuration space for unlabeled multi-agent path planning. In this problem setting, agent indexes are ignored, starts and goals do not correspond, and each agent may reach any goal, while exactly one agent must reach at each goal.

The fundamental group of $\mathcal{UC}_n(\mathbb{R}^2)$ is known as the $n$-th \textit{braid group} $B_n$. The fundamental group of $\mathcal{C}_n(\mathbb{R}^2)$, which is a subgroup of $B_n$, is known as the $n$-th \textit{pure braid group} $P_n$. There exists a natural surjective group homomorphism from $B_n$ to $S_n$, the kernel of which is $P_n$.

$B_n$ has a presentation given by
\begin{equation}
\langle \sigma_1,\ldots,\sigma_{n-1}\ |\ \sigma_i\sigma_j\sigma_i^{-1}\sigma_j^{-1},\ \sigma_i\sigma_{i+1}\sigma_{i}\sigma_{i+1}^{-1}\sigma_{i}^{-1}\sigma_{i+1}^{-1}\rangle,
\end{equation}
where $1\leq i < n-1$, and $i+1<j\leq n-1$~\citep{rolfsen2010tutorial}.
See \S~\ref{subsec:word construction} for the geometrical description of this presentation.

\section{Problem Setting}\label{subsec:problem setting}

In this section, we describe the setting for homotopy-aware path planning in planar domains.

\begin{figure}[t]
\centering
    \includegraphics[width=0.5\columnwidth]{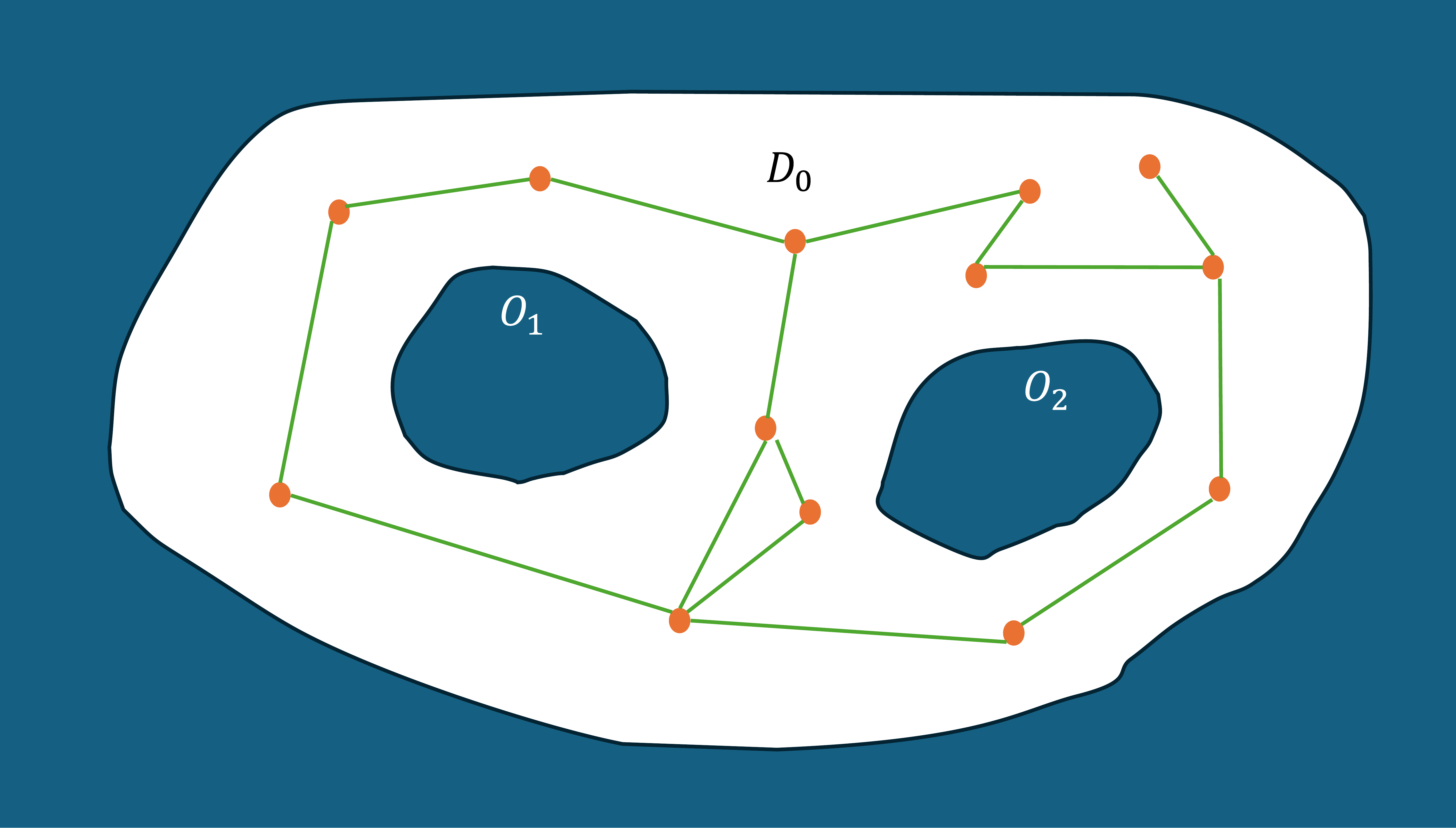}
\caption{Example of planar domain $\mathcal{D}$ satisfying the assumption of the problem setting and roadmap $G$ over $\mathcal{D}$. Here, $\mathcal{D} = \mathcal{D}_0\setminus (O_1 \cup O_2)$, where $\mathcal{D}_0$ is the region enclosed by the outer boundary and $O_1$ and $O_2$ are obstacles.}
\label{fig:field}
\Description{Field with two obstacles}
\end{figure}

We consider homotopy-aware multi-agent path planning with $n$ agents in the domain $\mathcal{D}\subseteq \mathbb{R}^2$.
As illustrated in Figure~\ref{fig:field},
we assume that $\mathcal{D}$ has the form $D_0\setminus(O_1\cup \cdots \cup O_r)$, where $D_0$ and $O_1,\ldots,O_r$ (obstacles) are bounded domains whose boundaries are simple closed curves and $\mathbb{R}^2\setminus D_0, O_1,\ldots,O_r$ neither intersect nor touch.\footnote{Strictly speaking, for example, a domain in a grid with obstacles touching at corners does not satisfy this assumption. However, such obstacles can be considered as one obstacle by widening them slightly at the corners.}

We assume that a finite undirected planar graph $G=(V,E)$ lying on the interior of $\mathcal{D}$ is given as a roadmap.
Starts $s_1,\ldots,s_n\in V$ and goals $g_1,\ldots,g_n\in V$ are also given.
Time is discretized into steps, and each agent either moves along an edge or remains at a vertex in each step.
Namely, a single-agent plan for the $i$-th agent is a sequence $v_{i,0}, v_{i,1},\ldots, v_{i,T} \in V$ of vertices such that $v_{i,0}=s_i$, $v_{i,T}=g_i$, and $v_t=v_{t+1}$ or $(v_t, v_{t+1}) \in E$ for any $t$.
A \textit{solution} is a tuple of single-agent plans for all agents without any vertex or swapping conflicts, i.e., $v_{i,t}\neq v_{j,t}$ and $(v_{i,t},v_{i,t+1}) \neq (v_{j,t+1}, v_{j,t})$ for all $i\neq j$~\citep{stern2019multi}. The \textit{sum of costs} of a solution is defined as the sum of the reaching times for all agents, which means
\begin{equation}
\sum_{i=1}^n \min\{t\mid v_{i,t}=g_i\}.
\end{equation}

A solution determines a path in the configuration space $\mathcal{C}_n(\mathcal{D})$.
Two solution are called \textit{homotopic} if their corresponding paths are homotopic.
When $K>0$ is given, 
We want to find $K$ mutually non-homotopic solutions with small sums of costs.

\begin{figure}[t]
\centering
    \begin{subfigure}{0.19\columnwidth}
        \includegraphics[width=0.96\columnwidth]{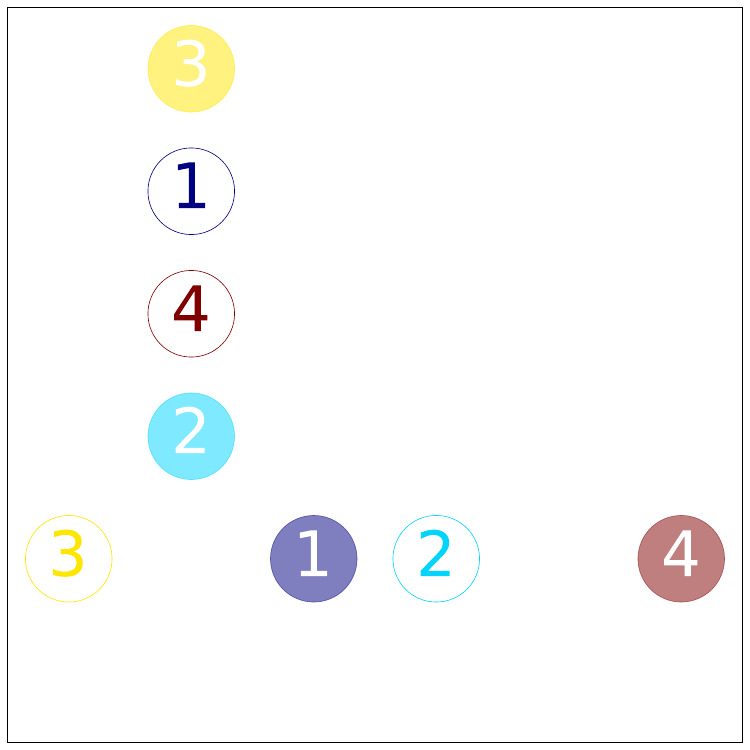}
        \caption{Instance}
    \end{subfigure}    
    \hspace{10pt}
    \begin{subfigure}{0.76\columnwidth}
        \includegraphics[width=0.24\columnwidth]{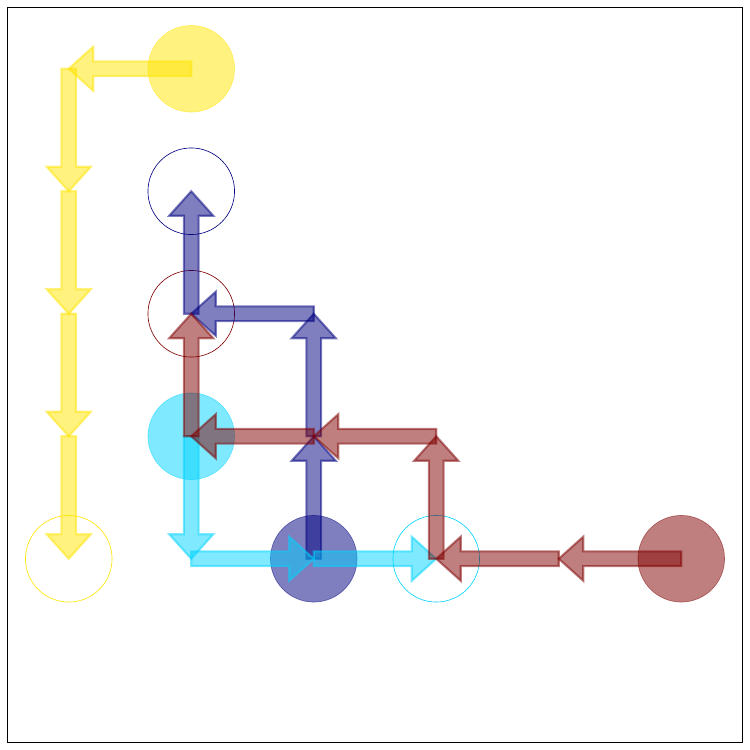}
        \includegraphics[width=0.24\columnwidth]{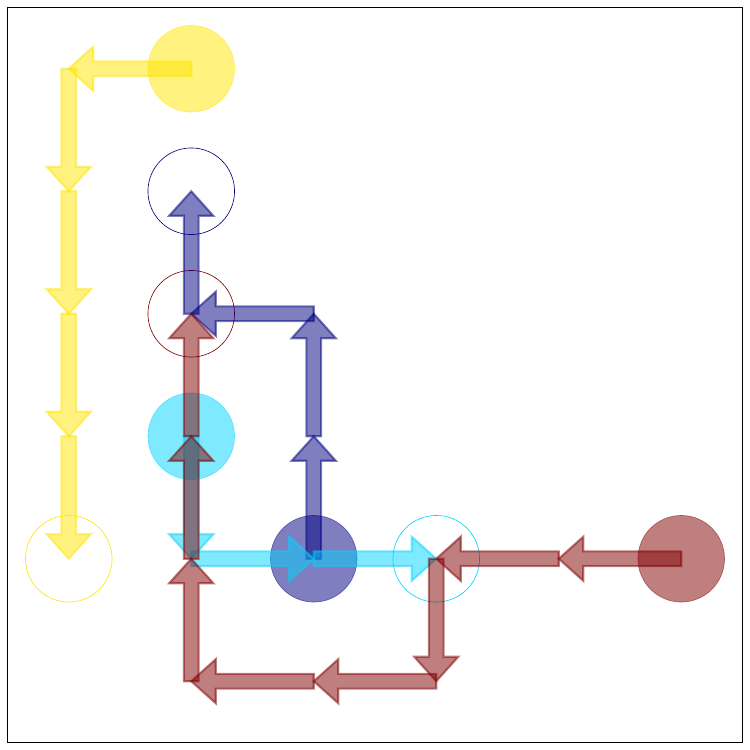}
        \includegraphics[width=0.24\columnwidth]{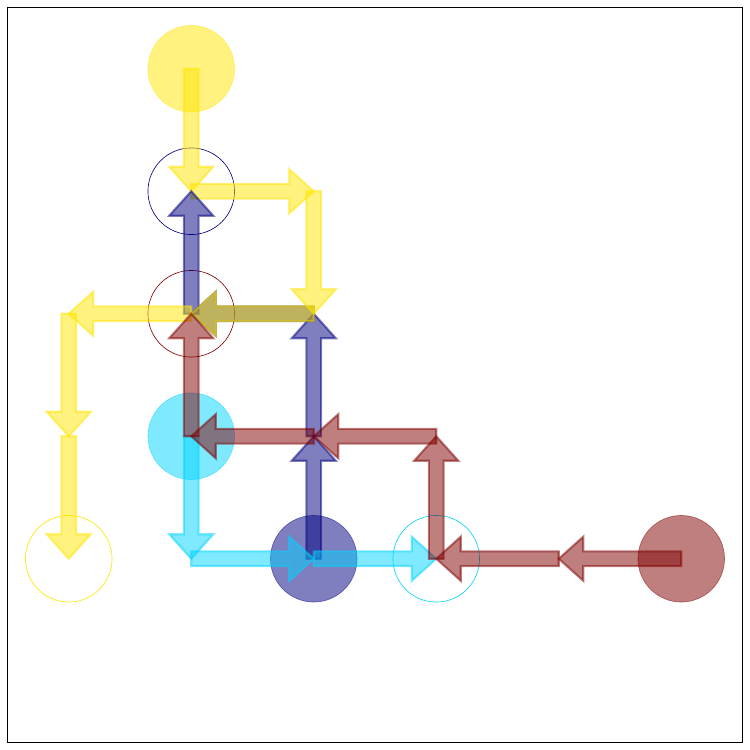}
        \includegraphics[width=0.24\columnwidth]{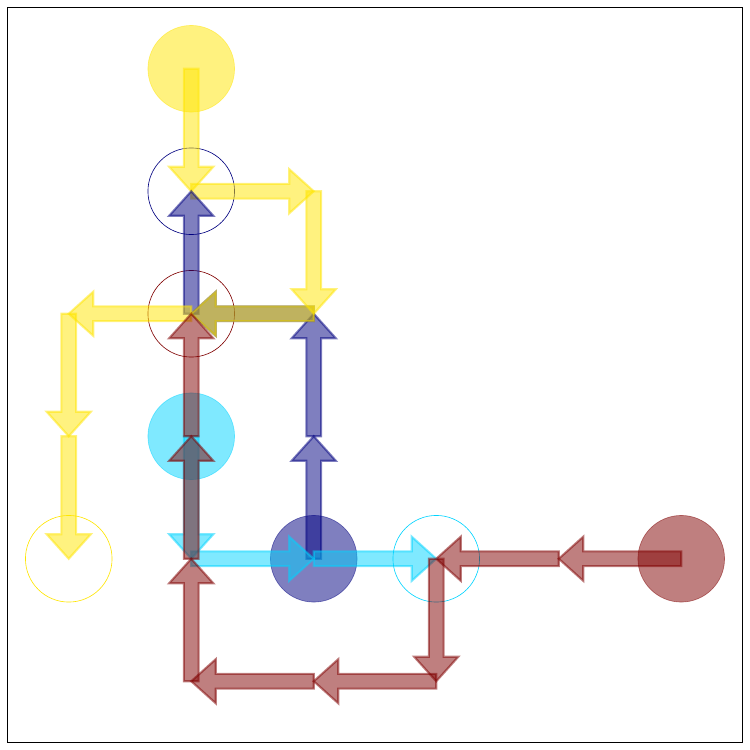}
        \caption{Solutions}
    \end{subfigure}
\caption{Examples of Instance of MAPF on grid and its homotopically distinct solutions. The filled circles are for the starts, and the unfilled circles with the same indexes are for the corresponding goals.
The moves of agents are represented by arrows.
In these solutions, no agents stay at the same vertex except at their goals. The displayed four solutions belong to different homotopy classes.
More specifically, we can classify them based on which side the third agent avoids the first agent and which side the fourth agent avoids the second agent.}
\label{fig:example_instance}
\Description{An instance figure with four starts and goals on a grid map and four solutions figures with arrows on it}
\end{figure}

Figure~\ref{fig:example_instance} shows an instance and its homotopically distinct solutions in a grid world.

As stated in \S~\ref{sec:preliminary}, the problem can be reduced to pathfinding on the homotopy-augmented graph in $\mathcal{C}_n(\mathcal{D})$.

\begin{remark}
The problem of finding $K$ mutually non-homotopic paths with minimum lengths is called \textit{$K$-shortest non-homotopic path planning}~\citep{bhattacharya2018path,yang2022efficient}.
Since our priority in this paper is scalability, outputs of our method are not necessarily optimal. See also Remark~\ref{remark:k-SNHP}.
\end{remark}

\begin{remark}\label{rem:continous_time}
For simplicity, we consider the case where $G$ has edges of the equal cost with no crossing, and where only vertex and swapping conflicts are forbidden.
Our framework works even when time is continuous or collision conditions are complex~\citep{kasaura2022prioritized}, because these conditions only affect the construction of the timed graph in Line~\ref{line:construct_graph} in Algorithm~\ref{algorithm:HRPP}.
However, such collision conditions are ignored in the homotopy calculations. See also the next remark.
\end{remark}

\begin{remark}
Since the homotopy classes are defined in the configuration space, intermediate paths do not need to correspond to any solutions in $G$.
The agents are considered as points during continuous deformation of solutions.
Note that this simplification only affects homotopies; sizes of agents are considered in pathfinding.
Ignoring the sizes of agents does not matter in cases with no obstacles because the agents can be far enough apart from each other when deforming paths to other ones. On the other hand, when there exist obstacles, two paths that are actually homotopically distinct when both sizes of agents and obstacles are taken into account may be considered as homotopic. See also \S~\ref{sec:conclusion}.
However, the fact that our homotopical classification is not complete does not mean that it is useless in such cases.
\end{remark}

\section{Method}

In this section, we describe our method to generate multiple solutions with different homotopies.

The entire algorithm, which is provided in \S~\ref{subsec:revised prioritized planning}, is a combination of revised prioritized planning and our framework for calculating homotopy classes while planning.
In the first four subsections, we describe the framework.
For the setting where agents are not labeled and there are no obstacles, a classical method to construct words representing homotopy classes is known, which is described in \S~\ref{subsec:word construction}.
In \S~\ref{subsec:reduction to obstacle-free case} and \S~\ref{subsec:reduction to unlabeled case}, we reduce our problem setting to this case.
Since constructed words themselves are not suitable for comparison, a method to represent them by tuple of integers, which is called Dynnikov coordinates, is explained in \S~\ref{subsec:Dynnikov Coordinate}.
In \S~\ref{subsec:completeness}, we prove the completeness of our algorithm under specific assumptions.

\subsection{Reduction to Obstacle-Free Case}\label{subsec:reduction to obstacle-free case}

\begin{figure}[t]
\centering
    \includegraphics[width=0.5\columnwidth]{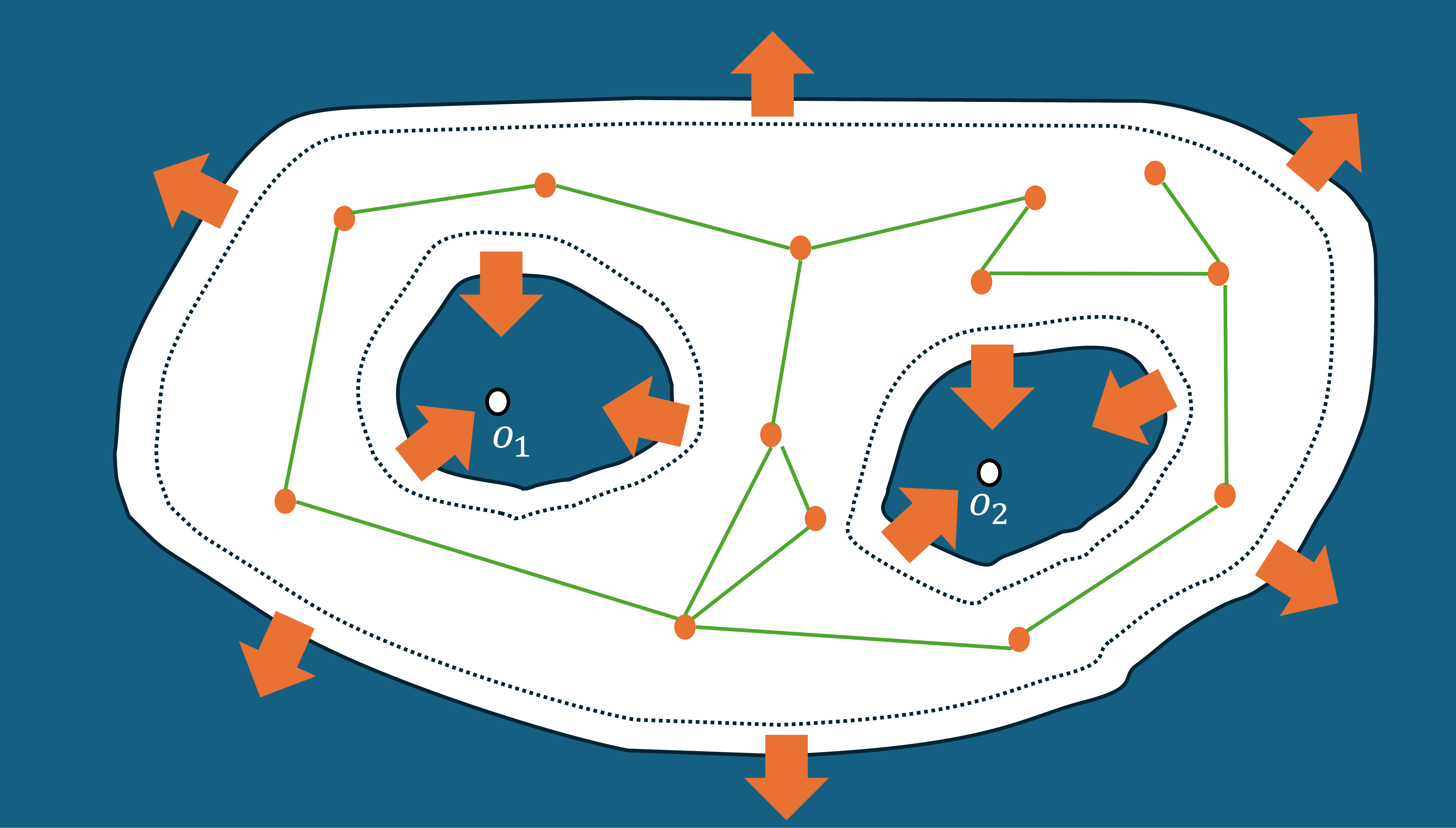}
\caption{Deformation of $\mathcal{D}=\mathcal{D}_0\setminus(O_1\cup O_2)$ to $\mathbb{R}^2\setminus \{o_1, o_2\}$, where $o_1\in O_1$ and $o_2\in O_2$.
Regions near obstacles expand to collapse obstacles to points.
A region near the outer boundary expands to the point at infinity.
Points in the region enclosed by dotted curves, which includes roadmap $G$, are fixed during deformation.}
\label{fig:deform}
\Description{Field in the previous figure with arrows to oi and to the outside}
\end{figure}

We take one representative element $o_i\in O_i$ for each obstacle.
Obviously, $o_1,\ldots,o_r$ are different from each other.
By taking sufficiently small neighborhoods of boundaries and transforming them,
as illustrated in Figure~\ref{fig:deform},
we can deform $\mathcal{D}$ to $\mathbb{R}^2\setminus\{o_1,\ldots,o_r\}$ without changing the roadmap.
For a rigorous proof, the Jordan–Sch\"{o}nflies theorem~\citep{thomassen1992jordan} can be used.
Thus, it is enough to calculate homotopies on $\mathbb{R}^2\setminus\{o_1,\ldots,o_r\}$. Note that, since the roadmap is not changed, the found paths are lying on the original $\mathcal{D}$.

We reduce the calculation of homotopy in $\mathcal{C}_n(\mathbb{R}^2\setminus\{o_1,\ldots,o_r\})$ to that in $\mathcal{C}_{r+n}(\mathbb{R}^2)$. Intuitively, we consider the obstacles
as additional agents staying at the same positions. While it is non-trivial that this reduction preserves homotopy classes because obstacles cannot move when deforming paths, the following proposition guarantees it.

\begin{proposition}\label{prop:homotopy_inj}
    The map
\begin{equation}        
    \pi_1\left(\mathcal{C}_n(\mathbb{R}^2\setminus\{o_1,\ldots,o_r\})\right)\to \pi_1\left(\mathcal{C}_{r+n}(\mathbb{R}^2)\right)
\end{equation}
    induced by the embedding
\begin{equation}\label{eq:embedding}
\begin{array}{ccc}
\mathcal{C}_n(\mathbb{R}^2\setminus\{o_1,\ldots,o_r\}) & \hookrightarrow & \mathcal{C}_{r+n}(\mathbb{R}^2) \\
\rotatebox[origin=c]{90}{$\in$} & & \rotatebox[origin=c]{90}{$\in$}\\
(p_1,\ldots, p_n) & \mapsto &(o_1,\ldots, o_r, p_1, \ldots, p_n)
\end{array}
\end{equation}
    is injective.
\end{proposition}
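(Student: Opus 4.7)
The plan is to identify the embedding in \eqref{eq:embedding} as the inclusion of a fiber of a \emph{Fadell--Neuwirth fibration} and then to read off injectivity from the long exact sequence of homotopy groups. The map
\begin{equation*}
p:\mathcal{C}_{r+n}(\mathbb{R}^2) \longrightarrow \mathcal{C}_r(\mathbb{R}^2),\qquad (x_1,\ldots,x_{r+n})\mapsto(x_1,\ldots,x_r),
\end{equation*}
that forgets the last $n$ coordinates is a locally trivial fibration (this is the classical theorem of Fadell--Neuwirth). Its fiber over the point $(o_1,\ldots,o_r)\in\mathcal{C}_r(\mathbb{R}^2)$ is exactly $\mathcal{C}_n(\mathbb{R}^2\setminus\{o_1,\ldots,o_r\})$, and the embedding in the statement is precisely the fiber inclusion. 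I would take this as the starting point, so that the map on $\pi_1$ in the proposition becomes the fiber-to-total-space map in the fibration.

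Next, I would apply the homotopy long exact sequence at the chosen basepoint:
\begin{equation*}
\pi_2\bigl(\mathcal{C}_r(\mathbb{R}^2)\bigr)\longrightarrow \pi_1\bigl(\mathcal{C}_n(\mathbb{R}^2\setminus\{o_1,\ldots,o_r\})\bigr)\longrightarrow \pi_1\bigl(\mathcal{C}_{r+n}(\mathbb{R}^2)\bigr).
\end{equation*}
Injectivity of the second arrow is equivalent to the vanishing of $\pi_2(\mathcal{C}_r(\mathbb{R}^2))$.

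To close the argument I would invoke the well-known asphericity of ordered configuration spaces of the plane: $\mathcal{C}_r(\mathbb{R}^2)$ is an Eilenberg--MacLane space $K(P_r,1)$, hence $\pi_k(\mathcal{C}_r(\mathbb{R}^2))=0$ for $k\geq 2$. This itself is a short induction using the same Fadell--Neuwirth tower $\mathcal{C}_k(\mathbb{R}^2\setminus\{\text{points}\})\to\mathcal{C}_{k-1}(\mathbb{R}^2\setminus\{\text{points}\})$: the fibers are punctured planes, each homotopy equivalent to a wedge of circles and therefore aspherical, and the long exact sequence propagates asphericity upward from the base case $\mathcal{C}_1(\mathbb{R}^2)=\mathbb{R}^2$.

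The one step that needs care is checking that the hypotheses of the Fadell--Neuwirth theorem apply in our setting and that the fiber is genuinely identified with $\mathcal{C}_n(\mathbb{R}^2\setminus\{o_1,\ldots,o_r\})$ rather than merely homeomorphic to it in a way that might obscure the basepoint; since $\mathbb{R}^2$ is a manifold without boundary and $o_1,\ldots,o_r$ are interior points, this is immediate. The remaining ingredients — the exact sequence of a fibration and asphericity of planar configuration spaces — are standard, so no genuine obstacle is expected, and the proof should be essentially one diagram plus a citation.
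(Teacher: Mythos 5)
Your proof is correct and follows essentially the same route as the paper: the Fadell--Neuwirth fibration $\mathcal{C}_{r+n}(\mathbb{R}^2)\to\mathcal{C}_r(\mathbb{R}^2)$, its long exact sequence, and the vanishing of $\pi_2(\mathcal{C}_r(\mathbb{R}^2))$. The only difference is that the paper cites the triviality of $\pi_2$ directly, whereas you additionally sketch the standard inductive asphericity argument.
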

See Appendix~\ref{appendix:proof} for the proof.

In the following three subsections, we replace $r+n$ with $n$ and consider $\mathcal{C}_{n}(\mathbb{R}^2)$ for simplicity.
We also write simply $\mathcal{C}_{n}$ and $\mathcal{UC}_{n}$ instead of $\mathcal{C}_{n}(\mathbb{R}^2)$ and $\mathcal{UC}_{n}(\mathbb{R}^2)$.

\subsection{Reduction to Unlabeled Case}\label{subsec:reduction to unlabeled case}
As mentioned earlier, in labeled multi-agent path planning, we can calculate homotopy classes as unlabeled multi-agent path planning. To formalize this reduction, let $G^n=(V^n,E^n)$ be the $n$-times direct product of $G$ minus collision parts, which is a graph on the configuration space $\mathcal{C}_n$. 
Instead of directly applying the construction in \S~\ref{subsec:h-graph} to $\mathcal{C}_n$,
we construct a graph $(V^n_h, E^n_h)$ as follows.

\begin{itemize}
\item $V^n_h := V^n\times \pi_1(\mathcal{UC}_n) = V^n\times B_n$.
\item For each edge $e\in E^n$ from vertex $v\in V^n$ to vertex $u \in V^n$, and for each element $\alpha \in B_n$, $E^n_h$ contains an edge from $(v,\alpha)$ to $(u,\alpha h(\overline{e}))$, where $\overline{e}$ is the projection of $e$ to $\mathcal{UC}_n$ by natural surjection, and $h(\overline{e})$ is calculated using the method described in \S~\ref{subsec:word construction}.
\end{itemize}

It is important to note that, for a specific multi-agent path planning task, it is not necessary to construct the entire graph as it is not connected. Instead, we can focus on the relevant connected components of the graph. For example, for any vertex $v\in V^n$ and elements $\alpha, \beta\in B_n$, the connected components of $(v,\alpha)$ and $(v,\beta)$ are different unless $\alpha^{-1}\beta\in P_n$.

With our approach, we represent the homotopy classes of solutions for a specific instance of the labeled multi-agent path planning problem by labeling them with elements in a coset of the pure braid group, instead of those in the pure braid group. The choice of coset depends on the relative positions of the agents' start and goal locations.
For a detailed explanation of why we do not use the pure braid group, see Appendix~\ref{appendix:pn}.

\subsection{Word Construction}\label{subsec:word construction}
The construction of words for $\mathcal{UC}_n$ is classically known~\citep{fox1962braid}.

We fix a coordinate $x,y$ for $\mathbb{R}^2$.
Let $p_1=(x_1,y_1),\ldots,p_n=(x_n,y_n)$ be the induced coordinates of $\mathcal{C}_n$. We define:
\begin{equation}
    \lambda_{2n} := \{x_1< x_2< \cdots < x_n\},
\end{equation}
\begin{equation}
    \lambda^{2n-1}_{i} := \{x_1< \cdots < x_i=x_{i+1}< \cdots < x_n, y_i<y_{i+1}\},
\end{equation}
for $1\leq i < n$. It is easy to see that the map $\lambda^{2n}\cup\bigcup_i \lambda^{2n-1}_i \hookrightarrow \mathcal{C}_n \twoheadrightarrow \mathcal{UC}_n$ is injective. We identify $\lambda^{2n}, \lambda^{2n-1}_i$ with their images.
$\mathcal{UC}_n\setminus\lambda^{2n}$ is $(2n-1)$-dimensional and $\mathcal{UC}_n\setminus\lambda^{2n}\setminus \bigcup_i \lambda^{2n-1}_i$ is $(2n-2)$-dimensional.

For simplicity, we assume that all $v\in V^n$ lie on $\lambda^{2n}$. We also choose a base point $x_0$ of the fundamental group within $\lambda^{2n}$. To make the homotopy-augmented graph explicit, we choose the standard path connecting $x_0$ and vertices of $V^n$ in $\lambda^{2n}$.
The generator $\sigma_i$ of $\pi_1(\mathcal{UC}_n)=B_n$ is represented by a loop in $\lambda^{2n}\cup\lambda^{2n-1}_i$ that traverses $\lambda^{2n-1}_i$ exactly once with the direction from region $\{y_i<y_{i+1}\}$ to region $\{y_i>y_{i+1}\}$. Consequently, for an edge $e$ on $\mathcal{UC}_n$, the word representing the corresponding braid $h(e)$ is constructed by detecting the traverses of $e$ with $\lambda^{2n-1}_1,\ldots,\lambda^{2n-1}_{n-1}$.
By examining the intersections of the boundaries of $\lambda^{2n-1}_1,\ldots,\lambda^{2n-1}_{n-1}$, we obtain the relations $\sigma_i\sigma_j=\sigma_j\sigma_i$ for $i+1<j$ and $\sigma_i\sigma_{i+1}\sigma_i=\sigma_{i+1}\sigma_i\sigma_{i+1}$. Detailed derivations of these relations are omitted for brevity.

Intuitively, this construction can be described as follows. We number the agents in ascending order of their $x$-coordinates. $\sigma_i$ corresponds to the counterclockwise swap (Figure~\ref{fig:counterclockwise}) of the agent with number $i$ and the agent with number $(i+1)$, and $\sigma_i^{-1}$ corresponds to their clockwise swap (Figure~\ref{fig:clockwise}).
The word for an edge is constructed by detecting the swaps of agent indexes.
The graphical explanation for the relations is shown in Figure~\ref{fig:word relations}.
The key observation is that the order in which the agents move does not affect the homotopy class. Therefore, we have two relations:
(a) $\sigma_i\sigma_j=\sigma_j\sigma_i$ and
(b) $\sigma_i\sigma_{i+1}\sigma_i=\sigma_{i+1}\sigma_i\sigma_{i+1}$.

Figure~\ref{fig:braid example} shows two examples of word construction.

\begin{figure}[t]
\centering
    \centering
    \begin{subfigure}{0.3\columnwidth}
        \includegraphics[width=\textwidth]{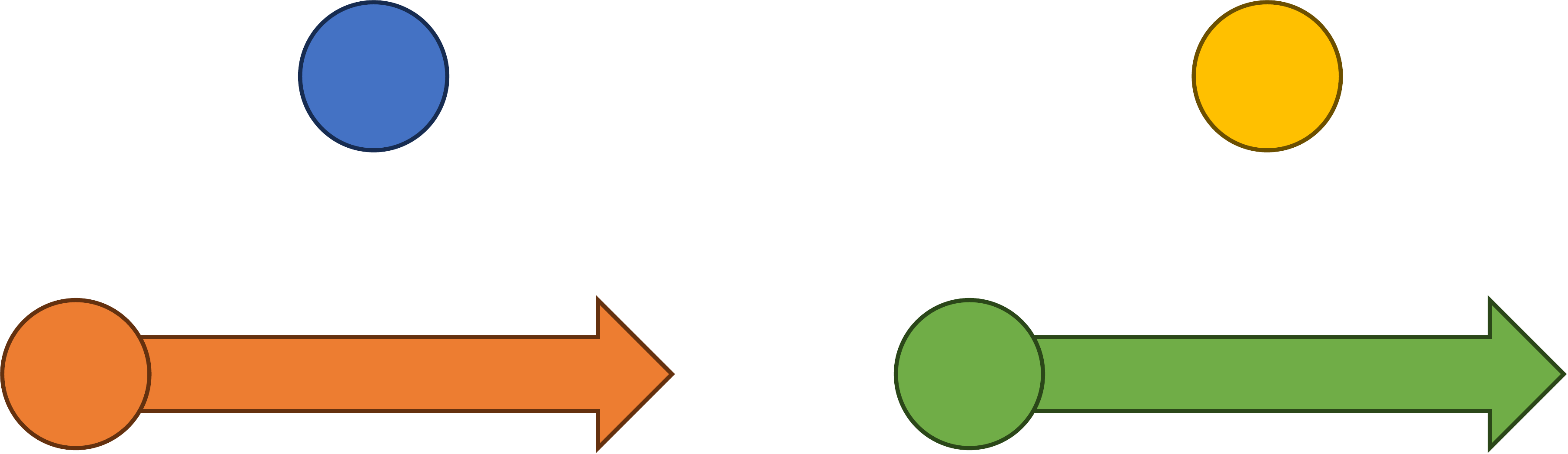}
        \caption{$\sigma_i\sigma_j=\sigma_j\sigma_i$}
    \end{subfigure}
    \hspace{30pt}
    \centering
    \begin{subfigure}{0.24\columnwidth}
        \includegraphics[width=\textwidth]{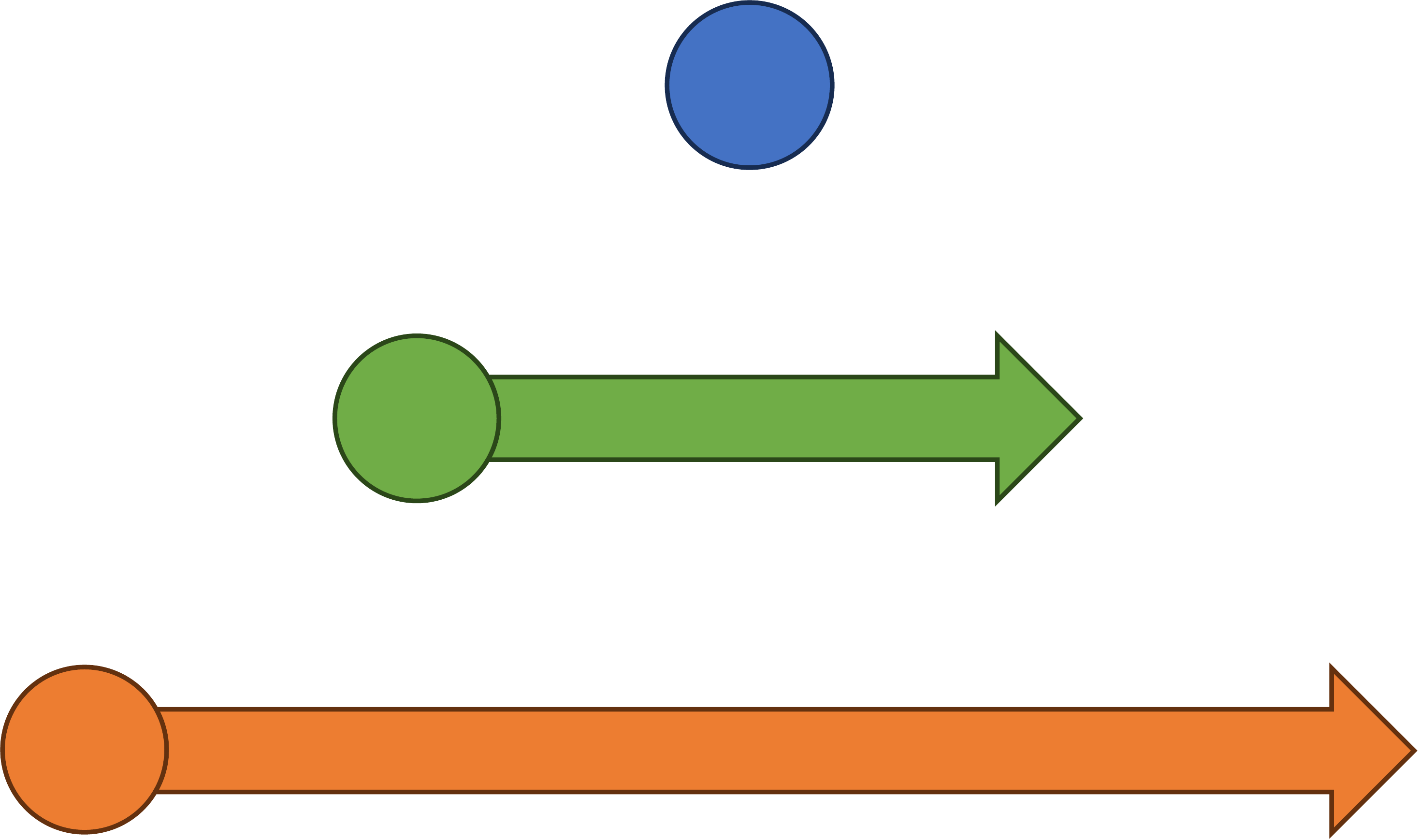}\caption{$\sigma_i\sigma_{i+1}\sigma_i=\sigma_{i+1}\sigma_i\sigma_{i+1}$}
    \end{subfigure}
\caption{Figures illustrating relations between the generators.
In (a), the left circles represent the agents with numbers $i$ and $(i+1)$, and the right circles represent the agents with numbers $j$ and $(j+1)$, where $j>i+1$. In (b), the bottom, middle, and top circles represent the agents with numbers $i$, $(i+1)$, and $(i+2)$, respectively.
Arrows represent movements.
The homotopy being independent of the order of moves corresponds to the relations.}
\label{fig:word relations}
\Description{Horizontally aligned two vertically aligned a point and a leftarrow. Vertically aligned a point, short leftarrow, and long leftarrow}
\end{figure}
\begin{figure}[t]
\centering
    \includegraphics[width=0.7\textwidth]{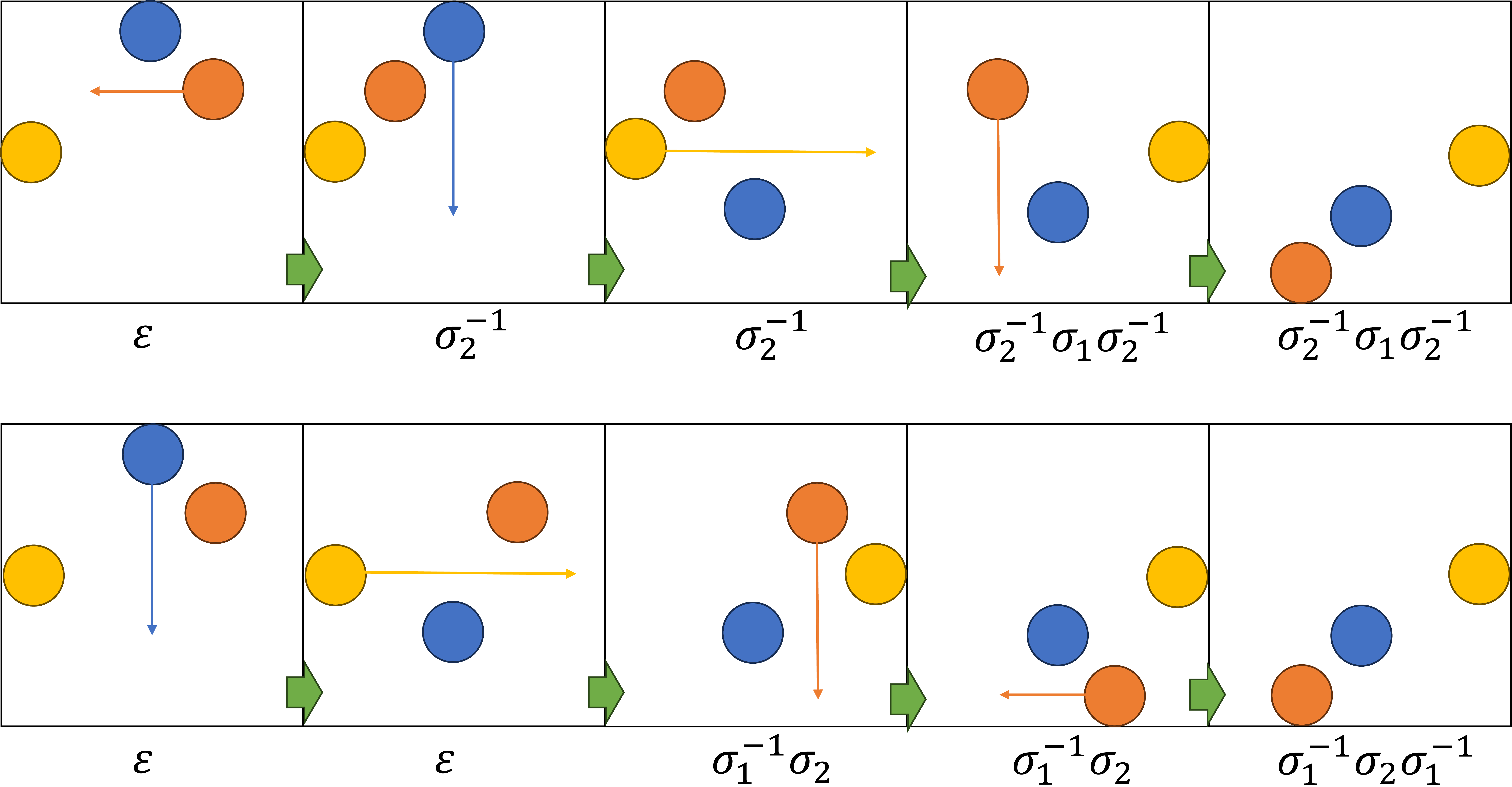}
\caption{Two examples of braid construction. From left to right, generators are added or not added to the displayed braid word as an agent moves.
In the top example, the agents with numbers $2$ and $3$ swap clockwise at the first block, then the agents with numbers $1$ and $2$ swap counterclockwise and the agents with numbers $2$ and $3$ swap clockwise at the third block. In the example below, the agents with numbers $1$ and $2$ swap clockwise and the agents with numbers $2$ and $3$ swap counterclockwise at the second block; then, the agents with numbers $1$ and $2$ swap clockwise at the fourth block.}
\label{fig:braid example}
\Description{Solutions for MAPF on grid and words}
\end{figure}

\begin{remark}
The word construction depends on the choice of the coordinates $x$ and $y$, because the choice of generators depends on them.
However, since homotopy itself is independent of the coordinates, if braids for two paths with the same endpoints calculated by using some coordinates are the same, their braids always coincide even when other coordinates are used.
\end{remark}

\subsection{Dynnikov Coordinates}\label{subsec:Dynnikov Coordinate}

\begin{figure}[t]
    \centering
    \includegraphics[width=0.7\linewidth]{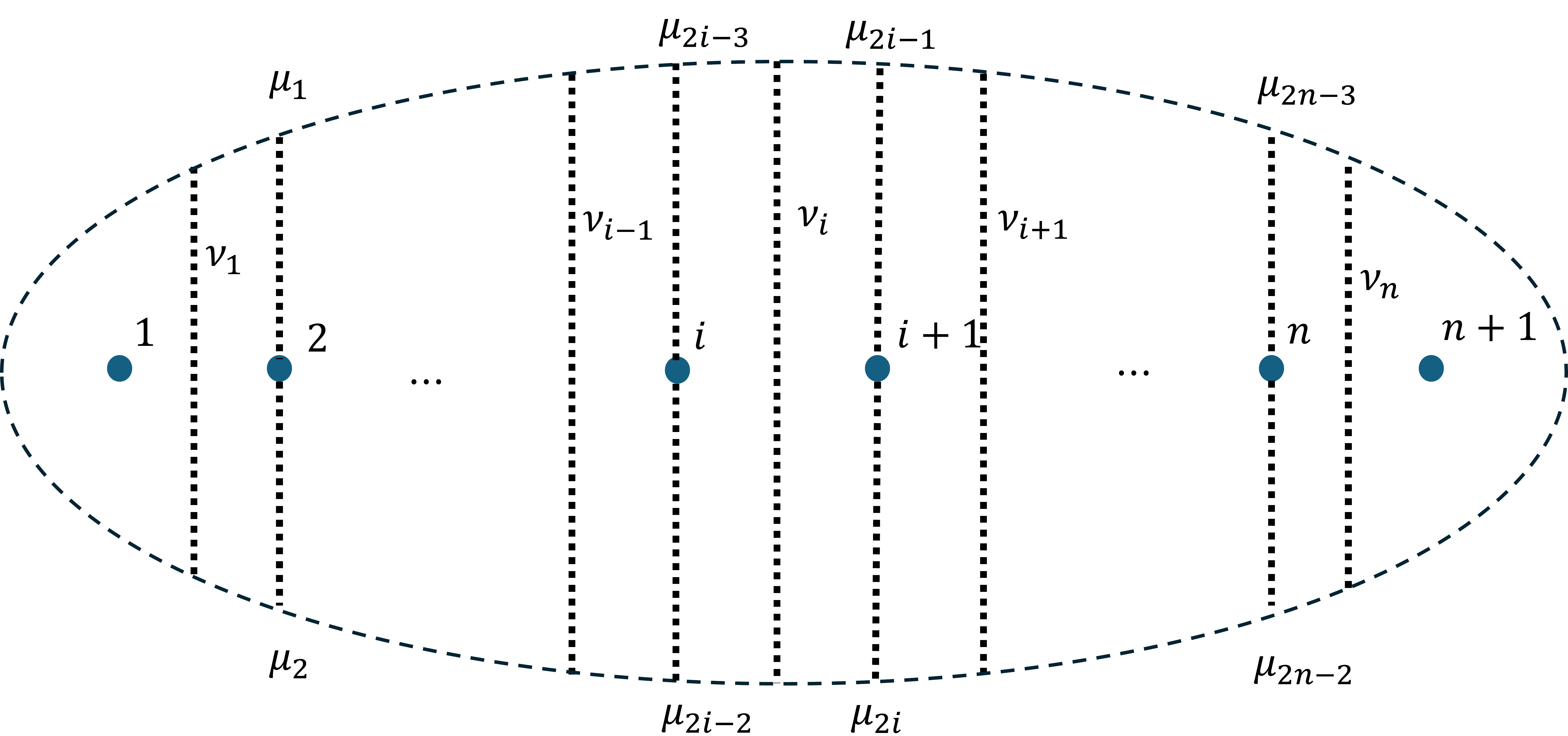}
    \caption{Plane with $(n+1)$ holes to calculate Dynnikov coordinates.
    For representation of multicurves on it by integers, the dotted lines or half lines are used to count the numbers of times multicurves intersect them.}
    \label{fig:Dynnikov_plane}
    \Description{Horizontally aligned points and vertical lines in an ellipse}
\end{figure}
\begin{figure}[t]
    \centering
    \includegraphics[width=0.5\linewidth]{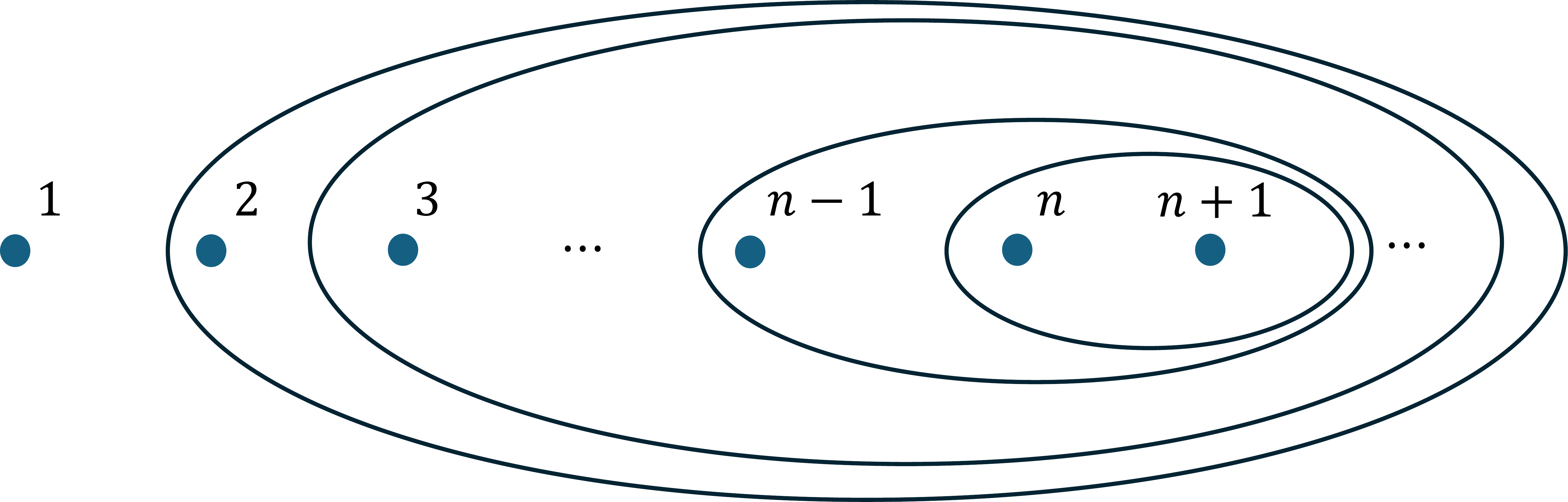}
    \caption{Initial multicurve to solve the word problem for braids with Dynnikov coordinates.}
    \label{fig:Dynnikov_u}
    \Description{Nested curves}
\end{figure}

We use Dynnikov coordinates to uniquely represent braids.
In brief, we represent the braid group as actions to tuples of integers, which can be calculated only by using addition, subtraction, maximum, and minimum.
We follow the description in \citet{thiffeault2022braids}.

Intuitively, Dynnikov coordinates represent a multicurve on a plane with aligned $(n+1)$ holes.\footnote{While $n$ holes are enough to define the action by $B_n$, this action is not faithful.} Here, ``multicurve" means a disjoint union of simple closed curves, any of which encircles at least two holes but not all of them.
For an example, see the example calculation at the end of this subsection.
Figure~\ref{fig:Dynnikov_plane} illustrates a plane with holes, where symbols attached to lines or half lines indicate the number of times the multicurve intersects them.
Holes are numbered from left to right.
For $2\leq i \leq n$, $\mu_{2i-3}$ and $\mu_{2i-2}$ correspond to the half lines extending up and down from the $i$-th hole, respectively.
For $1\leq i \leq n$, $\nu_i$ corresponds to the vertical line between the $i$-th and $(i+1)$-th holes.
When counting the numbers of intersections, we take the minimum ones, allowing for homotopical deformation of curves.

For any $1\leq i \leq n-1$, let
\begin{align*}
    a_i&:=\frac{\mu_{2i}-\mu_{2i-1}}{2},\\
    b_i&:=\frac{\nu_i-\nu_{i+1}}{2}.
\end{align*}
Then, $(a_1,\ldots,a_{n-1},b_1,\ldots,b_{n-1})$ gives a bijection between homotopy classes of multicurves and $\mathbb{Z}^{2n-2}\setminus\{\mathbf{0}\}$, that is \textit{Dynnikov coordinates}~\citep{hall2009topological}.

The braid group $B_{n+1}$ acts on the homotopy classes of multicurves by moving holes.
For an example, again, see the example calculation at the end of this subsection.
Here, we are only interested in the action of $B_n$ as a subgroup of $B_{n+1}$.
The action of $B_n$ can be written using Dynnikov coordinates as follows.
For $x\in\mathbb{Z}$, we write $x^{+}:=\max\{x,0\}$ and $x^{-}:=\min\{x,0\}$.
For $i=1,\ldots,n-1$ and $e=\pm 1$,
we can calculate $(a_1',\ldots,a_{n-1}',b_1',\ldots, b_{n-1}'):=(a_1,\ldots,a_{n-1},b_1,\ldots, b_{n-1}) \cdot \sigma_i^{e}$ as
\begin{equation}\label{eq:Dyn_update}
(a_k',b_k'):=\begin{cases}
    (-b_1+(a_1+b_1^{+})^{+}, a_1+b_1^{+}) & \text{ for } i=k=1,\,e=+1, \\
    (b_1-(b_1^{+}-a_1)^{+}, b_1^{+}-a_1) & \text{ for } i=k=1,\,e=-1, \\
    (a_{i-1}-b_{i-1}^{+}-(b_i^{+}+c)^{+}, b_i+c^{-}) & \text{ for } i>1,\,k=i-1,\,e=+1, \\
    (a_{i-1}+b_{i-1}^{+}+(b_i^{+}-d)^{+}, b_i-d^{+}) & \text{ for } i>1,\,k=i-1,\,e=-1, \\
    (a_i-b_i^{-}-(b_{i-1}^{-}-c)^{-}, b_{i-1}-c^{-}) & \text{ for } i>1,\,k=i,\,e=+1, \\
    (a_i+b_i^{-}+(b_{i-1}^{-}+d)^{-}, b_{i-1}+d^{+}) & \text{ for } i>1,\,k=i,\,e=-1, \\
    (a_k,b_k) & \text{ for } k\neq i-1,i,
\end{cases}
\end{equation}
where $c:=a_{i-1}-a_i-b_i^{+}+b_{i-1}^{-}$ and $d:=a_{i-1}-a_i+b_i^{+}-b_{i-1}^{-}$~\citep{thiffeault2022braids}.

Moreover, where $u := (a_1=0,\ldots,a_{n-1}=0, b_1=-1,\ldots,b_{n-1}=-1)\in \mathbb{Z}^{2n-2}$, which corresponds to the multicurves in Figure~\ref{fig:Dynnikov_u}, for any $\alpha, \beta \in B_n$, $u\cdot \alpha=u\cdot \beta$ if and only if $\alpha=\beta$~\citep{dynnikov2002yang,thiffeault2022braids}. Thus, we can represent an element $\alpha$ in $B_n$ by $u\cdot \alpha\in \mathbb{Z}^{2n-2}\setminus\{\mathbf{0}\}$.

\begin{figure}[t]
    \centering
    \includegraphics[width=0.9\linewidth]{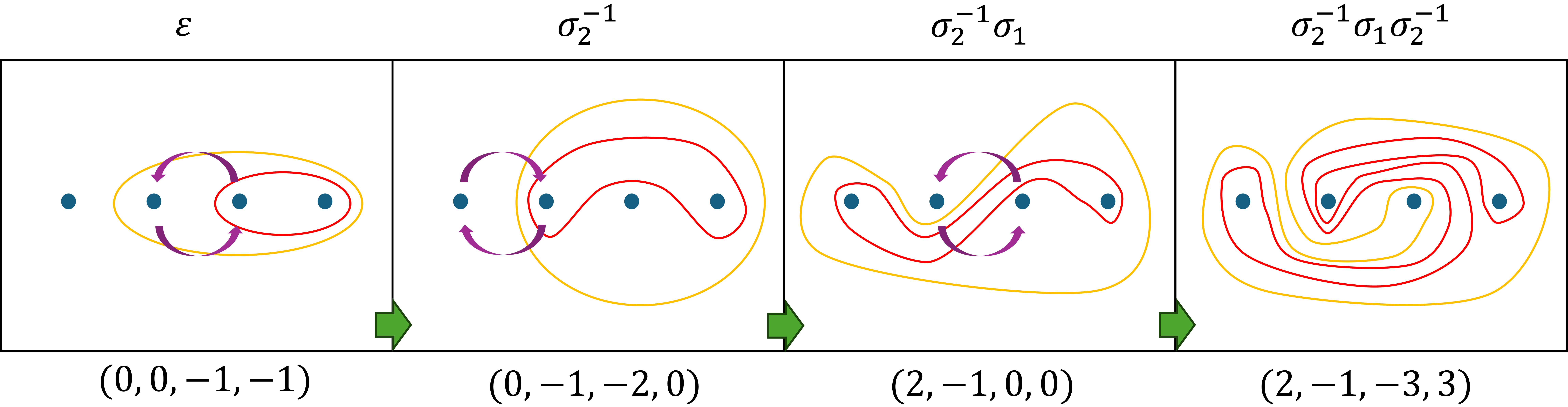}
    \caption{Action of braid $\sigma_2^{-1}\sigma_1\sigma_2^{-1}$ to multicurve and corresponding Dynnikov coordinates.
    The top shows braids,
the middle shows multicurves after these braids acted on $u$,
and the bottom shows the corresponding Dynnikov coordinates.
The Dynnikov coordinates corresponding to $\sigma_2^{-1}\sigma_1\sigma_2^{-1}$ are $(2,-1,-3,3)$.}
    \label{fig:Dynnikov_example}
    \Description{Four figures of twisted multicurves}
\end{figure}

Figure~\ref{fig:Dynnikov_example} shows examples of the action of the braid group and Dynnikov coordinates when $n=3$.

Since the values of the coordinates can be large, multiple-precision arithmetic is necessary for its implementation. The logarithmic magnitude of the coordinates is $O(l)$, where $l$ is the length of the braid word, and computational cost for one update (\ref{eq:Dyn_update}) is proportional to the logarithmic magnitude of the coordinates. Therefore, the time complexity of comparing braid words of length $l$ is $O(l^2)$, which is the best complexity among known algorithms.
Furthermore, since this calculation only requires simple arithmetic operations, it can be highly efficient in practice~\citep{dehornoy2008efficient}. See also \S~\ref{subsubsec:optimization_results}.

\subsection{Revised Prioritized Planning}\label{subsec:revised prioritized planning}
In this and the following subsections, we revert to the notation used prior to the end of \S~\ref{subsec:reduction to obstacle-free case}. Thus, $n$ denotes the number of agents and $r$ denotes the number of obstacles.

We adopt {\it revised prioritized planning} (RPP)~\citep{vcap2015prioritized} for pathfinding.
As classical prioritized planning, we fix a predetermined priority order for the agents and proceed to plan their paths one by one, avoiding collisions with the paths of previously planned agents. In RPP, in addition, the start positions of agents that have not yet been planned are also avoided, ensuring the completeness of the planning process under certain assumptions.

\begin{algorithm}[!t]
\caption{Homotopy-Aware RPP}
\label{algorithm:HRPP}
\begin{algorithmic}[1]
\State{Input a graph $G=(V,E)$ and starts and goals $(s_1, g_1),\ldots,(s_n,g_n)$.}
\State{Input the desired number $K$ of solutions}
\State{$\mathit{plans}\leftarrow \{\emptyset\}$}
\For{$i=1,\ldots,n$}
\State{$\mathit{Open},\mathit{Closed}\leftarrow \emptyset, \emptyset$}
\State{Initialize a map $D$}
\For{$\mathit{plan}\in\mathit{plans}$}
\State{Construct timed graph $G_{\mathit{plan}}=\left(V_{\mathit{plan}},E_{\mathit{plan}}\right)$}\label{line:construct_graph}
\State{Insert $(\mathit{plan},s_{\mathit{plan}},\varepsilon)$ to $\mathit{Open}$}
\State{$D[\mathit{plan},s_{\mathit{plan}},\varepsilon] \leftarrow 0$}
\EndFor
\State{$\mathit{newplans}\leftarrow\emptyset$}

\While{$\mathit{Open}\neq \emptyset$}\label{line:whileloop}
\State{Pop $(\mathit{plan}, v, w)$ from $\mathit{Open}$}\label{line:pop}
\State{Insert $(\mathit{plan}, v, w)$ to $\mathit{Closed}$}
\If{$v=g_{\mathit{plan}}$}
\State{Reconstruct path $p$ of agent $i$}
\State{Insert $\mathit{plan}\cup p$ to $\mathit{newplans}$}
\If{$|\mathit{newplan}| \geq K$}\label{line:enough plan}
\State{\textbf{break}}
\EndIf
\EndIf
\For{edge $e\in E_{\mathit{plan}}$ from $v$}
\State{$v', w', d'\leftarrow \mathrm{target}(e), \mathrm{NextBraid}(\mathit{plan},e,w), D[\mathit{plan},v,w]+\mathrm{length}(e)$}\label{line:nextbraid}
\If{$(\mathit{plan},v',w')\notin \mathit{Open}\cup\mathit{Closed}$}
\State{$D[\mathit{plan},v',w']\leftarrow d'$}
\State{Insert $(\mathit{plan},v',w')$ to $\mathit{Open}$}
\ElsIf{$D[\mathit{plan},v',w']>d'$}
\State{$D[\mathit{plan},v',w']\leftarrow d'$}
\EndIf
\EndFor
\EndWhile
\State{$\mathit{plans}\leftarrow\mathit{newplans}$}\label{line:end}
\EndFor
\State{\Return{$\mathit{plans}$}}
\end{algorithmic}
\end{algorithm}

Our method differs from conventional RPPs in the following ways:
First, we maintain multiple plans for higher-priority agents with different homotopies while planning paths one by one. 
Planning for the $i$-th agent is performed on the homotopy-augmented graph for the first $i$ agents in $\mathcal{C}_i(\mathcal{D})$.
For multiple plans for already planned agents, the planning for the next agent is performed on different graphs, but they are done in parallel.

Algorithm~\ref{algorithm:HRPP} presents the pseudocode of our homotopy-aware version of RPP, which generates homotopically distinct solutions.
The notation $\varepsilon$ denotes the Dynnikov coordinates corresponding to the empty braid.
At line~\ref{line:construct_graph}, we construct a graph $G_{\mathit{plan}}$, which is created by adding a time dimension to $G$ and removing parts colliding with $plan$, the same as \citet{silver2005cooperative}.\footnote{If continuous time is considered, you can use the method in \citet{phillips2011sipp}.}
In this construction, we also remove $s_{i+1},\ldots,s_n$ from $G_{\mathit{plan}}$.
(Strictly speaking, we allow an agent to reach one of such positions if the position is its goal, since it cannot finish otherwise.)
In the following lines, $s_{\mathit{plan}}$ and $g_{\mathit{plan}}$ denote the vertices corresponding to the start, which is $s_i$ with time $0$, and the goal, which is $g_i$ with a late enough time, respectively.
A vertex of the homotopy-augmented graph for $\mathit{plan}$ is represented by a pair $(v,w)$ of a vertex $v$ of $G_{\mathit{plan}}$ and a braid $w \in B_{r+i}$.\footnote{Strictly speaking, the positions of the first $(i-1)$ agents are also contained in a vertex of the homotopy-augmented graph in $\mathcal{C}_{i}(\mathcal{D})$. However, since these positions are determined by $\mathit{plan}$, they are omitted.}
At line~\ref{line:pop}, the popped element $(\mathit{plan}, v, w)$ is selected as that with the minimum $\mathrm{cost}(\mathit{plan})+D[\mathit{plan}, v, w]+h(v,g_{\mathit{plan}})$, where $h$ is the heuristic function.

The function $\mathrm{NextBraid}(\mathit{plan},e,w)$ at line~\ref{line:nextbraid} returns the Dynnikov coordinates updated from $w$ after the agent $i$ moves along $e$ and the agents from $1$ to $i-1$ move in accordance with $\mathit{plan}$.
It is calculated as follows. First, 
as explained in \S~\ref{subsec:word construction}, the swaps of the order of the agents' $x$-coordinates by these moves are enumerated. Second, for each swap, the Dynnikov coordinates are updated by (\ref{eq:Dyn_update}) in order.
After calculating the next braid $w'$, as in the conventional A* algorithm, the target vertex $(v',w')$ is checked to see if it has been visited.
If so, it is added.
Otherwise, its distance is updated or nothing happens.

For simplicity, we omit the details on path reconstruction.

\begin{remark}\label{remark:k-SNHP}
While we adopt the RPP approach for scalability in this paper,
we can also solve the $K$-shortest non-homotopic path planning by A* searching on the homotopy-augmented graph in $\mathcal{C}_n(\mathcal{}D)$.
\end{remark}

\subsection{Completeness}\label{subsec:completeness}
As RPP in the classical case, we can prove the completeness of the algorithm under some specific condition for problem instances. Note that this does not mean that our algorithm is practical only for such cases.

Before the completeness proposition, we state a lemma used in the proof.
\begin{lemma}
For any $1\leq k\leq n$, let $F_k: P_n\rightarrow P_{n-1}$ be the projection forgetting the $k$-th point. For $1\leq i < j \leq n$, let
\begin{equation}\label{eq:def_a}
    a_{i,j}:=\sigma_{j-1}\sigma_{j-2}\cdots\sigma_{i+1}\sigma_i^2\sigma_{i+1}^{-1}\cdots\sigma_{j-2}^{-1}\sigma_{j-1}^{-1}
    =\sigma_{i}^{-1}\sigma_{i+1}^{-1}\cdots\sigma_{j-2}^{-1}\sigma_{j-1}^2\sigma_{j-2}\cdots\sigma_{i+1}\sigma_{i} \in P_n.
\end{equation}
Then, the kernel of $F_k$ is generated by $a_{1,k},a_{2,k},\ldots,a_{k-1,k}$ and $a_{k,k+1},a_{k,k+2},\ldots,a_{k,n}$.
\end{lemma}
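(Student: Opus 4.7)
The plan is to invoke the classical Fadell--Neuwirth fibration: the map $\mathcal{C}_n(\mathbb{R}^2) \to \mathcal{C}_{n-1}(\mathbb{R}^2)$ that forgets the $k$-th coordinate is a locally trivial fiber bundle whose fiber over a point $(p_1,\ldots,\hat{p}_k,\ldots,p_n)$ is the punctured plane $\mathbb{R}^2 \setminus \{p_1,\ldots,\hat{p}_k,\ldots,p_n\}$. Because $\mathcal{C}_{n-1}(\mathbb{R}^2)$ and the $(n-1)$-punctured plane are both aspherical, the long exact sequence of homotopy groups collapses to a short exact sequence
\begin{equation*}
    1 \to F_{n-1} \to P_n \xrightarrow{F_k} P_{n-1} \to 1,
\end{equation*}
in which $F_{n-1}$, the fundamental group of the fiber, is free of rank $n-1$. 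Thus $\ker F_k$ is already known to admit $n-1$ free generators; it remains only to identify them explicitly with the elements listed.

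Next I would fix a basepoint of $\mathcal{C}_n(\mathbb{R}^2)$ in which the $n$ points lie on the real axis in index order, so that the fiber above the induced basepoint of $\mathcal{C}_{n-1}(\mathbb{R}^2)$ is $\mathbb{R}^2$ minus the positions of the strands $\{1,\ldots,n\}\setminus\{k\}$. The standard free basis of the fundamental group of this punctured plane consists of $n-1$ small loops, one encircling each remaining puncture. I would then verify that the loop encircling the $i$-th puncture (for any $i\neq k$) is represented, as an element of $P_n$, by the word in~(\ref{eq:def_a}) with the pair $(i,k)$ or $(k,i)$ placed in increasing order: the prefix $\sigma_{j-1}\sigma_{j-2}\cdots\sigma_{i+1}$ slides the $j$-th strand adjacent to the $i$-th, the central $\sigma_i^{\pm 2}$ implements a single encirclement, and the inverse suffix undoes the sliding. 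This exhibits the list $a_{1,k},\ldots,a_{k-1,k},a_{k,k+1},\ldots,a_{k,n}$ as a generating set (indeed, a free basis) of $\ker F_k$.

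The main obstacle will be making the geometric identification rigorous, namely: (i) checking that the two expressions on the right-hand side of~(\ref{eq:def_a}) really coincide in $B_n$, so that the conjugation can be performed from either direction uniformly for $i<k$ and $k<j$; and (ii) verifying that the loop represented by $a_{i,j}$, when projected to the fiber by forgetting all strands except the $k$-th (using $k\in\{i,j\}$), becomes homotopic to the standard small loop about the remaining puncture, while being sent to the identity by $F_k$. Part (i) follows by induction on $j-i$ using the braid relation $\sigma_i\sigma_{i+1}\sigma_i=\sigma_{i+1}\sigma_i\sigma_{i+1}$ together with the commutativity $\sigma_i\sigma_j=\sigma_j\sigma_i$ for $|i-j|>1$; part (ii) is a direct diagram-chase on the braid picture, noting that after deleting the $k$-th strand the conjugating prefix and suffix cancel while the central $\sigma_i^{\pm 2}$ disappears because one of its two strands has been removed.
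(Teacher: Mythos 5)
Your proof is correct, but it takes a genuinely different route from the paper. The paper treats the case $k=n$ as known (citing Rolfsen's tutorial, where it is established exactly by the fibration argument you describe) and then reduces general $k$ to it algebraically: setting $b_k:=\sigma_{n-1}\sigma_{n-2}\cdots\sigma_k$, it observes that $F_k(\alpha)=F_n(b_k\alpha b_k^{-1})$, so $\ker F_k = b_k^{-1}(\ker F_n)b_k$, and then computes $b_k^{-1}a_{i,n}b_k$ to be $a_{i,k}$ for $i<k$ and $a_{k,i+1}$ for $i\geq k$. You instead run the Fadell--Neuwirth fibration directly for the $k$-th forgetful map and identify the generators of $\pi_1$ of the fiber geometrically. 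Both are sound. The paper's route outsources all topology to the cited base case and leaves only a short conjugation computation, treating $k$ non-uniformly; yours handles all $k$ symmetrically, is self-contained modulo standard facts ($\pi_2$ of planar configuration spaces vanishing, which the paper itself invokes in its Appendix B), and yields the stronger conclusion that the listed elements are a \emph{free} basis of $\ker F_k$, not merely a generating set. The price is that the geometric identification of the small loops about the punctures with the specific words $a_{i,k}$, $a_{k,j}$ --- which you correctly flag as the main thing to make rigorous --- is the fiddly part; note also that exactness of $1\to\pi_1(\mathrm{fiber})\to P_n\to P_{n-1}\to 1$ needs only $\pi_2(\mathcal{C}_{n-1}(\mathbb{R}^2))=1$ and connectedness of the fiber, not asphericity of the fiber, though for the lemma as stated (generation, not freeness) you only need exactness at $P_n$, which holds for any fibration.
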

\begin{proof}
This is true for $k=n$~\citep{rolfsen2010tutorial}.
For arbitrary $k$, let $b_k:=\sigma_{n-1}\sigma_{n-2}\cdots\sigma_{k}\in B_n$.
Since $P_n$ is a normal subgroup, the conjunction by $b_k$ maps $P_n$ to $P_n$. Furthermore, $F_k(\alpha)=F_n(b_k\alpha b_k^{-1})$. Thus, the kernel of $F_k$ is generated by $b_k^{-1} a_{1,n} b_k,\ldots,b_k^{-1} a_{n-1,n} b_k$, which can be calculated as
\begin{equation}
    b_k^{-1}a_{i,n}b_k =
    \begin{cases}
    a_{i,k} & \text{if $i<k$} \\
    a_{k,i+1} & \text{if $i\geq k$}.
    \end{cases}
\end{equation}
\end{proof}

Intuitively, the following proposition says that our algorithm is complete when the obstacles and start and goal positions are separated enough with respect to the size of the agents.

\begin{proposition}\label{prop:completeness}
We assume the following conditions:
\begin{itemize}
\item An agent moving along the boundary of $\mathcal{D}$ never collides with obstacles or other agents staying at $s_1,\ldots, s_n$, $g_1,\ldots, g_n$, or $s_{i+1},\ldots,s_n$.
\item For any $1\leq i \leq n$, there exists a loop around $g_i$ in $\mathcal{D}$ such that it does not enclose any obstacle, other goal position, or $s_j$ with $j>i$, and an agent moving along it never collides with obstacles or other agents staying at any goal position or $s_j$ with $j>i$.
\item For any $1\leq i \leq n$, there exists a loop around $s_i$ such that it does not enclose any obstacle, other start position, or $g_j$ with $j < i$, and an agent moving along it never collides with obstacles or other agents staying at any start position or $g_j$ with $j < i$.
\end{itemize}
Then, when the roadmap $G$ is dense enough, for any homotopy class of solutions, Algorithm~\ref{algorithm:HRPP} provides a solution belonging to it for sufficiently large $K\gg0$.
\end{proposition}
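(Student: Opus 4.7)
The plan is to establish this by induction on the priority index $i$ from $1$ to $n$. The induction hypothesis is that after the $i$-th iteration of the outer for-loop of Algorithm~\ref{algorithm:HRPP}, for every homotopy class of a partial plan for agents $1,\ldots,i$ that can be extended to a complete solution, $\mathit{plans}$ contains a representative in that class, provided $K$ was taken large enough. Since the braid group for a complete solution is $B_{r+n}$ and each prefix of agents corresponds to applying strand-forgetting homomorphisms, the desired conclusion follows from the case $i=n$.

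For the inductive step, fix a target solution and let $\beta_i \in B_{r+i}$ be the braid induced by its first $i$ agents together with the $r$ static obstacle strands; let $\beta_{i-1}=F_{r+i}(\beta_i)$. By the inductive hypothesis, $\mathit{plans}$ already contains a plan $P$ realizing $\beta_{i-1}$, and I need to produce a path for agent $i$ extending $P$ whose induced braid is $\beta_i$. First, I would invoke the classical completeness argument of revised prioritized planning: condition~1 allows agent $i$ to travel along the boundary of $\mathcal{D}$ and wait there as long as necessary until the finite-horizon plans of the previously planned agents have terminated, while conditions~2 and~3 supply the small loops around $s_i$ and $g_i$ needed to depart from $s_i$ and arrive at $g_i$ without colliding with any agent fixed at $s_{i+1},\ldots,s_n$ or $g_1,\ldots,g_{i-1}$. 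This yields some conflict-free path $p$ for agent $i$, inducing a braid $\beta_i'$.

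Next I would use the lemma. Because any two paths from $s_i$ to $g_i$ agree as permutations on the strands, $\beta_i'\beta_i^{-1}$ lies in $P_{r+i}\cap \ker F_{r+i}$, which by the lemma is generated by $a_{1,r+i},\ldots,a_{r+i-1,r+i}$. Each $a_{j,r+i}^{\pm 1}$ is geometrically realized by a single counterclockwise or clockwise loop of agent $i$ around the $j$-th static point, which is an obstacle $o_j$ if $j\leq r$ and otherwise the fixed position $g_{j-r}$ or $s_{j-r}$ of some other agent. The insertion of such a loop modifies $\beta_i'$ by the corresponding generator, so it suffices to show that every generator loop can be inserted into $p$ without introducing collisions. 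Using condition~1 to move along the boundary between insertions, agent $i$ travels to the vicinity of the target static point, traces a sufficiently small simple loop around it (one that encloses no other static point; such a loop exists because the static points are isolated and can be approximated on $G$ when $G$ is dense enough), and returns to the boundary. Performing these detours after all dynamic agents have already reached their goals ensures the insertions are conflict-free with $P$.

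The main obstacle I expect is the last, quantitative step: making precise the claim that for every fixed point $j$ one can simultaneously draw, and then approximate on $G$, a loop around $j$ that is small enough both to miss every other static point and to be traversable without collision with any agent waiting at $s_{i+1},\ldots,s_n$ or $g_1,\ldots,g_{i-1}$. This is exactly where the ``dense enough'' hypothesis on the roadmap is needed, and where the three geometric conditions interact nontrivially; the rest of the argument is an essentially combinatorial reduction of the completeness question to the kernel generation provided by the lemma.
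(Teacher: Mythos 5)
Your overall architecture matches the paper's proof: induction over the priority order, using the RPP-style assumptions to obtain \emph{some} collision-free extension for agent $i$, observing that the discrepancy with the target braid lies in the kernel of a strand-forgetting map, invoking the lemma to generate that kernel, and realizing the generators by appended loops of agent $i$ that are then approximated on a dense roadmap. Two points, however, deserve attention; the second is a genuine gap.

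First, a minor indexing issue: the strand to be forgotten is not the $(r+i)$-th one but the $k$-th one, where $c_k=g_i$ in the $x$-sorted order of $o_1,\ldots,o_r,g_1,\ldots,g_i$ (the strand index is determined by the $x$-order of the endpoints, not by the agent's priority). This is exactly why the lemma is stated for arbitrary $k$ rather than only for the classically known case $k=n$; your argument should use $\ker F_k$ and the generators $a_{1,k},\ldots,a_{k-1,k},a_{k,k+1},\ldots,a_{k,r+i}$.

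Second, and more substantively: you assert that ``a single counterclockwise or clockwise loop of agent $i$ around the $j$-th static point'' realizes $a_{j,k}^{\pm 1}$, with the connecting path (out to the boundary, over to the vicinity of $c_j$, and back) left unspecified. But the braid class of such a detour is a \emph{conjugate} $g\,a_{j,k}^{\pm1}g^{-1}$ whose conjugator $g$ depends on which side the agent passes each intermediate puncture on the way to $c_j$. Since $\ker F_k$ is a free group, an arbitrary choice of one conjugate per puncture need not generate it (e.g., $\{yxy^{-1},xyx^{-1}\}$ does not generate the free group on $x,y$, as one sees by mapping to $S_3$). So it does not suffice to insert ``some'' small loop around each static point; you must exhibit connecting arcs whose resulting braids are \emph{exactly} the generators from the lemma. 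This is where the bulk of the paper's proof lives: the agent moves up, along the outer boundary, and down to $c_j$, avoiding each intermediate point on a prescribed side determined by comparing $x$-coordinates with $c_k$ (respectively $c_j$), and the resulting word is computed explicitly to be $\sigma_{k-1}\cdots\sigma_{j+1}\sigma_j^{\pm2}\sigma_{j+1}^{-1}\cdots\sigma_{k-1}^{-1}=a_{j,k}^{\pm1}$ (and symmetrically for $j>k$). You correctly identify the roadmap-density approximation as needing care, but that step is comparatively routine; the step your proposal elides is the verification that the collision-free detours permitted by the three geometric assumptions can be routed so as to hit the generators on the nose.
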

\begin{proof}
We consider a problem instance $(s_1,g_1),\ldots,(s_n,g_n)$ satisfying the assumption and a braid $\alpha\in B_{r+n}$ corresponding to one of its solutions. For $0\leq i \leq n$, let $\alpha_i\in B_{r+i}$ be the braid corresponding to the paths for the first $i$ agents in $\alpha$.
We prove that our method generates a solution $p_i$ in the homotopy class corresponding to $\alpha_i$ for sufficiently large $K\gg 0$ by induction for $i$. When $i=0$, this is clear because no agent is considered, and $\alpha_0$ is the unit.

For $i\geq 1$, we assume that the plan $p_{i-1}$ with the homotopy corresponding to $\alpha_{i-1}$ is contained in $\mathit{plans}$ in Algorithm~\ref{algorithm:HRPP}.
Any reachable vertex in the homotopy-augmented graph can be found by repeating loop of Line~\ref{line:whileloop} a sufficient number of times, since the number of vertices in the homotopy-augmented graph at distances less a given value is finite.
Therefore, it suffices to show that there exists a plan $p_i$ with the homotopy corresponding to $\alpha_i$ on the basis of $p_{i-1}$.

Under the assumption of the problem instance, the $i$-th agent can reach its goal after agents from $1$ to $i-1$ have finished while avoiding the goal positions $g_1,\ldots,g_{i-1}$ and the start positions $s_{i+1},\ldots,s_n$. Therefore, there exists a plan $p_i'$ for the first $i$ agents on the basis of $p_{i-1}$. Let $\beta_i\in B_{r+i}$ be the braid corresponding to $p_i'$.

To construct the desired plan, we need to add the moves of the $i$-th agent after $p_i'$.
These moves must correspond to the braid $\beta_i^{-1}\alpha_i \in P_{r+i}$.
Let $c_1,\ldots,c_{r+i}$ be $o_1,\ldots,o_k, g_1,\ldots,g_i$ sorted by their $x$-coordinates, and suppose $g_i=c_k$ ($1\leq k \leq r+i$).
Since $\alpha_i$ and $\beta_i$ coincide when the $i$-th agent is forgotten, $\beta_i^{-1}\alpha_i$ is in the kernel of $F_{k}$, which is generated by $a_{1,k},\ldots,a_{k-1,k},a_{k,k+1},\ldots,a_{k,r+i}$ as previously proved.

Thus, it is sufficient to demonstrate that, after all $i$ agents have reached their goals, the $i$-th agent can move to construct any of $a_{1,k},\ldots,a_{k-1,k},a_{k,k+1},\ldots,a_{k,r+i}$ and their inverses, while avoiding $g_1,\ldots,g_{i-1}$ and $s_{i+1},\ldots,s_n$.

For $1\leq j<k$, the agent first moves up (in the positive direction of the $y$-axis) avoiding obstacles, $g_1,\ldots,g_{i-1}$, and $s_{i+1},\ldots,s_n$ by moving partially along the boundaries or the assumed loops. When avoiding the obstacle $O_l$, if the $x$ coordinate of $o_l$ is smaller than that of $c_k$, it avoids to the right (the positive side of the $x$-axis); otherwise, it avoids to the left. This is the same for $g_1,\ldots,g_{i-1}$.
Then, the agent moves to the left until it reaches the same $x$-coordinate as $c_j$ along the outside boundary and moves down to $c_j$ avoiding obstacles and other start or goal positions. When going down as well as going up, the agents avoid obstacles or goal positions represented by $c_l$ to the right (from our point of view) if the $x$-coordinate of $c_l$ is smaller than that of $c_j$; otherwise, it avoids to the left.
It circumvents $c_j$ counterclockwise or clockwise and returns to $c_k$ using the same path.
The braid for such a loop is given by:
\begin{equation}
\sigma_{k-1}\cdots\sigma_{j+1}\sigma_j^{\pm 2}\sigma_{j+1}^{-1}\cdots\sigma_{k-1}^{-1}=a_{j,k}^{\pm 1},
\end{equation}
where the sign depends on the direction for moving around $c_j$.
Similarly, for $k<j\leq n$,
the agent follows the path described above in reverse order to $c_j$, around $c_j$, and back. The braid is given by:
\begin{equation}
\sigma_{k}^{-1}\ldots\sigma_{j-2}^{-1}\sigma_{j-1}^{\pm 2}\sigma_{j-2}\ldots\sigma_{k}=a_{k,j}^{\pm 1}.
\end{equation}
Figure~\ref{fig:paths} illustrates examples of such paths.

When the roadmap is dense enough, these paths can be approximated by paths on the graph.
\begin{figure}[t]
\centering
    \begin{subfigure}{0.45\columnwidth}
        \includegraphics[width=\textwidth]{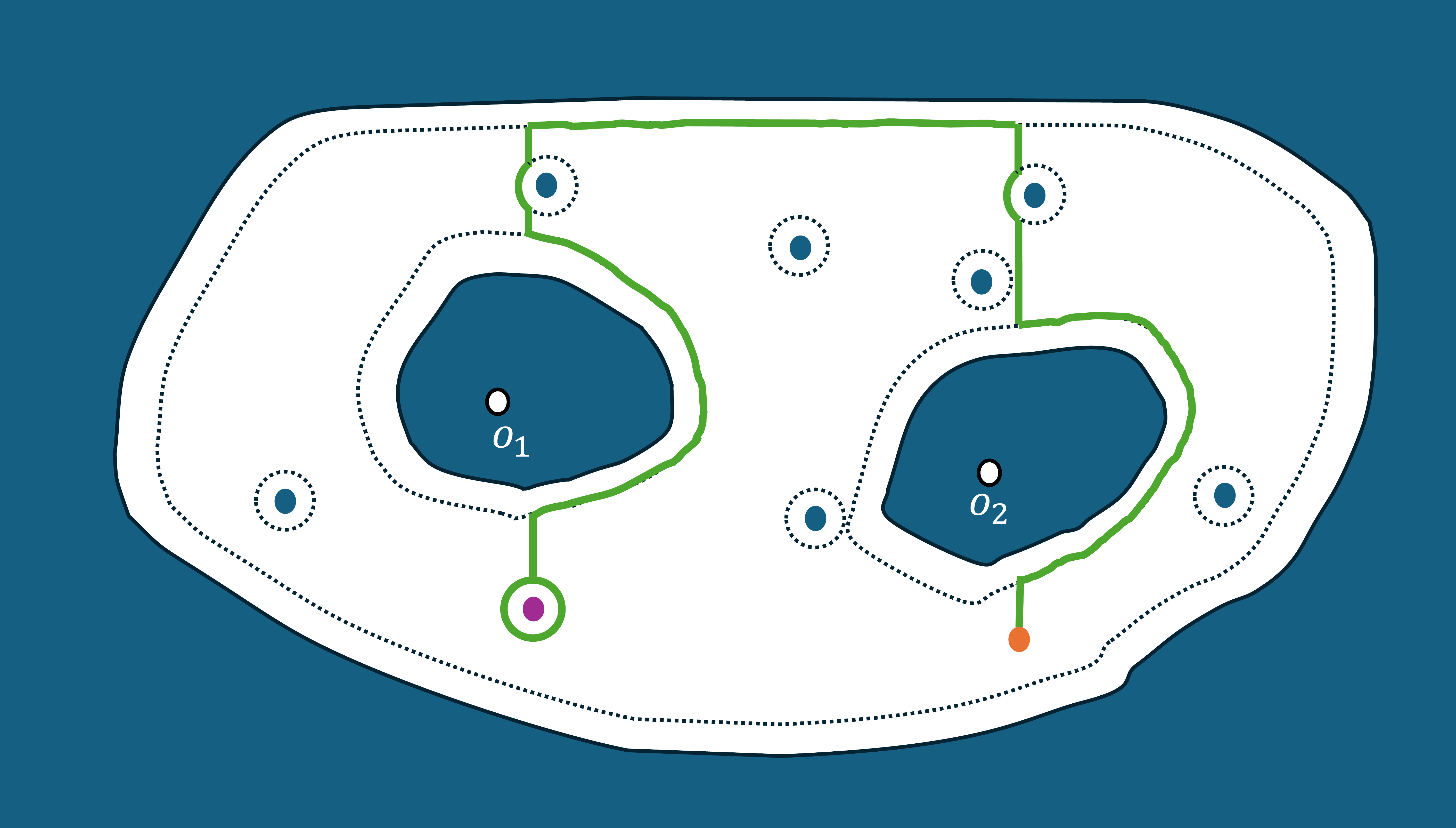}
        \caption{$j<k$}
    \end{subfigure}    
    \hspace{10pt}
    \begin{subfigure}{0.45\columnwidth}
        \includegraphics[width=\textwidth]{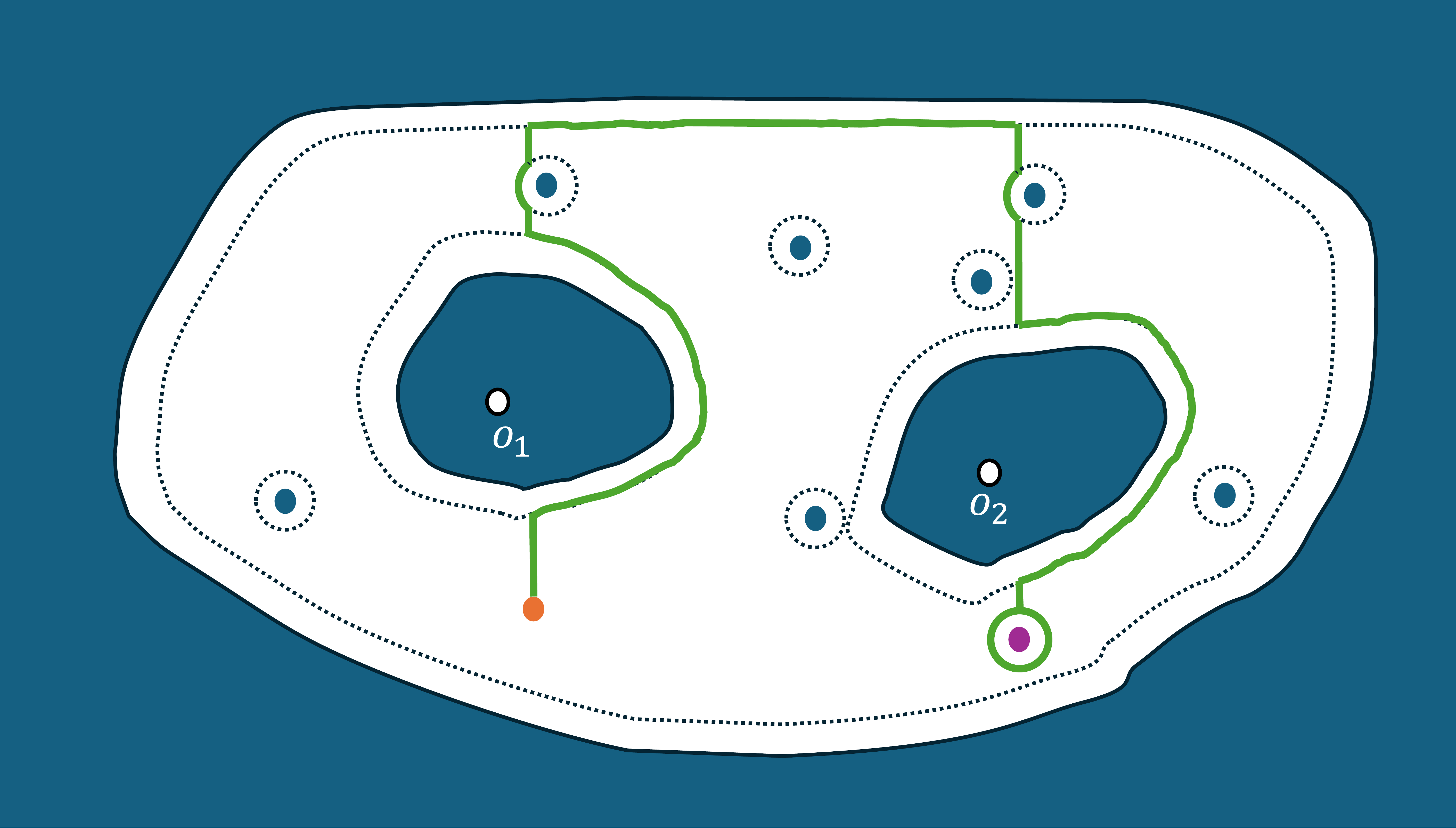}
        \caption{$k<j$}
    \end{subfigure}
\caption{Paths with homotopy corresponding to $a_{j,k}, a_{k,j}$ or its inverses. The circle enclosed by the path means $g_i=c_j$ and the circle at the endpoint of the path means $c_k$. Other circles mean other $c_1,\ldots,c_i$ or $s_{i+1},\ldots,s_n$.}
\Description{Two figures of a field with a path}
\label{fig:paths}
\end{figure}
\end{proof}

In particular, in the case of the grid, by the construction of paths in the proof, the following corollary holds.
To facilitate the word construction method in grid environments, we introduce a virtual slightly inclined $x$-axis, such that for any two grid cells $(i,j)$ and $(k,l)$, cell $(i,j)$ has a smaller x-coordinate than cell $(k,l)$ if and only if $i<k$ or $i=k, j<l$.

\begin{corollary}\label{prop:grid}
We consider cases where $\mathcal{D}$ is a region composed of grids, and $G$ is a four-connective grid graph.
We assume the following conditions:
\begin{itemize}
\item No start grid or goal grid is adjacent to obstacle grids vertically, horizontally, or diagonally.
\item No two start grids or two goal grids are adjacent vertically, horizontally, or diagonally.
\item For any $1\leq i < j \leq n$, $g_i$ and $s_j$ are not adjacent vertically, horizontally, or diagonally.
\end{itemize}
Then, for any homotopy class of solutions, Algorithm~\ref{algorithm:HRPP} generates a solution belonging to it for sufficiently large $K \gg 0$.
\end{corollary}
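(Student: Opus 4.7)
The plan is to reduce Corollary~\ref{prop:grid} to Proposition~\ref{prop:completeness}: I would verify that the three geometric conditions of the proposition hold under the grid-specific assumptions of the corollary, and observe that the explicit paths constructed inside the proof of the proposition can be realized directly on the four-connective grid, so that no further ``dense enough'' argument is needed.

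For the first condition of the proposition, I would treat cells outside $\mathcal{D}$ (both the obstacle cells $O_1,\ldots,O_r$ and the cells outside $\mathcal{D}_0$) as \emph{blocked}. Condition 1 of the corollary says that no start or goal cell is 8-adjacent to an obstacle cell; extending ``obstacle'' to also cover cells outside $\mathcal{D}_0$, every cell adjacent to an inner or outer boundary of $\mathcal{D}$ is therefore free of every $s_k$ and $g_k$. An agent moving through such boundary-adjacent cells thus never coincides with an agent stationed at any start or goal, establishing the first condition.

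For the second and third conditions, for each $g_i$ I would exhibit the concrete eight-cell loop obtained by traversing the four orthogonal and four diagonal neighbors of $g_i$ in cyclic four-connective order. Conditions 1--3 together guarantee that none of these eight cells is an obstacle, another goal $g_j$, or a start $s_j$ with $j>i$; since starts $s_j$ with $j\leq i$ are no longer occupied once the first $i$ agents have reached their goals, this loop avoids all relevant agents. Symmetrically, an eight-cell loop around $s_i$, together with condition 3 and the fact that goals $g_j$ with $j\geq i$ are not yet occupied at the relevant stage, yields the third condition.

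Finally, I would check that the auxiliary paths built in the proof of Proposition~\ref{prop:completeness} -- vertical traversals, horizontal traversals along the outer boundary, and small loops around the target cells $c_j$ -- are all realizable by four-connective moves on the grid; straight segments are immediate, and the small loops are the eight-cell loops above. The slightly inclined virtual $x$-axis introduced just before the statement supplies the strict $x$-ordering of cells demanded by the word construction of \S~\ref{subsec:word construction}. The one step that really requires care, and is the main obstacle, is that a four-connective loop around a single cell must pass through all eight surrounding cells (the diagonal neighbors are needed to bridge consecutive orthogonal neighbors), so the adjacency hypotheses must free every one of these eight cells from being blocked or from hosting an agent to be avoided; conditions 1--3 are tailored exactly to this requirement, and the corollary then follows from Proposition~\ref{prop:completeness}.
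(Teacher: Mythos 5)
Your proposal is correct and takes essentially the same route as the paper, which offers no separate argument beyond the remark that the corollary follows ``by the construction of paths in the proof'' of Proposition~\ref{prop:completeness}. You simply make explicit what the paper leaves implicit: the three adjacency hypotheses are exactly what is needed so that the boundary traversals and the eight-cell four-connective loops realizing the generators $a_{j,k}^{\pm 1}$ exist on the grid, replacing the ``dense enough roadmap'' clause.
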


\begin{remark}
Proposition~\ref{prop:completeness} states that, if the assumptions are met, then no homotopy classes are excluded in principle. However, the value of $K$ required to generate a solution belonging to a specific homotopy class may be large.
To find a solution with the homotopy corresponding to a given braid, Algorithm~\ref{algorithm:HRPP} can be modified as follows.
Let $\alpha \in B_{r+n}$ be the given braid and let $\alpha_i \in B_{r+i}$ be the image of $\alpha$ under the projection forgetting all agents after the $i$-th one.
Unlike Algorithm~\ref{algorithm:HRPP}, we search for only one plan when planning for each agent, as in conventional RPP.
For the $i$-th agent, we search for a path to the vertex $(g_{\mathit{plan}}, \alpha_i)$ in the homotopy-augmented graph.
From the proof of Proposition~\ref{prop:completeness}, it is derived that, under the same assumptions as in the Proposition~\ref{prop:completeness}, this algorithm can find a desired solution.
\end{remark}

\section{Experiments}
\begin{figure*}[t]
\centering
\begin{minipage}{0.74\textwidth}
\begin{subfigure}{0.30\textwidth}
    \includegraphics[width=\columnwidth]{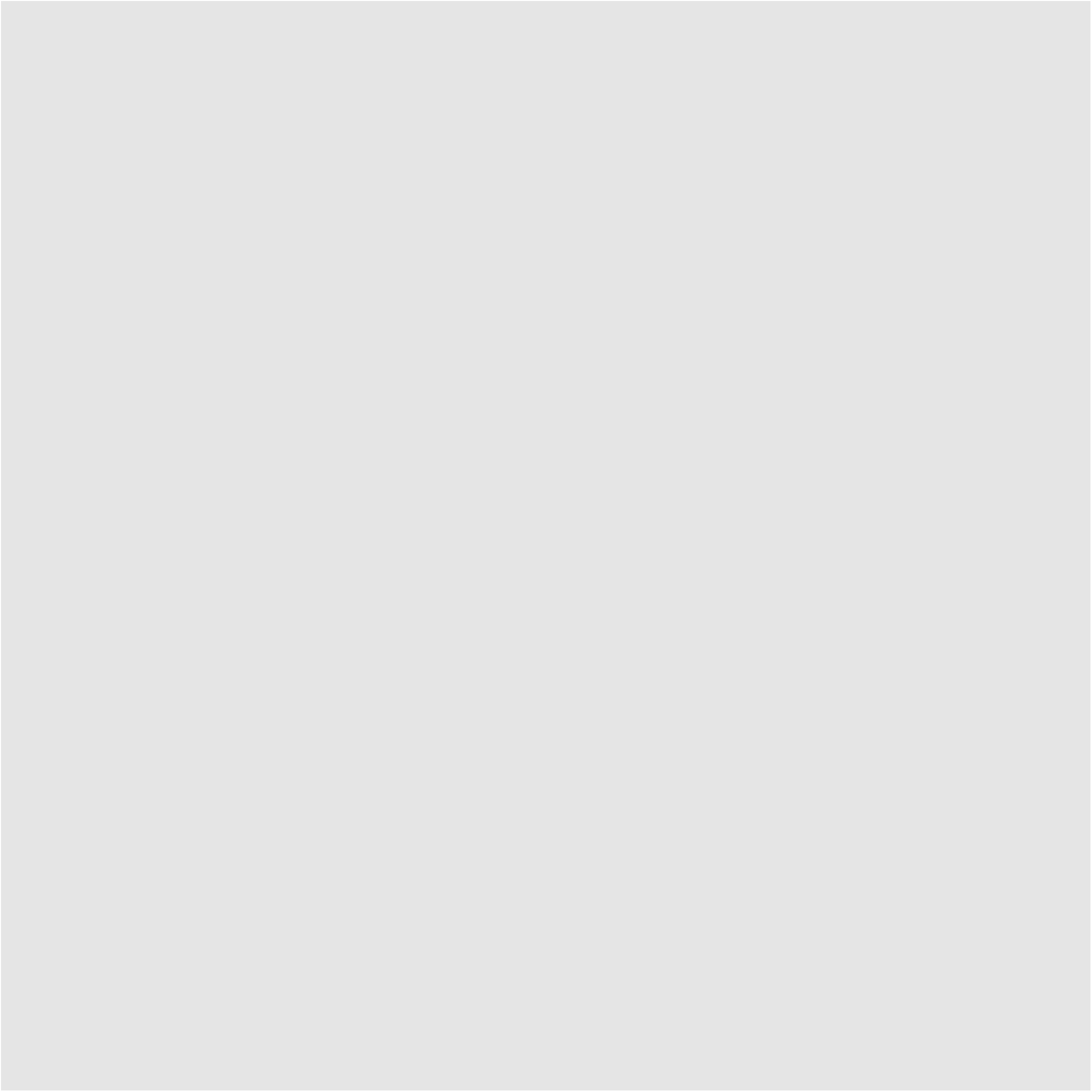}
\caption{\texttt{empty-48-48}}
\end{subfigure}
\begin{subfigure}{0.30\textwidth}
    \includegraphics[width=\columnwidth]{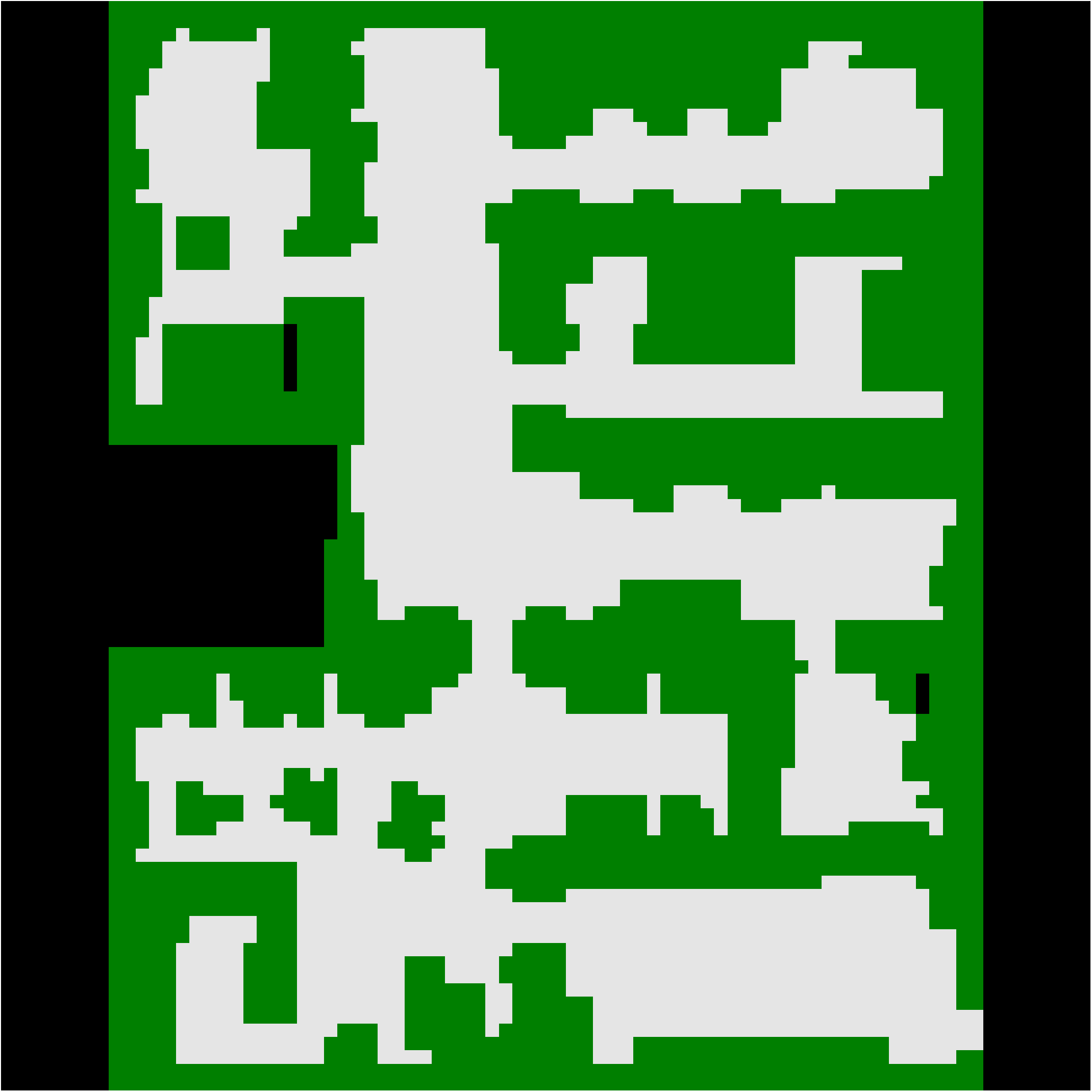}
\caption{\texttt{den312d}}
\end{subfigure}
\begin{subfigure}{0.30\textwidth}
    \includegraphics[width=\columnwidth]{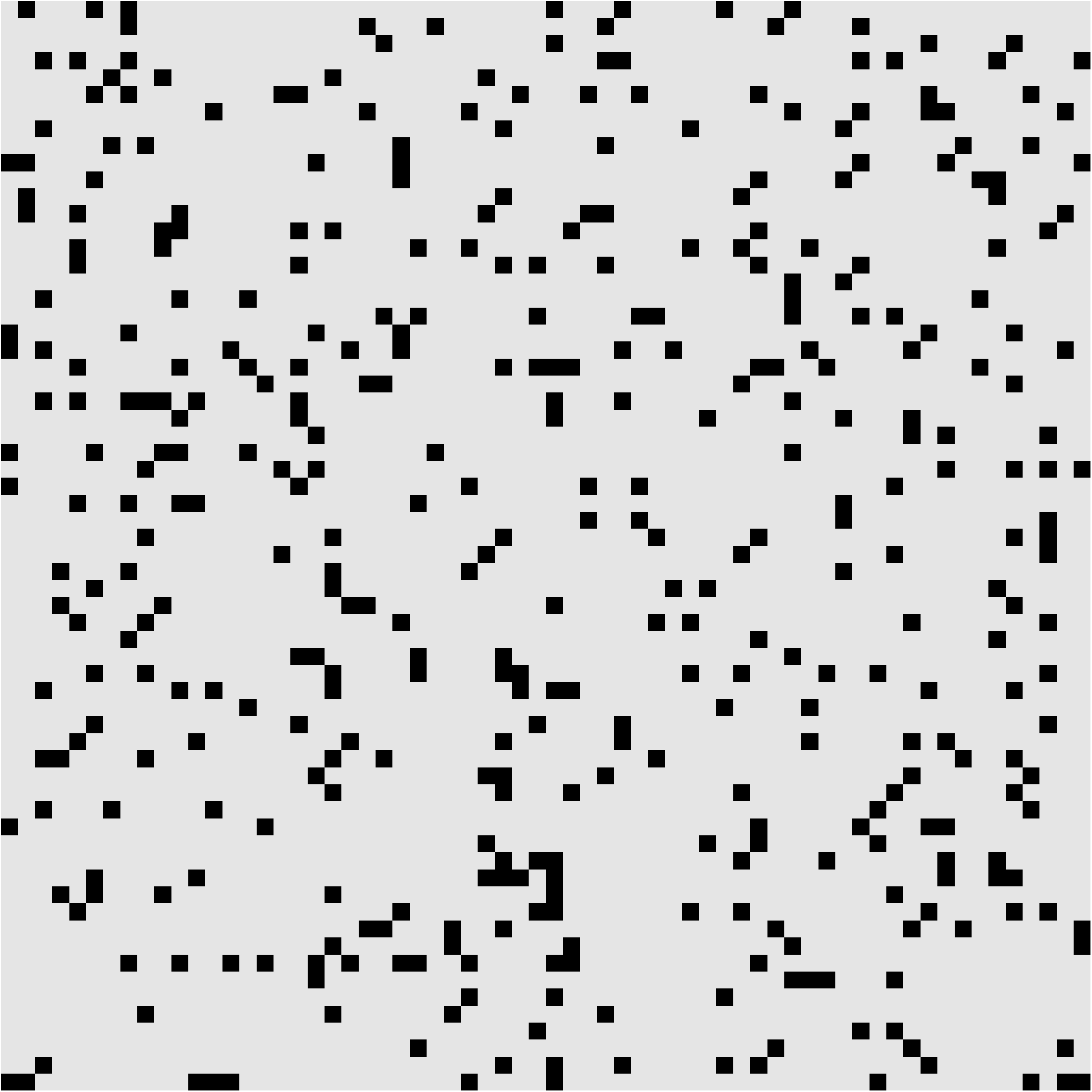}
\caption{\texttt{random-64-64-10}}
\end{subfigure}
\caption{Maps from Moving AI Benchmark for evaluation of runtime.}
\label{fig:maps}
\Description{The same as in movingai.com}
\end{minipage}
\begin{minipage}{0.24\textwidth}
    \includegraphics[width=\columnwidth]{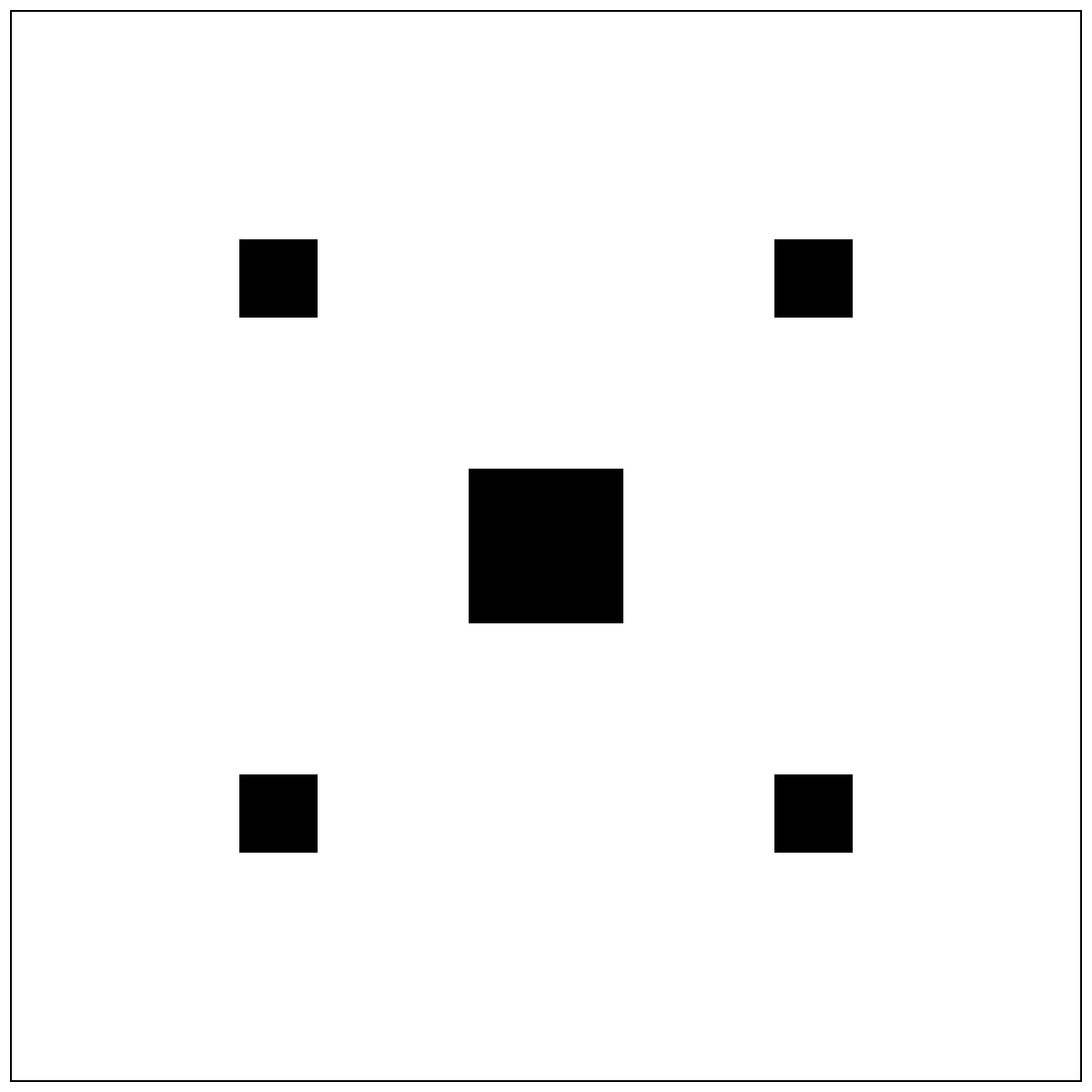}
\caption{Grid map with five obstacles for optimization experiment.}
\label{fig:obstacle-14-14}
\end{minipage}
\Description{One large square obstacle in center, four small square obstacles between the center and each corners}
\end{figure*}

We conducted two experiments. In the first experiment, we measured the runtime of our method to assess its scalability, comparing it with that of a method using the Dehornoy order and the handle reduction algorithm~\citep{dehornoy1997fast} instead of Dynnikov coordinates.
In the second experiment, our focus was to demonstrate the effectiveness of generating homotopically distinct coarse solutions for planning low-cost trajectories. We demonstrate how our approach can lead to improved trajectories by comparing the results obtained after optimization.
The codes used for these experiments are available at \url{https://github.com/omron-sinicx/homotopy-aware-MAPP}.

\subsection{Implementation}
In our implementation, we precomputed the minimum distance to $g_{\mathit{plan}}$ for all vertices of $G_{\mathit{plan}}$ and adopted it as the heuristic function $h$.
Our implementation was conducted in C++, and we used the priority queue and map data structures from the standard library to manage $\mathit{Open}$, $\mathit{Closed}$, and $D$.
Dynnikov coordinates were implemented by using the GMP library.

We used Intel(R) Xeon(R) Gold 6338 CPUs @ 2.00 GHz for these experiments.

\subsection{Evaluation of Runtime}\label{subsec:eval_rantime}


We measured runtime of our method (\textbf{Dyn}) with the following two settings:

\begin{itemize}
\item[(A)] To observe the increase in runtime with the number of agents, we ran our algorithm with $n=500$ and $K=100$, where $n$ is the number of agents and $K$ is the number of solutions to find, measuring the runtime from the beginning after each agent's planning process was completed (line~\ref{line:end} in Algorithm~\ref{algorithm:HRPP}).
\item[(B)] To observe the increase in runtime with the value of $K$, we ran our algorithm with $n=30$ and $K=1,2,3,10,3,100$.
\end{itemize}

\subsubsection{Problem Instances}
We evaluated the runtime performance using random instances of the multi-agent pathfinding problem.
We used the three grid maps, \texttt{empty-48-48}, \texttt{den312d}, and \texttt{random-64-64-10}, illustrated in Figure~\ref{fig:maps}, from Moving AI MAPF Benchmarks~\citep{stern2019multi} as environments.
The sizes of the maps were $48\times48$, $81\times65$, and $64\times64$, respectively.
The numbers of the connected components of obstacles ($r$ in the previous section) were $0$, $4$, and $241$, respectively.
For each of the experiments (A) and (B), 
we generated $10$ instances on each map.
To ensure at least some solutions can be found, we imposed a condition on problem instances that guaranteed that the classical RPP would find a solution, i.e., for any $1\leq i < n$, the agent $i$ can reach the goal without visiting $s_{i+1},\ldots,s_n$ or $g_1\ldots, g_{i-1}$.

\subsubsection{Baseline}
We also recorded the maximum absolute value of the Dynnikov coordinates for the first generated solution at the same time.
For comparison, we also ran a method (\textbf{HR}) that is the same with ours except that it uses the Dehornoy order~\citep{dehornoy1994braid} to maintain braids and the handle-reduction algorithm~\citep{dehornoy1997fast} to determine the order.
This algorithm has long been known as a practically efficient method to solve the word problem for braids, while its theoretical complexity remains unknown~\citep{dehornoy2008ordering,dehornoy2008efficient}.
More precisely, we used the ``FullHRed" algorithm as a specific strategy for the handle reduction method.

\subsubsection{Results and Discussion}

\begin{figure*}[t]
    \includegraphics[width=\columnwidth]{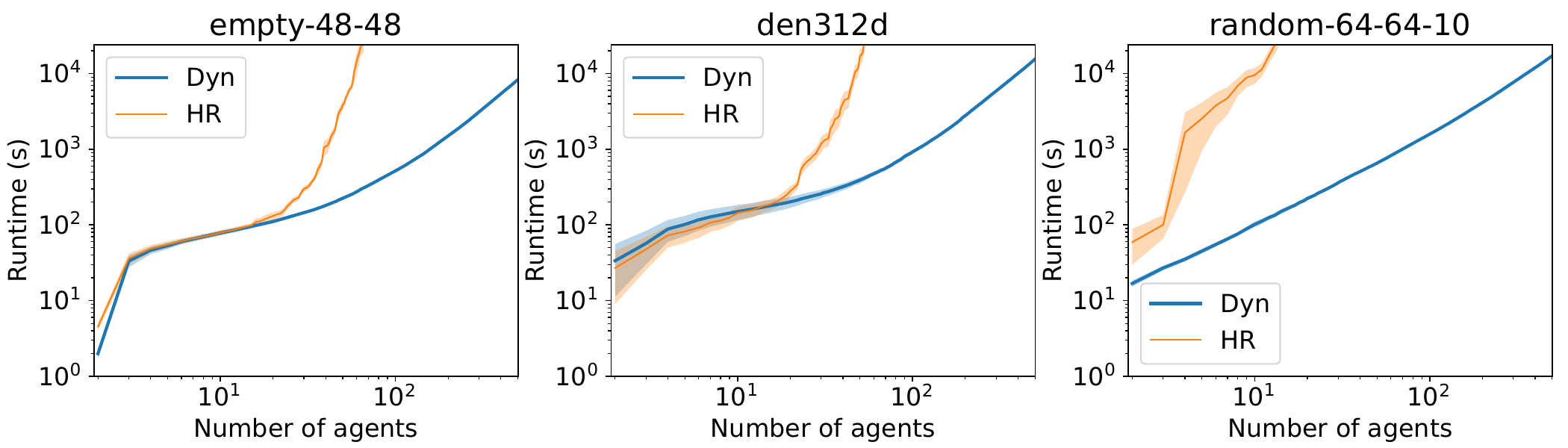}
    \caption{Log-log plots of runtime with respect to the number of agents for our method and the baseline method using the Dehornoy order. Colored areas represent standard errors.}
    \label{fig:log-log plots}
    \Description{Dyn wins. Runtime increased more rapidly in HR that in Dyn.}
\end{figure*}

\begin{figure*}[t]
\centering
\begin{minipage}{0.48\textwidth}
    \includegraphics[width=\columnwidth]{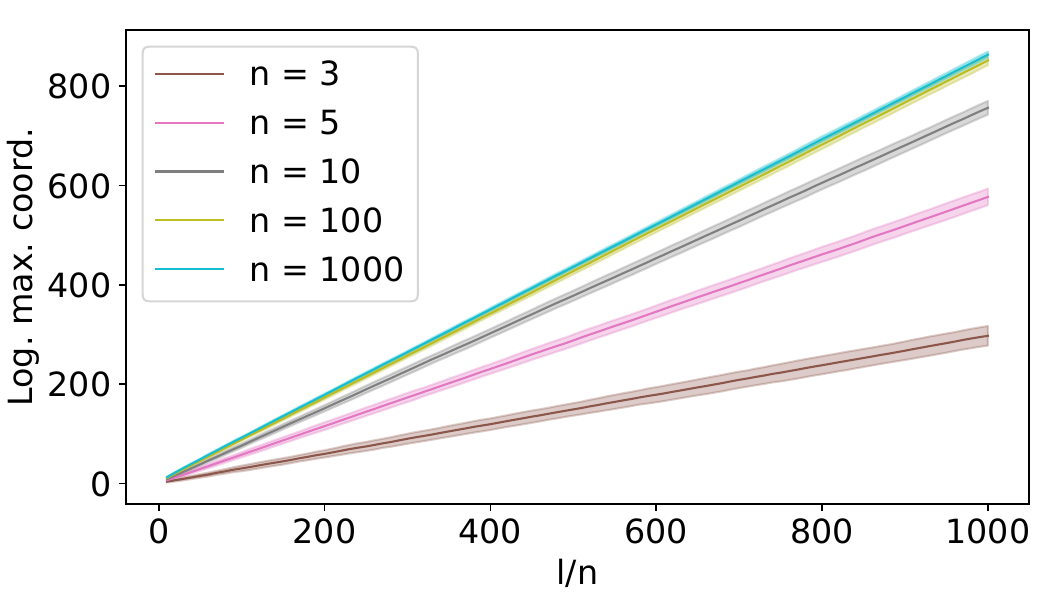}
\caption{Logarithms of maximum absolute values of coordinates for random braids with respect to lengths of braid words divided by the number $n$ of agents, for various values of $n$.  Colored areas represent standard derivations.}
\label{fig:random_braids}
\end{minipage}
\hspace{0.02\textwidth}
\begin{minipage}{0.48\textwidth}
    \includegraphics[width=\columnwidth]{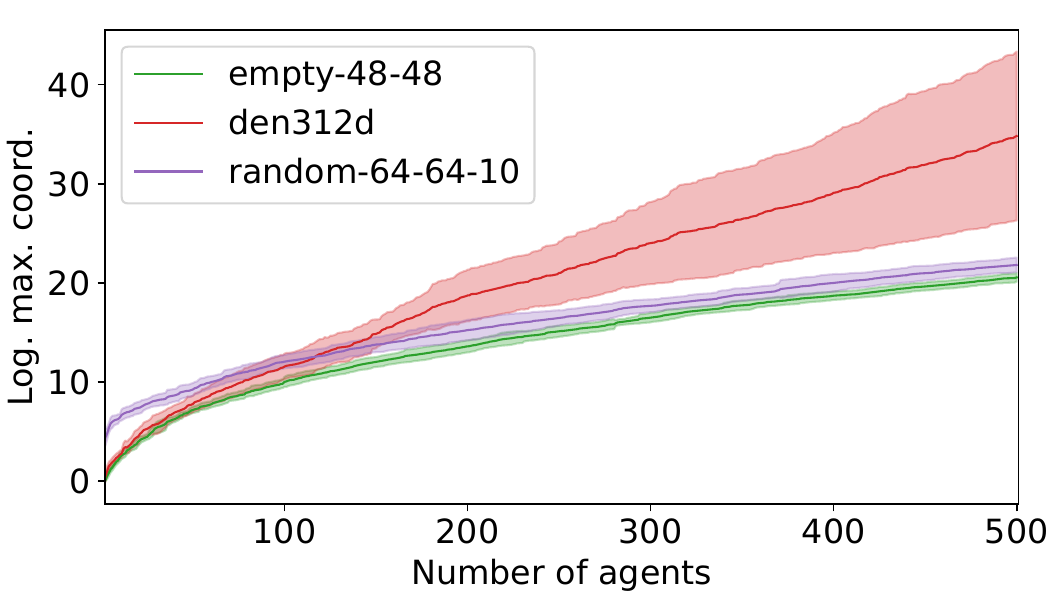}
\caption{Logarithms of maximum absolute values of coordinates in experiments with respect to the number of agents, for each environment map. Colored areas represent standard derivations.}
\label{fig:max_coord}
\Description{Results as discussed in the text}
\end{minipage}
\end{figure*}

Figure~\ref{fig:log-log plots} illustrates the results of the runtimes in the experiment (A), which show that the runtime order of with respect to the number $n$ of agents was smaller for our method than for \textbf{HR}.
Moreover, while the runtime increased significantly in the case with many obstacles (\texttt{random-64-64-10}) compared with the other cases in \textbf{HR}, it did not change much in our method.

In the case of the empty map, the runtime of \textbf{HR} was approximately $\Theta(n^5)$, which means that the runtime for the planning for the $n$-th agent was $\Theta(n^4)$.
This is expected when the handle reduction is the bottleneck because the lengths $l$ of braid words are expected to be roughly $\Theta(n^2)$ and the time complexity of the handle reduction method is conjectured to be $\Theta(l^2)$~\citep{dehornoy2008ordering}.

On the other hand, the runtime order of our method seemed to be roughly $\Theta(n^2)$, which was smaller than the expected order explained below for calculating Dynnikov coordinates, which means that this calculation was not a bottleneck at the scale of our experiment.
Indeed, when we profiled the performance of a single run by using gprof, comparison and update of Dynnikov coordinates accounted for $7.2\%$ and $12.3\%$ of runtime, respectively, even in \texttt{random-64-64-10}.

As mentioned before, the complexity of one update calculation (\ref{eq:Dyn_update}) was $\Theta(\log X)$, where $X$ is the absolute value of the coordinates. While $\log X$ is bounded as $O(l)$, where $l$ is a length of the braid word, for arbitrary words, it is expected to be $\Theta(l/n)$, where $n$ is the number of strands, for random words.
To confirm this, we generated $100$ random braid words with a length of $1000n$ for each value of $n=3,5,10,100,1000$ and calculated the maximum absolute values of the Dynnikov coordinates. The results shown in Figure~\ref{fig:random_braids} demonstrate that the expected value of $\log X$ is linear to $l/n$ and that the coefficient is independent of $n$ when $n$ is large enough. 

In our method, since the expected lengths of braid words for the planning of the $n$-th agent is roughly $\Theta(n^2)$, $\log X$ is expected to be $\Theta(n)$, which is consistent with the plot of the logarithms of the maximum absolute values of coordinates with respect to $n$ as shown in Figure~\ref{fig:max_coord} when $n$ is enough large.
On the other hand, this figure also shows that the coefficients vary depending on the shapes of maps and are larger for complex maps such as \texttt{den312d}.
The number of update calculations for the planning for the $n$-th agent is roughly estimated to be $\Theta(n^2)$, so the runtime for the planning for the $n$-th agent is expected to be $\Theta(n^3)$.
Thus, the total runtime for calculating Dynnikov coordinates for the planning of $n$ agents is expected to be $\Theta(n^4)$. When the number of agents is much larger, this cost will dominate.

\begin{figure*}[t]
    \includegraphics[width=\columnwidth]{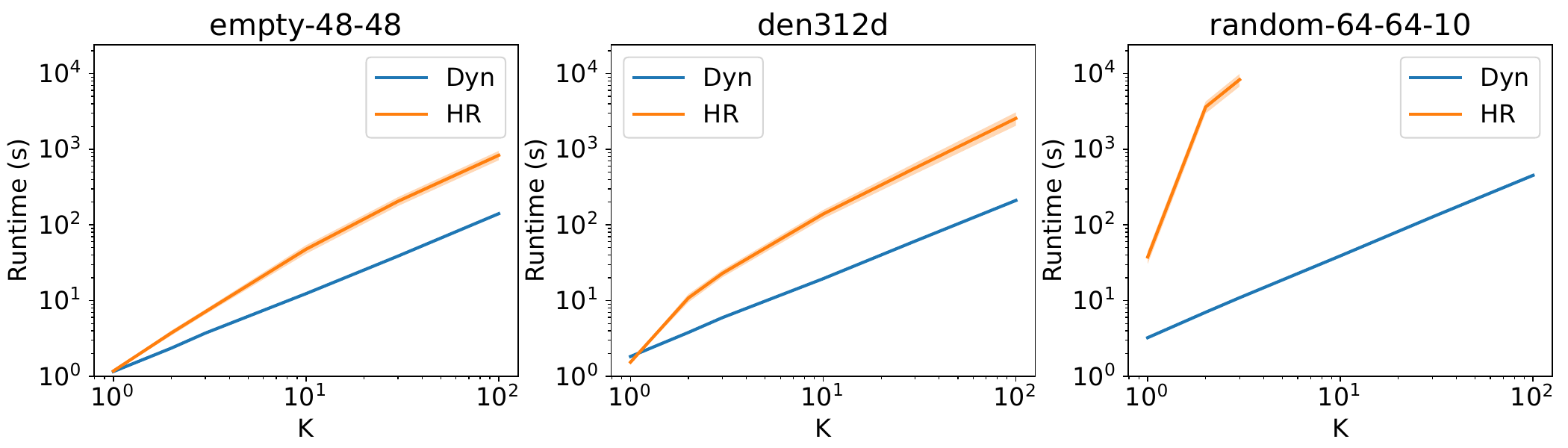}
    \caption{Log-log plots of runtime with respect to the number $K$ of solutions to find for our method and the baseline method using the Dehornoy order. Colored areas represent standard errors.}
    \label{fig:log-log plots with K}
    \Description{In Dyn, runtime increased linearly. In HR, runtime increased more rapidly.}
\end{figure*}
\begin{figure*}[t]
    \includegraphics[width=\columnwidth]{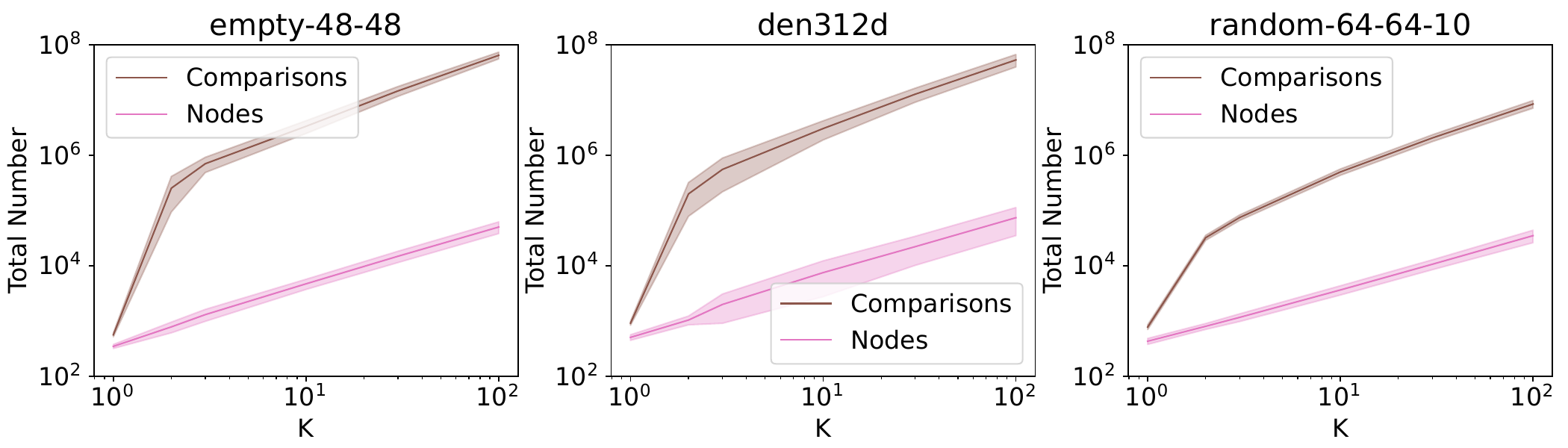}
    \caption{Log-log plots of the total numbers of braid comparisons and created nodes with respect to the number $K$ of solutions to find. Colored areas represent standard deviations.}
    \label{fig:counts with K}
    \Description{Number of nodes increased linearly. Number of comparisons increased more rapidly.}
\end{figure*}

Figure~\ref{fig:log-log plots with K} presents the runtime results from experiment (B). The runtime in \textbf{Dyn} increased approximately linearly with respect to $K$, whereas in \textbf{HR} the increase was slightly steeper than linear. 
The difference in runtime slope appears to stem from the fact that braid comparisons constituted the bottleneck in \textbf{HR}, but not in \textbf{Dyn}.
Figure~\ref{fig:counts with K} illustrates the number of braid comparisons performed and nodes created during algorithm execution\footnote{Strictly speaking, this is the result in \textbf{Dyn}, so the number of comparisons is not exactly the same in \textbf{HR}, but there is no essential difference.}.
Indeed, the number of nodes grew roughly linearly, while the number of comparisons exhibited a slope somewhat steeper than linear.
Since braid comparisons are performed only for nodes on the same timed graph $G_{\mathit{plan}}$, and the number of such timed graphs is $K$, one would expect the number of comparisons to grow linearly with $K$ if the searches were distributed evenly across the graphs. The deviation from linearity therefore suggests that the exploration was uneven.

The results of experiments (A) and (B) confirmed that our method \textbf{Dyn} is clearly superior to the baseline method \textbf{HR} in scalability with respect to the number of agents, and also outperforms it in scalability with respect to the number of solutions $K$ to be found.

\subsection{Optimization Experiment}
To confirm that generating multiple homotopically distinct solutions is useful for solving practical problems, we conducted an experiment in this subsection.

We considered the multi-agent trajectory optimization problem in a continuous planar domain and used the following strategy to solve it.
\begin{enumerate}
\item We generate multiple solutions on grids.
\item We continuously optimize the found solutions and select the best optimized trajectories.
\end{enumerate}
To generate initial solutions, we ran our algorithm to produce homotopically distinct solutions, and compared them with baseline methods.

\subsubsection{Problem Setting}
We consider multi-agent trajectory optimization in a continuous planar domain, where the objective function is based on acceleration~\citep{zucker2013chomp}.
This objective function can be justified from a practical point of view because the energy required for an agent to move and the force felt by something inside the agent depend on its acceleration.

Let $n$ be the number of agents.
Let $\mathcal{D}\subseteq \mathbb{R}^2$ be the domain in which the agents can move.
As in \S~\ref{subsec:problem setting},
we assume that $\mathcal{D}$ has the form $D_0\setminus(O_1\cup \cdots \cup O_r)$.
Let $r$ be the collision radius of the agents. Let $d_l$ be the distance function from $O_l$ for $l>0$ and that from the outside boundary for $l=0$.
The start positions $s_1,\ldots,s_n\in \mathcal{D}$ and goal positions $g_1,\ldots,g_n\in \mathcal{D}$ for all agents are given.

The goal is to find collision-free trajectories for all agents with the lowest sum of costs.
The cost is defined as the integral of the squared acceleration norm over time, which is regularized to the unit interval.
In summary, we want to find $\gamma_1, \dots, \gamma_n: [0,1] \rightarrow \mathcal{D}$ that minimize the following cost
\begin{equation}
    \frac{1}{2}\sum_{i=1}^n\int_{0}^1\left\|\frac{d^2\gamma_i}{dt^2}(t)\right\|^2dt
\end{equation}
under the following conditions:
\begin{equation}\label{eq:trajectory_constraint}
\gamma_i(0) = s_i, \, \gamma_i(1) = t_i, \, \frac{d\gamma_i}{dt}(0)=\frac{d\gamma_i}{dt}(1)=\mathbf{0},
\end{equation}
\begin{equation}\label{eq:agent_collision_condition}
\|\gamma_i(t)-\gamma_i(t)\| \geq 2r \text{ when } i \neq j,
\end{equation}
\begin{equation}\label{eq:obstacle_condition}
d_l(\gamma_i(t)) \geq r.
\end{equation}

\subsubsection{Problem Instances}
We used two small grid maps: an empty $14\times 14$ map and a handmade map of the same size with five obstacles as shown in Figure~\ref{fig:obstacle-14-14}.
We set $r=\sqrt{2}/4$, which is the maximum radius at which no collisions occur in grid map solutions.
For each map, we generated $100$ problem instances with start positions selected at random to be distinct and similarly selected goal positions.

\subsubsection{Compared Methods}
To solve the above problem instances, we first generated initial plans and then optimized them in the way described in the next subsubsection.
As a method to generate initial plans, we compared the following three approaches:
\begin{itemize}
\item \textbf{Ours}: For each instance, we generated $100$ homotopically distinct solutions with our method ($K=100$).
\item \textbf{Optimal one (OO)}: We generated one optimal solution on the grid for each instance.
\item \textbf{Revised prioritized planning with various priority orders (PPvP)}: We generated $100$ solutions for each instance by revised prioritized planning with randomly selected priority orders.
\end{itemize}

The intention of comparing our strategy to \textbf{OO} was to confirm that generating multiple initial plans is effective.
Also, we added \textbf{PPvP} to the comparison to answer the question of whether it is possible to generate diversified plans by simply generating solutions at random without homotopical consideration.

\subsubsection{Optimization Method}
After generating the initial solutions, we proceeded to optimize them.
To do this, we relax the collision-free conditions (\ref{eq:agent_collision_condition}) and (\ref{eq:obstacle_condition}) and incorporate them into the cost function~\citep{kasaura2023periodic}.
Let $\gamma_1, \ldots, \gamma_n$ be the trajectories of the agents. The function $C$ to be minimized is defined as
\begin{equation}
\begin{split}
C(\gamma_1,\ldots,\gamma_n):=&\frac{1}{2}\sum_{i=1}^n\int_{0}^1\left\|\frac{d^2\gamma_i}{dt^2}(t)\right\|^2dt 
    +c_{\mathrm{c}}\sum_{i\neq j}\int_0^1\max\left\{\frac{1}{\left\|\gamma_i(t)-\gamma_j(t)\right\|}-\frac{1}{2r},0\right\}^2dt\\
    &+c_{\mathrm{c}}\sum_{i=1}^n\sum_{l=0}^r\int_0^1\max\left\{\frac{1}{d_l(\gamma_i(t))}-\frac{1}{r},0\right\}^2dt,
\end{split}
\end{equation}
where $c_{\mathrm{c}}$ is a penalty coefficient.

More precisely, we approximate the trajectories of the $i$-th agent by $(L+1)$ timed waypoints $(p_{i,0},0), \allowbreak (p_{i,1},1/L), \allowbreak \ldots,(p_{i,L}, 1)$ with $p_{i,0}=s_i$ and $p_{i,L}=g_i$, where $L=100$. The cost function is then reformulated as
\begin{equation}
    \begin{split}
    C(p_{1,0},\ldots,p_{1,L},p_{2,0},\ldots,p_{n,L}):=&
    \frac{L^3}{2}\sum_i
    \sum_{k=0}^{L}\left\|p_{i,k+1}+p_{i,k-1}-2p_{i,k}\right\|^2\\
    &+\frac{c_{\mathrm{c}}}{L}\sum_{i\neq j}\sum_{k=0}^L\max\left\{\frac{1}{\left\|p_{i,k}-p_{j,k}\right\|}-\frac{1}{2r},0\right\}^2\\
    &+\frac{c_{\mathrm{c}}}{L}\sum_{i=1}^n\sum_{l=0}^r\sum_{k=0}^L\max\left\{\frac{1}{d_l(p_{i,k})}-\frac{1}{r},0\right\}^2,\\
    \end{split}
\end{equation}
where $p_{i,-1}:=p_{i,0}$ and $p_{i,L+1}:=p_{i,L}$.
We solved this continuous optimization problem with the Levenberg-Marquardt~\citep{levenberg1944method,marquardt1963algorithm} algorithm implemented in g$^2$o 1.0.0~\citep{kummerle2011g}.
The $p_{i,k}$ values were initialized by the initial plan on the grid. We optimized them with $10000$ steps of the Levenberg-Marquardt algorithm. The collision-penalty constant $c_c$ was initially set to $10^6$ and multiplied by $1.001$ after every optimization step.

\subsubsection{Results and Discussion}\label{subsubsec:optimization_results}
\begin{figure}[t]
    \centering
    \includegraphics[width=\columnwidth]{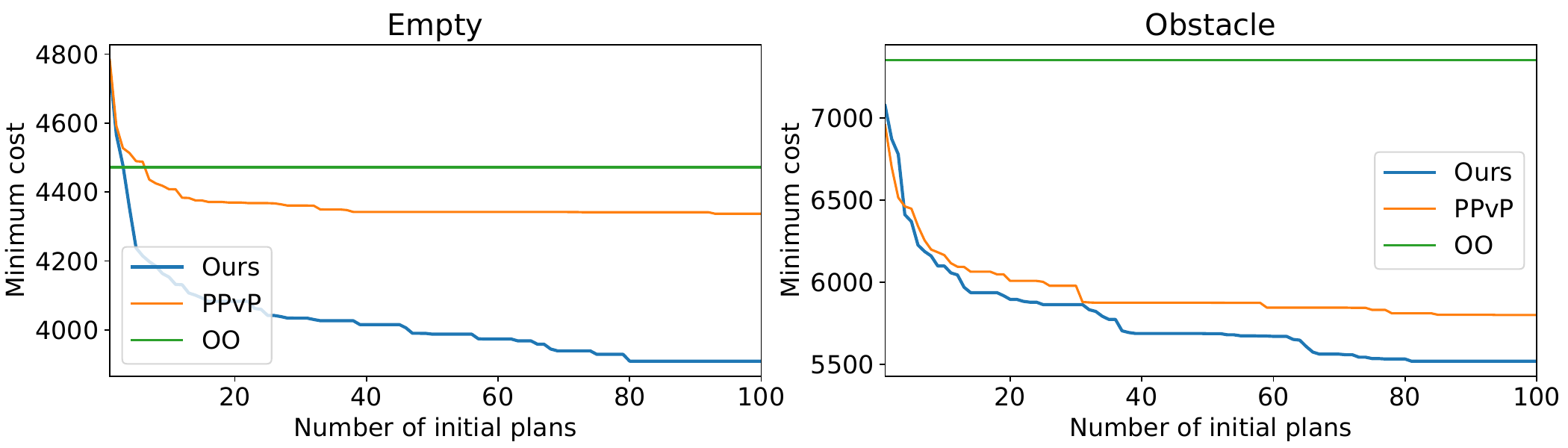}
    \caption{For \textbf{Ours} and \textbf{PPvP}, curves represent minimum cost of plans after optimization among first $N$ generated plans, averaged over 100 instances, with respect to $N$. Bar for \textbf{OO} represents the cost after optimization, averaged over 100 instances.}
    \label{fig:results}
    \Description{Our method has low costs, especially in the empty map}
\end{figure}
\begin{figure}[t]
    \centering
    \includegraphics[width=\columnwidth]{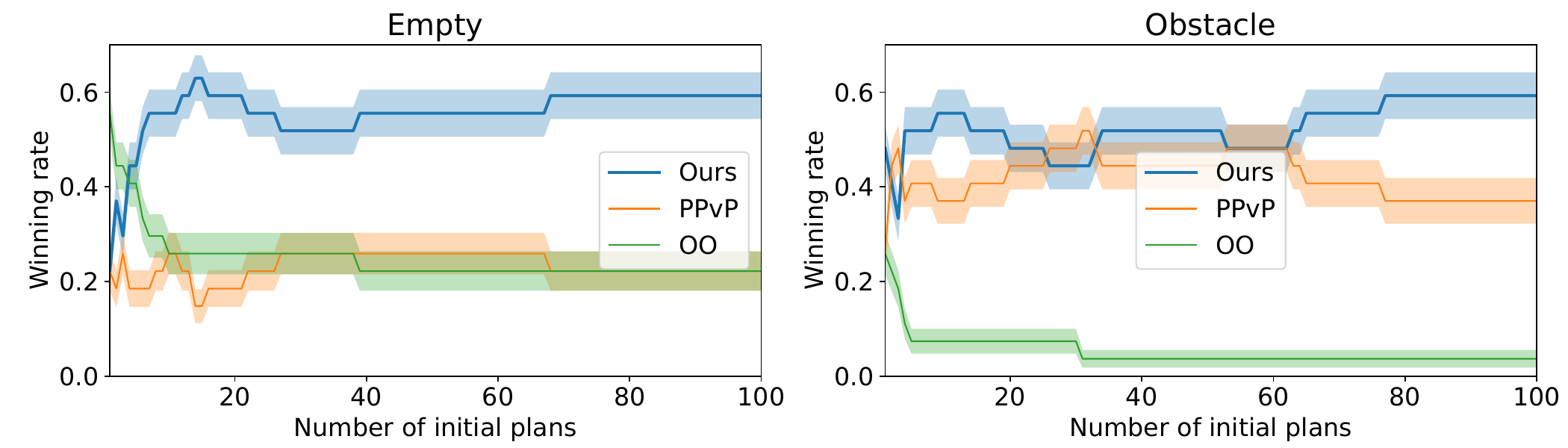}
    \caption{For each method, curves show rate of instances in which the best of the first $N$ plans generated by it wins over that of other methods, with respect to $N$. Colored areas represent standard errors.}
    \label{fig:wr_graph}
    \Description{Our method wins, especially in the empty map}
\end{figure}

In Figure~\ref{fig:results}, for each value of $N=1,\ldots,100$ on the $x$-axis, the curve labeled \textbf{Ours} and the curve labeled \textbf{PPvP} represent the cost of the best plan obtained from optimizing the first $N$ initial plans, averaged over $100$ instances. The line labeled \textbf{OO} represents the cost of optimization of the optimal plan on the grid, also averaged over $100$ instances.
Figure~\ref{fig:wr_graph} shows the winning rates. For each method, the curve represents the percentage of instances where the best plan among the first $N$ plans of all methods is generated by it, where $N=1,\ldots,100$.
If costs are the same, the instance is counted for both methods.

In the empty map, \textbf{OO} tended to produce the best trajectories when only a few initial solutions were considered.
Notably, \textbf{OO} was not the best even at $N=1$ in the map with obstacles. This may be because reducing costs by making clever moves on narrow parts of the grid would make the optimized trajectories worse.

\begin{figure}[t]
    \includegraphics[width=\columnwidth]{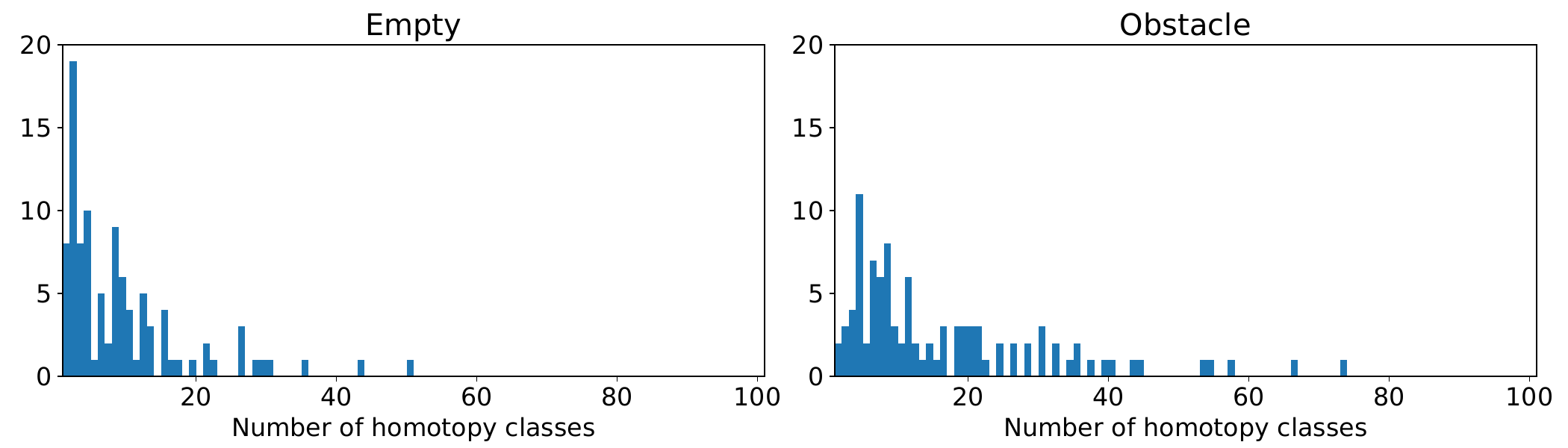}
    \caption{Histogram of the number of homotopy classes present in $100$ solutions generated with \textbf{PPvP}, for $100$ instances.}
    \label{fig:histogram}
    \Description{In most instances, numbers are less than 20, especially in empty map}
\end{figure}

For both maps, \textbf{Ours} tended to produce better solutions than \textbf{PPvP} in terms of both average cost and winning rate.
The difference was larger for the empty map than for the map with obstacles.

To confirm that homotopical diversity of initial plans contributed to the minimum cost, 
we counted the number of homotopically distinct solutions generated with \textbf{PPvP} for each instance. The distribution is shown in Figure~\ref{fig:histogram} as histograms.
The results indicate that \textbf{PPvP} generated few homotopies, particularly on the empty map. This is expected given the lack of homotopy divergence due to obstacles on the empty map.
The difference in the homotopical diversity of the plans generated by \textbf{PPvP} between maps is thought to be responsible for the difference in results after optimization.
Thus, homotopical diversity is actually important for avoiding local optima and obtaining global optima.
Our method is particularly effective on maps with few obstacles, where it is difficult to generate homotopically distinct plans naively.

\section{Conclusion and Future Work}\label{sec:conclusion}
\begin{figure}[t]
    \centering
    \includegraphics[width=0.20\columnwidth]{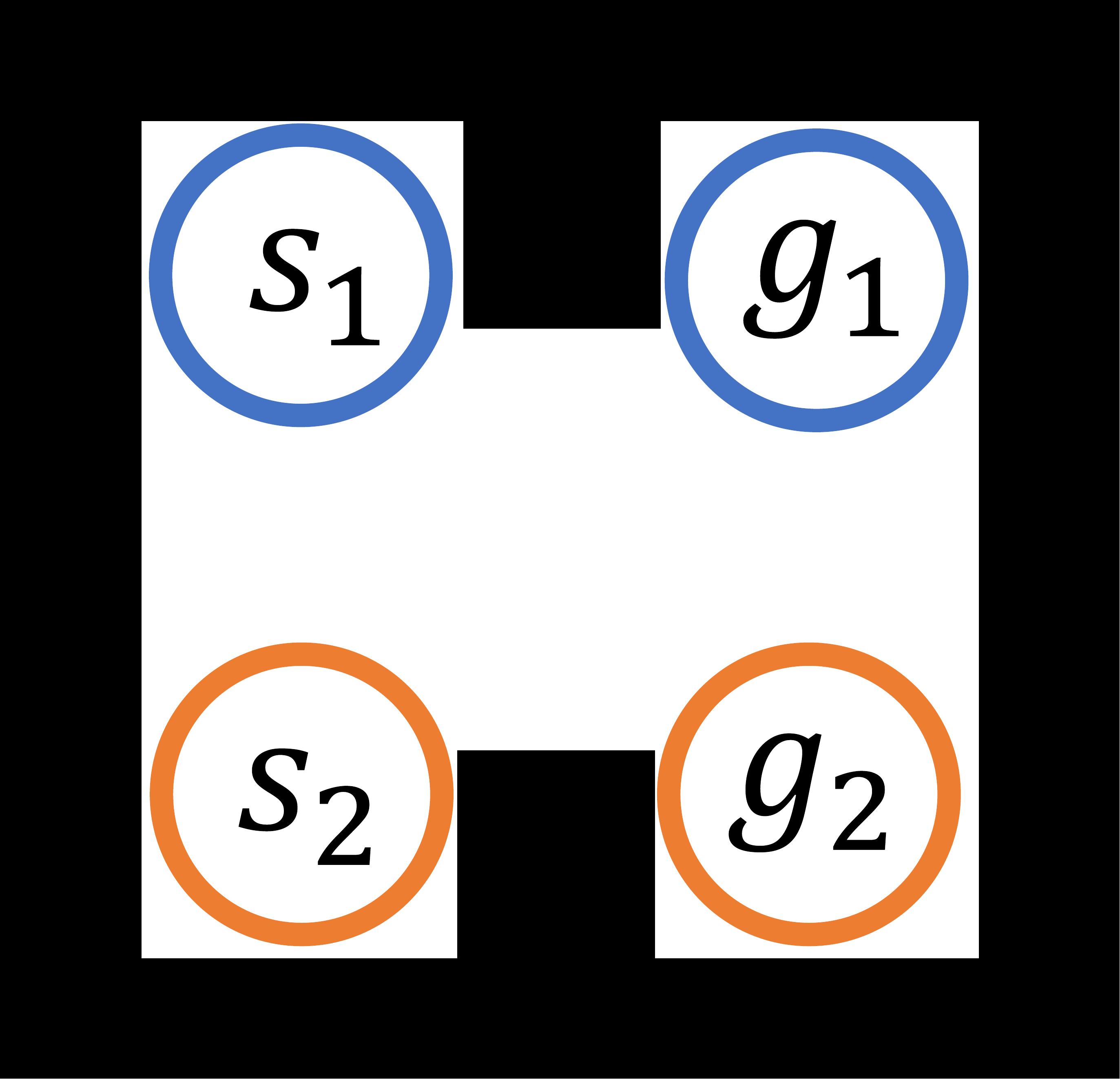}
    \caption{Example scenarios in which homotopy depends on order in which agents pass through the narrow part.}
    \label{fig:narrow}
    \Description{Field with a narrow part and two pairs of starts and goals. Starts are on the opposite side of goals across the narrow part.}
\end{figure}

We proposed a practical method for multi-agent path planning in planar domains with obstacles. We used Dynnikov coordinates and revised prioritized planning to efficiently generate homotopically distinct plans. Our experimental results indicate that our method is significantly faster than the method using the Dehornoy order.
We also conducted an experiment to optimize the generated plans and compared them with those generated with baseline methods. The results indicated that generating homotopically distinct plans using our method led to an improvement in optimized plans. This showcases the effectiveness of our approach in achieving lower-cost trajectories for multi-agent path planning.

One of the main limitations of our approach is that it cannot consider the homotopical differences that arise only when both sizes of agents and obstacles are taken into account. In scenarios where several agents need to pass through a narrow part where only one agent can pass at a time, homotopy classes of solutions can split, depending on the order in which the agents pass through the narrow section. Figure~\ref{fig:narrow} illustrates such a situation, where the homotopy classes of solutions depend on whether the first agent goes through the narrow section earlier or the second agent does.
This phenomenon does not occur when the size of agents can be ignored, as assumed with our approach. There is a computational-geometric difficulty when computing these differences. 

Another limitation is the abandonment of optimality. While simple A* search is workable, it becomes impractical for larger problem instances~\citep{stern2019multi2}. To improve efficiency without sacrificing optimality, it may be possible to combine our framework with efficient optimal approaches.
To calculate braids while searching, it is required to determine the moves of all agents.
This makes it challenging to use conflict-based search, where agent paths are planned independently in its low-level search.
On the other hand, increasing-cost tree search could be a more suitable candidate since it takes into account combinations of all agent paths in its low-level search.
Another non-trivial challenge is to make reduction-based methods homotopy-aware.

A potential extension of our research involves exploring decentralized control strategies.
Specifically, it may be possible to achieve effective coordination among agents with minimal communication by conveying only braids to other agents. 

\bibliographystyle{plainnat}
\bibliography{ref.bib}

\appendix

\section{On Presentations of Pure Braid Groups}\label{appendix:pn}
\citet{bhattacharya2018path} gave a presentation of a homotopy group for path planning with $n$ agents on a plane, which is a pure braid group $P_n$, with the following generators:
\begin{equation}
\left\{u_{i,j/\gamma_{i+1},\ldots,\gamma_{j-1}}\middle|1\leq i<j\leq n,\,\gamma_{i+1},\ldots,\gamma_{j-1}\in\{+,-\}\right\}.
\end{equation}

The relations are as follows.
\begin{itemize}
    \item For $i<j<k$, $\alpha_{i+1},\ldots,\alpha_{j-1}\in\{+,-\}$, and $\beta_{j+1},\ldots,\beta_{k-1}\in\{+,-\}$,
    \begin{equation}
    \begin{split}
    &u_{i,j/\alpha_{i+1},\ldots,\alpha_{j-1}}
    \cdot
        u_{i,k/\alpha_{i+1},\ldots,\alpha_{j-1},-,\beta_{j+1},\ldots,\beta_{k-1}}
        \cdot u_{j,k/\beta_{j+1},\ldots,\beta_{k-1}}\\
        &\cdot u^{-1}_{i,j/\alpha_{i+1},\ldots,\alpha_{j-1}}
    \cdot
        u^{-1}_{i,k/\alpha_{i+1},\ldots,\alpha_{j-1},+,\beta_{j+1},\ldots,\beta_{k-1}} 
        \cdot u^{-1}_{j,k/\beta_{j+1},\ldots,\beta_{k-1}},
    \end{split}
    \end{equation}
    \begin{equation}
    \begin{split}
    &u_{i,j/\alpha_{i+1},\ldots,\alpha_{j-1}}
    \cdot
        u_{i,k/\alpha_{i+1},\ldots,\alpha_{j-1},-,\beta_{j+1},\ldots,\beta_{k-1}}\\
        &\cdot u^{-1}_{i,j/\alpha_{i+1},\ldots,\alpha_{j-1}}\cdot
        u^{-1}_{i,k/\alpha_{i+1},\ldots,\alpha_{j-1},+,\beta_{j+1},\ldots,\beta_{k-1}},
    \end{split}
    \end{equation}
    and
    \begin{equation}\label{eq:p-rel}
    \begin{split}
    &u_{i,k/\alpha_{i+1},\ldots,\alpha_{j-1},-,\beta_{j+1},\ldots,\beta_{k-1}}
        \cdot u_{j,k/\beta_{j+1},\ldots,\beta_{k-1}}\\
    &\cdot u^{-1}_{i,k/\alpha_{i+1},\ldots,\alpha_{j-1},+,\beta_{j+1},\ldots,\beta_{k-1}} 
    \cdot u^{-1}_{j,k/\beta_{j+1},\ldots,\beta_{k-1}}.
    \end{split}
    \end{equation}
\item Let $i,j,i',j'$ be distinct indices with $i<j,\,i'<j',\,i<i'$. Let $\gamma_{i+1},\ldots,\gamma_{j-1}\in\{+,-\}$ and $\gamma'_{i'+1},\ldots,\gamma'_{j'-1}\in\{+,-\}$ be signs.
When $i<i'<j<j'$, we assume that $\gamma_{i'}\neq \gamma'_j$ and $(\gamma_k,\gamma'_k)\neq(-\gamma_{i'},\gamma_{i'})$ for all $i'<k<j$.
When $i<i'<j'<j$, we assume that $\gamma_{i'}= \gamma_{j'}$ and $(\gamma_k,\gamma'_k)\neq(-\gamma_{i'},\gamma_{i'})$ for all $i'<k<j'$. For such tuples,
    \begin{equation}\label{eq:commute}
    \begin{split}
    &u_{i,j/\gamma_{i+1},\ldots,\gamma_{j-1}}
    \cdot u_{i',j'/\gamma'_{i'+1},\ldots,\gamma'_{j'-1}}
    \cdot u^{-1}_{i,j/\gamma_{i+1},\ldots,\gamma_{j-1}}
    \cdot u^{-1}_{i',j'/\gamma'_{i'+1},\ldots,\gamma'_{j'-1}}.
    \end{split}
    \end{equation}
\end{itemize}
The description in the original paper is incomplete as it omits the condition for the relation (\ref{eq:commute}).

We consider the word
\begin{equation}
w=u_{1,4/--}u_{2,4/-}u_{3,4}u_{1,4/++}^{-1}u_{2,4/+}^{-1}u_{3,4}^{-1},
\end{equation}
which is irreducible when using Dehn's algorithm.
On the other hand,
\begin{equation}
\begin{split}
w &=u_{2,4/-}u_{1,4/+-}u_{3,4}u_{1,4/++}^{-1}u_{2,4/+}^{-1}u_{3,4}^{-1}\\
&=u_{2,4/-}u_{3,4}u_{2,4/+}^{-1}u_{3,4}^{-1}\\
&=u_{3,4}u_{3,4}^{-1}\\
&=e,
\end{split}
\end{equation}
where the first, second, and third equalities are deduced from (\ref{eq:p-rel}) with $(i,j,k)=(1,2,4)$, $(1,3,4)$, and $(2,3,4)$, respectively. Thus, Dehn's algorithm is incomplete for this presentation when $n\geq 4$.

The pure braid group $P_n$ has a standard presentation~\citep{rolfsen2010tutorial} using generators $\left\{a_{i,j}\middle|1\leq i<j\leq n\right\}$ with notation (\ref{eq:def_a})
and the following relations.
\begin{itemize}
\item For $1\leq i<j<k\leq n$,
\begin{equation}
    a_{i,j}a_{i,k}a_{j,k}a_{i,j}^{-1}a_{j,k}^{-1}a_{i,k}^{-1},
\end{equation}
and
\begin{equation}
    a_{i,k}a_{j,k}a_{i,j}a_{i,k}^{-1}a_{i,j}^{-1}a_{j,k}^{-1}.
\end{equation}
\item For $1\leq i<j<k<l\leq n$,
\begin{equation}
    a_{i,j}a_{k,l}a_{i,j}^{-1}a_{k,l}^{-1},
\end{equation}
\begin{equation}
    a_{i,l}a_{j,k}a_{i,l}^{-1}a_{j,k}^{-1},
\end{equation}
and
\begin{equation}
    a_{i,k}a_{j,k}a_{j,l}a_{j,k}^{-1}a_{i,k}^{-1}a_{j,k}a_{j,l}^{-1}a_{j,k}^{-1}.
\end{equation}
\end{itemize}
Let
\begin{equation}
u_{i,j/\sigma_{i+1},\ldots,\sigma_{j-1}}=a_{i,i+1}^{d_{i+1}}\cdots a_{i,j-1}^{d_{j-1}} a_{i,j}a_{i,j-1}^{-d_{j-1}}\cdots a_{i,i+1}^{-d_{i+1}},
\end{equation}
where $d_k=1$ if $\sigma_k=+$, and $d_k=0$ if $\sigma_k=-$.
Then, the relations for $\{u_{i,j/\sigma_{i+1},\ldots,\sigma_{j-1}}\}$ can be deduced from the relations for $\{a_{i,j}\}$ and vice versa. Therefore, we can translate words from Bhattacharya and Ghrist's presentation to words for the standard presentation. However, this translation increases the lengths of words by a factor of $O(n)$.
On the other hand, lengths of words constructed by Bhattacharya and Ghrist's method are equal to those of words constructed by the method in \S~\ref{subsec:word construction}.
To the best of our knowledge, there is no algorithm for the word problem of the pure braid group that is efficient enough to compensate for these drawbacks. This is why we use elements of the braid group to label homotopy classes, instead of those of the pure braid group.

\section{Proof of Proposition~\ref{prop:homotopy_inj}}\label{appendix:proof}
The map from $\mathcal{C}_{r+n}(\mathbb{R}^2)$ to $\mathcal{C}_r(\mathbb{R}^2)$, which sends $(p_1,\ldots,p_{r+n})$ to $(p_1,\ldots,p_r)$, is a fiber bundle~\citep{fadell1962configuration}, and the fiber at $(o_1,\ldots, o_r)$ is the image of the embedding (\ref{eq:embedding}). Thus, the following exact sequence of homotopy groups is induced:
\begin{equation}        
\pi_2\left(\mathcal{C}_r(\mathbb{R}^2)\right)\to
\pi_1\left(\mathcal{C}_n(\mathbb{R}^2\setminus\{o_1,\ldots,o_r\})\right)\to \pi_1\left(\mathcal{C}_{r+n}(\mathbb{R}^2)\right),
\end{equation}
where $\pi_2\left(\mathcal{C}_r(\mathbb{R}^2)\right)$ is the second homotopy group of $\mathcal{C}_r(\mathbb{R}^2)$~\citep{hatcher2002algebraic}.
Moreover, $\pi_2\left(\mathcal{C}_r(\mathbb{R}^2)\right)$ is trivial~\citep{knudsen2018configuration}.

\end{document}